\Crefname{section}{\S}{\S\S}
\Crefname{definition}{Def.}{Defs.}
\Crefname{example}{Ex.}{Exs.}
\Crefname{theorem}{Thm.}{Thms.}
\Crefname{proposition}{Prop.}{Props.}
\Crefname{figure}{Fig.}{Figs.}
\newif\iftr\trtrue    
\title{
  Design-by-Contract for \emph{Flexible} Multiparty Session Protocols
  \iftr --- Extended Version \fi
}
\titlerunning{
  Design-by-Contract for \emph{Flexible} Multiparty Session Protocols
}
\author{Lorenzo Gheri}{
  Imperial College London, UK \and
  \url{https://sites.google.com/view/lorgheri/home} }{l.gheri@imperial.ac.uk
}{https://orcid.org/0000-0002-3191-7722}{}
\author{Ivan Lanese}{
  Focus Team, University of Bologna/INRIA (Italy)
  \and \url{https://www.unibo.it/sitoweb/ivan.lanese/}
}{
  ivan.lanese@gmail.com
}{
  https://orcid.org/0000-0003-2527-9995
}{}
\author{Neil Sayers}{
  Imperial College London, UK \and Coveo Solutions Inc., Canada
}{
  sayers.neil@gmail.com
}{
  https://orcid.org/0000-0003-4718-7290
}{}
 \author{Emilio Tuosto}{
  Gran Sasso Science Institute, Italy \and
  \url{https://cs.gssi.it/emilio.tuosto}
}{
  emilio.tuosto@gssi.it
}{
  https://orcid.org/0000-0002-7032-3281
}{}
\author{Nobuko Yoshida}{
  Imperial College London, UK \and
  \url{https://www.imperial.ac.uk/people/n.yoshida}
}{
  n.yoshida@imperial.ac.uk
}{
   https://orcid.org/0000-0002-3925-8557
}{}
\authorrunning{L. Gheri, I. Lanese, N. Sayers, E. Tuosto, and N. Yoshida}
\keywords{Choreography automata, design by contract, deadlock freedom, Communicating Finite State Machines, TypeScript programming} 
\tikzset{
  cnode/.style={
    shape=circle,
    minimum size = 0mm,
    inner sep = 1pt,
    font=\tiny,
    draw
  },
  carrow/.style={
    ->,
    shorten >=1pt,
    >=stealth',
    auto,
    font=\scriptsize,
    draw,
    sloped
  }
}
   \newcommand{\withcolor}[2]{#2} 
   \newcommand{\withcolor}[2]{\colorlet{currbkp}{.}\color{#1}{#2}\color{currbkp}}
\newcommand{\code}[1]{\texttt{\upshape #1}} 
\newcommand{\bnfas}{\mathrel{::=}}
\newcommand{\bnfalt}{\mathrel{\mid}}
\newcommand{\colorse}{DarkGreen} 
\newcommand{\colorgp}{violet} 
\newcommand{\dgt}[1]{\withcolor{\colorgp}{#1}}
\newcommand{\gtvar}[1]{\dgt{\mathbf{#1}}}
\newcommand{\gtrecur}[2]{\dgt{\mu\mathbf{#1}.#2}}
\newcommand{\gtend}{\dgt{\code{end}}}
\newcommand{\gtsigma}[1]{\dgt{\mathbf{\Sigma}}_{#1}}
\newcommandx{\gint}[3][1=p, 2=q, 3=m, usedefault=@]{{\ptp[#1] \to \ptp[#2]: \amsg[{#3}]}}
\newcommandx{\knw}[3][1 = {\chora}, 2={p}, 3=t, usedefault=@]{\mathsf{knw}_{#1}({\ptp[#2],{#3}})}
\newcommand{\arro}[2][{}]{\xrightarrow[{#1}]{\ {#2}\ }}
\newcommand{\cfsmtrans}[3]{#1 \arro{#2} #3}
\newcommand{\aint}[1][a]{\lcgreek{#1}}
\newcommand{\chora}[1][A]{\mathsf{C{#1}}}
\newcommand{\reccall}[2]{\gtvar{#1} \cdot {#2}}
\newcommand{\lint}{\lset_{\text{int}}}
\newcommand{\lact}{\lset_{\text{act}}}
\newcommand{\ifempty}[3]{%
  \ifthenelse{\isempty{#1}}{#2}{#3}%
}
\def\colorFun{\color{black}}
\newcommand{\mkfun}[4][\colorFun]{
  \newcommandx{#2}[2][1=#4,2={}]{
    {\textsf{#1#3}}_{##2}
    \ifempty{##1}{}{
      ({##1})}
  }
}
\mkfun{\ptpof}{ptp}{{\sigma}}
\mkfun{\lang}{L}{{\sigma}}
\mkfun{\pref}{pref}{{\sigma}}
\mkfun{\ca}{ca}{{\dgt{G},q}}
\mkfun{\cat}{ca}{{\dgt{G}}}
\mkfun{\catpass}{caPass}{{\dgt{G}}}
\mkfun{\catr}{catr}{{\dgt{G}}}
\mkfun{\catrpass}{catrPass}{{\dgt{G}}}
\mkfun{\traceof}{trace}{\arun}
\mkfun{\openof}{$\mathbb{O}$}{\chora}
\mkfun{\cycleof}{$\mathbb{C}$}{\chora}
\mkfun{\partof}{$\mathbb{P}$}{\arun}
\mkfun{\fvar}{var}{}
\mkfun{\bvar}{bvar}{}
\def\colorMsg{\color{BrickRed}}
\newcommand{\amsg}[1][m]{\mathsf{\colorMsg{#1}}}
\newcommand{\proj}[2]{#1\!\downarrow_{{\ptp[#2]}}}
\newcommand{\set}[1]{\{#1\}}
\newcommand{\autom}[4]{(#1,#2,#3,#4)}
\newcommand{\trans}[3]{(#1,#2,#3)}
\newcommand{\transset}[2]{\textit{out}(#1,#2)}
\newcommand{\lset}{\mathcal{L}}
\newcommand{\tset}{\mathcal{T}}
\newcommand{\msgset}{\mathcal{M}}
\newcommand{\assertionset}{\mathcal{A}}
\newcommand{\sst}{\;\big|\;}
\newcommand{\qst}{\;\colon\;} 
\newcommand{\aCM}{M}
\newcommand{\aM}{\aCM}
\newcommand{\aCS}[1][S]{\mathsf{#1}}
\def\colorPtp{\color{blue}}
\newcommand{\ptpset}{\mathcal{\colorPtp{P}}}
\newcommand{\ptpuniv}{\mathfrak{P}}
\newcommandx{\aout}[3][1={\p},2={\q},3=m,usedefault=@]{\ptp[#1] \ptp[#2] ! \amsg[{#3}]}
\newcommandx{\ain}[3][1={\p},2={\q},3=m,usedefault=@]{\ptp[#1] \ptp[#2] ? \amsg[{#3}]}
\newcommand{\ssem}[1]{{\llbracket #1 \rrbracket}}
\newcommand{\aConf}{s}
\newcommand{\pstate}[2]{\aConf_{#1}(#2)}
\newcommand{\quo}[1]{\lq\lq {#1}\rq\rq}
\newcommand{\qqand}[1][and]{\qquad\text{#1}\qquad}
\newcommand{\qand}[1][and]{\quad\text{#1}\quad}
\newcommand{\apred}[1][A]{\mathsf{\textcolor{orange}{#1}}}
\newcommand{\predB}{\apred[B]}
\newcommand{\ENTAILS}{\supset}
\newcommand{\EXISTS}[2]{\exists {#1}  \qst {#2}}
\newcommand{\asort}[1][s]{\textcolor{IndianRed}{\textsf{#1}}}
\newcommand{\varid}[1][v]{\textcolor{MediumBlue}{\textsf{#1}}}
\newcommand{\truek}{\top}
\newcommand{\falsek}{\bot}
\newcommand{\arun}{\pi}
\newcommand{\conf}[1]{\ensuremath{\left\langle {#1} \right\rangle}}
\newcommand{\confel}{\aConf}
\newcommand{\sset}{\mathbb{S}}
\newcommand{\Set}[1]{\left\{\,#1\,\right\}}
\newcommand{\rcal}{\mathcal{R}}
\newcommand{\aword}{w}
\NewDocumentCommand{\ucgreek}{m}
 {
  \str_case:nn { #1 }
   {
    {A}{\mathrm{A}}
    {B}{\mathrm{B}}
    {C}{\Sigma}
    {D}{\Delta}
    {E}{\mathrm{E}}
    {F}{\Phi}
    {G}{\Gamma}
    {H}{\mathrm{H}}
    {I}{\mathrm{I}}
    {J}{\Theta}
    {K}{\mathrm{K}}
    {L}{\Lambda}
    {M}{\mathrm{M}}
    {N}{\mathrm{N}}
    {O}{\mathrm{O}}
    {P}{\Pi}
    {Q}{\mathrm{X}}
    {R}{\mathrm{P}}
    {S}{\Sigma}
    {T}{\mathrm{T}}
    {U}{\Upsilon}
    {W}{\Omega}
    {X}{\Xi}
    {Y}{\Psi}
    {Z}{\mathrm{Z}}
   }
 }
\NewDocumentCommand{\lcgreek}{m}
 {
  \str_case:nn { #1 }
   {
    {a}{\alpha}
    {b}{\beta}
    {c}{\varsigma}
    {d}{\delta}
    {e}{\varepsilon}
    {f}{\varphi}
    {g}{\gamma}
    {h}{\eta}
    {i}{\iota}
    {j}{\vartheta}
    {k}{\kappa}
    {l}{\lambda}
    {m}{\mu}
    {n}{\nu}
    {o}{o}
    {p}{\pi}
    {q}{\chi}
    {r}{\rho}
    {s}{\sigma}
    {t}{\tau}
    {u}{\upsilon}
    {w}{\omega}
    {x}{\xi}
    {y}{\psi}
    {z}{\zeta}
   }
 }
\newcommand{\tnxbehapi}[1][partly]{
  Research {#1} supported by the EU H2020 RISE programme under the
  Marie Sk{\l}odowska-Curie grant agreement No 778233.
}
\newcommand{\tnxitmatters}[1][Work partially funded]{
   #1 by MIUR project PRIN 2017FTXR7S \emph{IT MATTERS}
  (Methods and Tools for Trustworthy Smart Systems).
}
\newcommand{\eqdef}{\triangleq}
\newcommand{\mmdef}{\eqdef}
\newcommand{\emptyword}{\varepsilon}
\newcommand{\ptp}[1][p]{\withcolor{\colorse}{\ensuremath{{\mathsf{\lowercase{#1}}}}}} 
\newcommand{\p}{{\ptp}\xspace}
\newcommand{\q}{{\ptp[q]}\xspace}
\newcommand{\dolist}[2]{%
  \def\nextitem{\def\nextitem{#1}}%
  \@for \el:=#2\do{\nextitem\el}%
}
\newcommand{\tuplevar}[1][V]{\mathtt{#1}}
\newcommand{\atag}[1][t]{\mathtt{\colorMsg \lcgreek{#1}}}
\newcommand{\atuple}[2][t]{
  \atag[#1]\conf{
	 \dolist{#2}{,}
  }
}
\newcommandx{\fexists}[2][
  1=A, 2=X, usedefault=@
]{\exists_{\setminus\varid[{#2}]}{\apred[{#1}]}}
\newcommand{\recctx}{\rho}
\newcommand{\pcirc}{\circ}
\newcommandx{\predcompose}[2][
  1=A, 2=B, usedefault=@
]{{\apred[{#1}]} \pcirc {\apred[{#2}]}}
\mkfun{\assertionof}{\mbox{$\mathbb{A}$}}{{\arun}}
\mkfun{\assertionnd}{\mbox{$\Delta$}}{{X,Y}}
\mkfun{\assertionofcm}{\mbox{$\mathbb{A}$}}{{\arun}}
\mkfun{\preof}{\mbox{$\mathbb{P}$}}{q}
\mkfun{\enablingof}{\mbox{$\mathbb{E}$}}{q}
\mkfun{\eclos}{\mbox{$\emptyword$-clos}}{q}
\mkfun{\cmderiv}{\mbox{$\partial$}}{q}
\mkfun{\pnf}{\text{pnf}}{\apred}
\mkfun{\simpleruns}{\text{SPath}}{}
\def\finex{{\unskip\nobreak\hfil
\penalty50\hskip1em\null\nobreak\hfil$\diamond$
\parfillskip=0pt\finalhyphendemerits=0\endgraf}}
\newcommand{\gsubs}[2]{^{#1} / _{#2}}
\newcommandx{\gsubst}[3][1=\aM,2=q,3=q',usedefault=@]{
  \left \{\gsubs{#3}{#2} \right \}#1
}
\newcommandx{\gsubsts}[5][1=\aM,2=q,3=q',4=q,5=q',usedefault=@]{
  \left \{\gsubs{#3}{#2}, \gsubs{#5}{#4} \right \}#1
}
\newcommandx{\wellpar}[2][1={\aG},2={\aG'},usedefault=@]{\mathit{wf}({#1}, {#2})}
\newif\ifcp
\newcommand{\gname}[1][i]{\ifcp{\colorNode{\scriptstyle\textsf{#1}}}\else{}\fi}
\newif\ifguard
\newcommand{\aguard}{\ifguard{\colorGuard \phi}\else{}\fi}
\def\colorGuard{\color{cyan}}
\def\colorPtp{\color{blue}}
\def\colorFun{\color{Navy}}
\def\colorOp{\color{OliveGreen}}
\def\colorNode{\color{LightCoral}}
\def\colorR{\color{OliveGreen}}
\def\colorE{\color{orange}}
\def\colorMsg{\color{BrickRed}}
\newcommand{\fillcolor}{orange!5}
\newcommand{\msg}[1][m]{\mathsf{\colorMsg{#1}}}
\newcommand{\sndint}[1][{\gint[]}]{\mathrm{snd}(#1)}
\newcommand{\rcvint}[1][{\gint[]}]{\mathrm{rcv}(#1)}
\newcommandx{\ggcommon}[3][1=\ptp,2={\aH},3={\aH'},usedefault=@]{f_{#1}}
\newcommandx{\opair}[2][1={\ae},2={\ae'},usedefault=@]{\conf{{#1},{#2}}}
\newcommandx{\hopair}[2][1={\aE},2={\aE'},usedefault=@]{\llparenthesis\, {#1},{#2}\, \rrparenthesis}
\newcommandx{\wf}[2][1={\aG},2={\aG'},usedefault=@]{wf({#1}, {#2})}
\newcommandx{\wb}[2][1={\aG},2={\aG'},usedefault=@]{wb({#1}, {#2})}
\newcommandx{\ws}[2][1={\aG},2={\aG'},usedefault=@]{ws({#1}, {#2})}
\tikzset{
  component/.style={
    draw,
    fill = white,
    minimum width = 1.5cm,
    minimum height = .5cm,
    drop shadow
  }
}
\tikzset{
  file/.style={
    thin,
    fill = blue!5,
    font = \tt\scriptsize,
    text width = .8cm,
    minimum width = 1.0cm,
    minimum height = .5cm,
    drop shadow
  }
}
\tikzset{
  dataflow/.style={
    thick,
    draw, ->, >=latex,
    dashed,
    OliveGreen
  }
}
\tikzset{
  pipeline/.style={
    thick,
    draw, ->, >=latex,
    double,
    red
  }
}
\tikzset{
  pomsetcloud/.style={
    cloud,
	 cloud puffs=20,
	 cloud ignores aspect,
	 minimum height=.1cm,
	 minimum width=2cm,
	 fill=blue!10,
	 opacity=.5,
	 draw
  }
}
\newcommand{\apom}{r}
\newcommand{\aR}[1][R]{{\colorR{#1}}}
\newcommandx{\detM}[1][1=\aCM,usedefault=@]{\Delta({#1})}
\tikzset{
  cnode/.style={
    shape=circle,
    minimum size = 0mm,
    inner sep = 1pt,
    font=\tiny,
    draw
  },
  carrow/.style={
    ->,
    shorten >=1pt,
    >=stealth',
    auto,
    font=\scriptsize,
    draw,
    sloped
  }
}
\newcommandx{\choranimation}[4][1=.8,2=.4,3=notempty,4=.35,usedefault=@]{
  \begin{overlayarea}{#1\linewidth}{#2\textheight}
    \begin{tikzpicture}[
      node distance=1cm and 2cm,
      scale=#4,
      every node/.style={transform shape},
      font=\large
      ]
      \node [choreo, align=center] (global){Choreography $\aG$ \\ global viewpoint};
      \node [below= of global] (fake) {};
      \node [choreo, align=center, local, below =of fake] (typei) {$\aCM_1$ \\ Local viewpoint$_1$};
      \node [choreo, align=center, local, left =of typei] (type1) {$\aCM_i$ \\ Local viewpoint$_1$};
      \node [choreo, align=center, local, right=of typei] (typen) {$\aCM_n$ \\ Local viewpoint$_n$};
      \node<+-> (synctxt) at (9,0)  {\textcolor{Navy}{\bf Synchrony}};
      \node<.-> (asynctxt) at (9,-4) {\textcolor{Navy}{\bf Asynchrony}};
      \node<.-> [below=of type1] (fake1) {};
      \node<.-> [below=of typei] (fakei) {};
      \node<.-> [below=of typen] (faken) {};
      \path<+-> [bigar] (global) edge[sloped,above] node {\color{OliveGreen}Project} (type1);
      \path<.-> [bigar] (global) edge[sloped,above] node {\color{OliveGreen}Project} (typei);
      \path<.-> [bigar] (global) edge[sloped,above] node {\color{OliveGreen}Project} (typen);
      \path<.-> [elli] (type1) -- (typei);
      \path<.-> [elli] (typei) -- (typen);
      \node<+-> [process, below=of fake1] (proc1) {Component$_1$};
      \node<.-> [process, below=of fakei] (proci) {Component$_i$};
      \node<.-> [process, below=of faken] (procn) {Component$_n$};
      \path<.-> [bigar,->,dashed,gray] (type1) edge[sloped,above] node {Validate} (proc1);
      \path<.-> [bigar,->,dashed,gray] (typei) edge[sloped,above] node {Validate} (proci);
      \path<.-> [bigar,->,dashed,gray] (typen) edge[sloped,above] node {Validate} (procn);
      \path<.-> [elli] (proc1) -- (proci);
      \path<.-> [elli] (proci) -- (procn);
		\ifempty{#3}{}{
      \node<+-> [process, right=of procn,xshift=4cm] (evolve1) {Component'$_1$};
      \node<.-> [process, right=of evolve1] (evolvei) {Component'$_i$};
      \node<.-> [process, right=of evolvei] (evolven) {Component'$_n$};
      \path<.-> [bigar,blue,dotted] (procn) edge node [above] {evolve/replace/compose} (evolve1);      
      \path<.-> [elli] (evolve1) -- (evolvei);
      \path<.-> [elli] (evolvei) -- (evolven);
      \node<+-> [choreo, align=center, local, above=of evolve1, yshift=1.5cm] (t11) {New $\aCM'_1$ \\ Local viewpoint$_1$};
      \node<.-> [choreo, align=center, local, above=of evolvei, yshift=1.5cm] (t1i) {New $\aCM'_i$ \\ Local viewpoint$_i$};
      \node<.-> [choreo, align=center, local, above=of evolven, yshift=1.5cm] (t1n) {New $\aCM'_n$ \\ Local viewpoint$_n$};
      \path<.-> [elli] (t11) -- (t1i);
      \path<.-> [elli] (t1i) -- (t1n);
      \path<.-> [bigar,->,dashed,gray] (evolve1) edge[sloped,above] node {Extract} (t11);
      \path<.-> [bigar,->,dashed,gray] (evolvei) edge[sloped,above] node {Extract} (t1i);
      \path<.-> [bigar,->,dashed,gray] (evolven) edge[sloped,above] node {Extract} (t1n);
      \node<+> [above=of t1i, yshift=1.5cm] (qm) {\Huge \textcolor{red}{ ??? }};
      \node<.-> [above=of t1i] (dummy) {};
      \node<+-> [choreo,align=center,above=of dummy,scale=.85] (global') {New choreography $\aG'$ \\ global viewpoint};
      \path<.-> [bigar,-] (t11) -- (dummy);
      \path<.-> [bigar,->] (t1i) edge node[right,xshift=1em] {\color{OliveGreen}Synthesise} (global');
      \path<.-> [bigar,-] (t1n) -- (dummy);
		}
    \end{tikzpicture}
  \end{overlayarea}
}
\newcommandx{\cm}[2][1=\ptp, 2=\aM]{{#2}_{#1}}
\newcommandx{\achan}[2][1=A,2=B,usedefault=@]{{\ptp[#1]}{\,}{\ptp[#2]}}
\newcommand{\oact}{\outop[]}
\newcommand{\iact}{\inop[]}
\newcommandx{\acfsmout}[3][1=A,2=B,3=m,usedefault=@]{\achan[{#1}][{#2}] \oact {\msg[{#3}]}}
\newcommandx{\acfsmin}[3][1=A,2=B,3=m,usedefault=@]{\achan[{#1}][{#2}] \iact {\msg[{#3}]}}
\newcommandx{\fsaout}[2][1={\p},2={},usedefault=@]{
  \ptp[#1] \ \outop[]\ \msg[{#2}]
}
\newcommandx{\fsain}[2][1={\p},2={},usedefault=@]{
  \ptp[#1] \ \inop[]\ \msg[{#2}]
}
\newcommand{\linenumfontsize}{\@setfontsize{\linenumfontsize}{3pt}{3pt}}
\lstdefinelanguage{sys}{
	commentstyle=\color{Gray},
	morecomment=[s]{[}{]},
	keywords=[0]{system,of,do,end},	keywordstyle=\color{orange}\bfseries,
}
\lstdefinelanguage{sgg}{
  commentstyle=\color{Gray},
  morecomment=[l]{..},
  morecomment=[s]{[}{]},
  keywords=[0]{repeat,branch,sel},
  keywordstyle=\color{orange}\bfseries,
  morekeywords=[1]{*,\+,|,->},
  literate={->}{$\colorOp \xrightarrow$}1 {|}{$\gparop$}1 {;}{$\gseqop$}1 {+}{$\gchoop$}1 {\{}{{\textcolor{Navy}{\{}}}1 {\}}{{\textcolor{Navy}{\}}}}1
}
\newcommand{\aG}{\mathsf{G}}
\newcommand{\gseqop}{{\colorOp ;}\,}
\newcommand{\gparop}{{\colorOp \ |\ }}
\newcommand{\gchoop}{{\colorOp \ +\ }}
\newcommand{\grecop}{{\colorOp *}}
\newcommand{\grecopp}{{\colorOp{@}}}
\newcommandx{\nmerge}[2][1={i},2={},usedefault=@]{
  \ifempty{#2}{
    \ifempty{#1}{\mu}{\gname[-{#1}]}
  }{-{#2}}
}
\mkfun{\esbj}{sbj}{\ae}
\newcommandx{\gnode}[2][1=i,2=\gint,usedefault=@]{
  \ifcp{
    \ifempty{#1}{#2}{\gname[#1].\big({#2}\big)}
  }
  \else
  {#2}
  \fi
}
\newcommandx{\refgint}[3][1=A,2=\msg,3=B,usedefault=@]{
  \ptp[#1] {\colorOp \xdashrightarrow[{}]{\msg[{#2}]}}{
    \renewcommand{\do}[1]{\ptp[##1]}
	 \docsvlist{#3}
  }
}
\newcommandx{\gout}[4][1=\gname,2=\ptp,3=m,4={\ptp[C]},usedefault=@]{
  \achan[{#2}][{#4}] {\colorOp {\colorOp{!}}} {\msg[{#3}]}
}
\newcommandx{\gin}[4][1=\gname,2=\ptp,3=m,4={\ptp[C]},usedefault=@]{
  \achan[{#2}][{#4}] {\colorOp {\colorOp{?}}} {\msg[{#3}]}
}
\newcommandx{\gseq}[3][1=\gname,2={\aG},3={\aG'},usedefault=@]{
  \def\ggraph{{#2} \gseqop {#3}}
  \ggraph
}
\newcommand{\ginfix}[4]{
  \def\ggraph{{#2} {#4} {#3}}
  \gnode[#1][\ggraph]
}
\newcommandx{\gpar}[3][1=i,2={\aG},3={\aG'},usedefault=@]{
  \ginfix{#1}{#2}{#3}{\gparop}
}
\newcommandx{\gcho}[3][1=i,2={\aG},3={\aG'},usedefault=@]{
  \ginfix{#1}{#2}{#3}{\gchoop}
}
\newcommandx{\gchov}[3][1=\gname,2={\aG},3={\aG'},usedefault=@]{
  \def\ggraph{\left(
  \begin{array}l
    \ifempty{#1}{{#2} \\ \gchoop \\ {#3}}{\!\!{#2} \\ \gchoop \\ {#3}}
  \end{array}\right)}
  \ifcp\gnode[{$#1$}][\ggraph] \else \ggraph \fi
}
\newcommandx{\grec}[3][1=i,2={\aG},3={\p},usedefault=@]{
  \def\ggraph{\grecop {#2} \ifempty{#3}{}{\grecopp {#3}}}
  \ifempty{#1}{\ggraph}{\gname[{$#1$}][\ggraph]}
}	
\newcommand{\getcentroid}[2]{
    \coordinate (tmpgatecoord) at (0,0);
    \foreach \n [count=\i] in {#1}{
      \path (\n);
      \coordinate (tmpgatecoord) at ($(tmpgatecoord) + (\n)$);
      \coordinate (#2) at ($1/\i*(tmpgatecoord)$);
    }
}
\tikzset{
  hgsem/.style={
    draw,
    node distance=2cm and 1cm,
    transform shape,
    smooth,
    every node/.style = {font=\sffamily\bfseries}
  }
}
\tikzset{
  hgstyle/.style={
    src color={#1},
    tgt color={#1},
    centroid color={#1},
    centroid label={#1},
    centroid name={#1},
    centroid radius={#1},
    centroid ratio={#1},
    xoffset={#1},
    yoffset={#1},
    xsrcoffset={#1},
    ysrcoffset={#1},
    xtgtoffset={#1},
    ytgtoffset={#1},
    font={#1},
    centroid angle={#1},
    centroid tolerance={#1}
  },
  src color/.store in = \hgsrccol,
  tgt color/.store in = \hgtgtcol,
  centroid color/.store in =\hgfillcolor,
  centroid label/.store in =\hglabel,
  centroid name/.store in =\hgname,
  centroid radius/.store in = \hgradius,
  centroid ratio/.store in = \hgratio,
  xoffset/.store in =\hgxoffset,
  yoffset/.store in =\hgyoffset,
  xsrcoffset/.store in =\hgxsrcoffset,
  ysrcoffset/.store in =\hgysrcoffset,
  xtgtoffset/.store in =\hgxtgtoffset,
  ytgtoffset/.store in =\hgytgtoffset,
  centroid angle/.store in =\hgangle,
  centroid tolerance/.store in =\hgtolerance,
  src color = black,
  tgt color = black,
  centroid color = orange!40,
  centroid label={},
  centroid name={dummycentroid},
  centroid radius = .7pt,
  centroid ratio = .35,
  xoffset = 0,
  yoffset = 0,
  xsrcoffset = 0,
  ysrcoffset = 0,
  xtgtoffset = 0,
  ytgtoffset = 0,
  font=\sffamily\scriptsize,
  centroid angle=0,
  centroid tolerance=10pt
}
\newcommandx{\mkhg}[5][1={},4={},5={},usedefault=@]{
  \begingroup
  \tikzset{#1}
  \StrCount{#2,}{,}[\l] 
  \StrCount{#3,}{,}[\m] 
  \ifthenelse{\l = 1 \AND \m = 1}{
    \ifempty{#4}{
      \ifempty{#5}{
        \path[hgsem, ->, >=stealth', shorten >=1pt] (#2) -- (#3);
      }{
        \path[hgsem, ->, >=stealth', shorten >=1pt] (#2) #5 (#3);
      }
    }{
      \ifempty{#5}{
        \path[hgsem, ->, >=stealth', shorten >=1pt, #4] (#2) -- (#3);
      }{
        \path[hgsem, ->, >=stealth', shorten >=1pt, #4] (#2) #5 (#3);
      }
    }
  }{
    \coordinate (srcoffset) at (\hgxsrcoffset,\hgysrcoffset);
    \coordinate (tgtoffset) at (\hgxtgtoffset,\hgytgtoffset);
    \getcentroid{#2}{srccentroid};
    \getcentroid{#3}{tgtcentroid};
    \node[label={left:\hglabel}] (\hgname) at ($(srccentroid)!{1-\hgratio}!\hgangle:(tgtcentroid) + (\hgxoffset,\hgyoffset)$) {};
    \pgfgetlastxy \xc \yc;
    \pgfmathtruncatemacro{\xcontrol}{\xc};
    \pgfmathtruncatemacro{\ycontrol}{\yc};
    \foreach \n in {#2}{
      \path (\n);
      \pgfgetlastxy \xntmp \yntmp;
      \pgfmathtruncatemacro{\xn}{\xntmp};
      \pgfmathtruncatemacro{\yn}{\yntmp};
      \pgfmathsetmacro\xtmpdiff{abs(\xn - \xcontrol + \hgxsrcoffset)};
      \pgfmathsetmacro\ytmpdiff{abs(\yn - \ycontrol + \hgytgtoffset)};
      \ifdim \xtmpdiff pt > \hgtolerance
      \ifempty{#4}{
        \path[hgsem, \hgsrccol] (\n) .. controls ($(srccentroid.center) + (srcoffset)$) .. (\hgname.center);
      }{
        \path[hgsem, \hgsrccol] (\n) .. controls ($(srccentroid.center) + (srcoffset)$) .. (\hgname.center);
      }
      \else
      \ifempty{#4}{
        \path[hgsem, \hgsrccol] (\n) -- (\hgname.center);
      }{
        \path[hgsem, \hgsrccol, #4] (\n) -- (\hgname.center);
      }
      \fi
    }
    \foreach \n in {#3}{
      \path (\n);
      \pgfgetlastxy \xntmp \yntmp;
      \pgfmathtruncatemacro{\xn}{\xntmp};
      \pgfmathtruncatemacro{\yn}{\yntmp};
      \pgfmathsetmacro\xtmpdiff{abs(\xn - \xcontrol)};
      \pgfmathsetmacro\ytmpdiff{abs(\yn - \ycontrol)};
      \ifdim \xtmpdiff pt > \hgtolerance
      \ifempty{#4}{
        \path[hgsem, ->, >=stealth', shorten >=1pt, \hgtgtcol] (\hgname.center) .. controls (tgtcentroid.center) and ($(tgtcentroid.center) + (tgtoffset)$) .. (\n);
      }{
        \path[hgsem, ->, >=stealth', shorten >=1pt, \hgtgtcol,#4] (\hgname.center) .. controls (tgtcentroid.center) and ($(tgtcentroid.center) + (tgtoffset)$) .. (\n);
      }
      \else
      \ifempty{#4}{
        \path[hgsem, ->, >=stealth', shorten >=1pt, \hgtgtcol] (\hgname.center) --  (\n);
      }{
        \path[hgsem, ->, >=stealth', shorten >=1pt, \hgtgtcol] (\hgname.center) --  (\n);
      }
      \fi
    }
    \fill[\hgfillcolor] (\hgname) circle [radius=\hgradius];
  }
  \endgroup
}
\newcommandx{\hgordeq}[1][1={\aH},usedefault=@]{\sqsubseteq_{#1}}
\newcommandx{\gintsem}[4][4=.5]{
  \tikz[hgsem,scale=#4,every node/.style={font=\scriptsize}]{
    \node (out) {$\aout[{#1}][{#2}][][{#3}]$};
    \node[below = 20pt of out] (in) {$\ain[{#1}][{#2}][][{#3}]$};
    \mkhg{out}{in};
  }
}
\newcommandx{\gsem}[2][1={\aG},2={},usedefault=@]{\left\llbracket {#1} \right\rrbracket_{#2}}
\newcommandx{\rbot}{\text{undef}}
\newcommandx{\rtrs}[1][1={\aH},usedefault=@]{{#1}^{\star}}
\newcommandx{\gord}[1][1={\aG},usedefault=@]{\leq_{#1}}
\newcommandx{\gordeq}[1][1={\aG},usedefault=@]{\leq_{#1}}
\mkfun{\cause}{cs}{}
\mkfun{\effect}{ef}{}
\newcommandx{\aW}{w}
\newcommand{\gfun}[1]{\ensuremath{\mathsf{\colorFun #1}}}
\mkfun{\eact}{\gfun{act}}{}
\mkfun{\enode}{\gfun{cp}}{}
\newcommandx{\rseq}[2][1=\aG,2={\aG'},usedefault=@]{\gfun{seq}({#1},{#2})}
\newcommandx{\rpar}[2][1=\aG,2={\aG'},usedefault=@]{\gfun{par}({#1},{#2})}
\newcommandx{\gproj}[2][1=\aG,2=\ptp]{{#1}\downarrow_{#2}}
\newcommandx{\cinit}[1][1={\aQzero},usedefault=@]{{#1}}
\newcommandx{\cfinal}[1][1={q_e},usedefault=@]{{#1}}
\newcommandx{\geproj}[4][1=\aG,2=\ptp,3=\cinit,4=\cfinal,usedefault=@]{
  {#1}\downarrow_{#2}^{{#3},{#4}}
}
\newcommand*{\StrikeThruDistance}{0.15cm}%
\tikzset{strike thru arrow/.style={
    decoration={markings, mark=at position 0.5 with {
        \draw [blue, thick,-] 
            ++ (-\StrikeThruDistance,-\StrikeThruDistance) 
            -- ( \StrikeThruDistance, \StrikeThruDistance);}
    },
    postaction={decorate},
}}
\newcommandx{\ich}[1][1={\aG},usedefault=@]{{#1}^{\oplus}}
\newcommandx{\ichedges}[2][1={\aG},2={\gname},usedefault=@]{{#1}^{\oplus}({#2})}
\newcommandx{\parts}[1]{2^{#1}}
\newcommandx{\actch}{c}
\newcommandx{\soundactch}[2][1={\aG},2={\actch},usedefault=@]{{#1} \,\circledR\, {#2}}
\newcommandx{\rOnActch}[2][1={\aG},2={\actch},usedefault=@]{{#1} \setminus {#2}}
\newcommandx{\rOnActchClean}[2][1={\aG},2={\actch},usedefault=@]{{#1} \circledR {#2}}
\newcommandx{\rAllEvents}[1][1={\aG},usedefault=@]{\mathit{dom}(#1)}
\newcommand{\AV}{\mathcal{V}}
\newcommand{\aH}{H}
\newcommandx{\hgvertex}[2][1=\al,2=\gname,usedefault=@]{{#1}_{\textcolor{red}{[{#2}]}}}
\newcommand{\aE}{{\colorE E}}
\renewcommand{\ae}[1][e]{{\colorE{#1}}}
\newcommand{\al}[1][l]{{\colorE{#1}}}
\newcommandx{\hyedge}[1]{\{#1\}}
\newcommandx{\rdiv}[2][1=\gcho,2=\ptp,usedefault=@]{
  \gfun{div}_{#2}(#1)
}
\newcommandx{\rrdiv}[5][1={\aG},2={\aG'},3={\AV},4={,\AV'},5=\ptp,usedefault=@]{
  \gfun{div}^{#3#4}_{#5}(#1,#2)
}
\newcommandx{\pdiv}[3][1={\apom_1},2={\apom_2},3={\apom},usedefault=@]{
  \gfun{div}_{#3}(#1,#2)
}
\newcommandx{\pfork}[3][1={\apom_1},2={\apom_2},3={\apom},usedefault=@]{
  \gfun{fork}_{#3}(#1,#2)
}
\tikzset{
  pomset/.style={
    node distance = .6cm and .6cm,
    scale = .7,
    transform shape,
    smooth
  }
}
\newcommandx{\mkint}[6][3=a,4=\msg,5=b,usedefault=@]{
  \node[bblock, #1] (#2) {$\gint[#3][#4][#5]$};
}
\newcommand{\mkseq}[2]{\path[line] (#1) -- (#2);}
\newcommand{\mknseq}[1]{
  \StrCount{#1}{,}[\l] 
  \StrBefore{#1}{,}[\myhead]
  \StrBehind{#1}{,}[\mytail]
  \StrBefore{\mytail}{,}[\sndel]
  \ifnum \l > 1 {
    \mkseq{\myhead}{\sndel};
    \mknseq{\mytail}
  }
  \else{\ifnum \l > 0{
      \mkseq{\myhead}{\mytail};
    }
    \else{}
    \fi
  }
  \fi
}
\newcommandx{\mkgateblock}[6][6=yellow!10,usedefault=@]{
  \path(#2);
  \pgfgetlastxy{\xgate}{\ygate};
  \pgfmathtruncatemacro{\xgateround}{\xgate};
  \StrCount{#3,}{,}[\l] 
  \ifnum \l < 2 {\errmessage{#3 argument should be a comma-separated list of lenght >= 2}}
  \else{
    \foreach \n in {#3}{
      \path (\n);
      \pgfgetlastxy{\xnode}{\ynode};
      \pgfmathtruncatemacro{\xnround}{\xnode};
      \pgfmathsetmacro\tmpdiff{abs(\xnround - \xgateround)}
      \ifdim \tmpdiff pt > 1 pt \path[line] (#2) -| (\n);
      \else
        \path[line] (#2) -- (\n);
      \fi
    }
  }
  \fi
  \StrCount{#4,}{,}[\l] 
  \ifnum \l < 2 {\errmessage{#4 argument should be a comma-separated list of lenght >= 2}}
  \else{
    \foreach \n in {#4}{
      \path (\n);
      \pgfgetlastxy{\xnode}{\ynode};
      \pgfmathtruncatemacro{\xnround}{\xnode};
      \pgfmathsetmacro\tmpdiff{abs(\xnround - \xgateround)}
      \ifdim \tmpdiff pt > 1 pt \path[line] (\n) |- (#5);
      \else
        \path[line] (\n) -- (#5);
      \fi
    }
  }
  \fi
  \node[#1] at (#2) {};
  \node[#1] at (#5) {};
  {
    \begin{pgfonlayer}{background}
      \path[fill=#6,rounded corners]
      (current bounding box.south west) rectangle
      (current bounding box.north east);
    \end{pgfonlayer}
  }
}
\newcommandx{\mkbranchblock}[5][5=@]{
  \mkgateblock{ogate}{#1}{#2}{#3}{#4}[#5]
}
\newcommandx{\mkforkblock}[5][5=@]{
  \mkgateblock{agate}{#1}{#2}{#3}{#4}[#5]
}
\newcommandx{\mkgraph}[3][1=.5cm, usedefault=@]{
  \node[source,above = #1 of {#2}] (src#2) {};
  \node[sink,below  = #1 of {#3}] (sink#3) {};
  \path[line] (src#2) -- (#2);
  \path[line] (#3) -- (sink#3);
}
\newcommandx{\mkloop}[5][1=.5, 2=1.5, 5=\aguard, usedefault=@]{
  %
  \node[lgate,above = #1 of {#3}] (entry#3) {};
  \pgfgetlastxy \xentry \yentry;
  \pgfmathtruncatemacro{\xentryrounded}{\xentry};
  \node[below = #1 of {#4}, label = {above right:{$#5$}},yshift=-1em] (dummy) {};
  \node[lgate,below  = #1 of {#4}] (exit#4) {};
  \pgfgetlastxy \xexit \yexit;
  \pgfmathtruncatemacro{\xexitrounded}{\xexit};
  \path[line] (entry#3) -- (#3);
  \path[line] (#4) -- (exit#4);
  \pgfmathsetmacro\tmpdiff{abs(\xentryrounded - \xexitrounded)}
  \path[line, color=teal] (exit#4) -| ($(exit#4)+(\tmpdiff,0)+(#2,0)$) |- (entry#3);
}
\newcommandx{\mkfork}[4][2=gatenode,3=i,4=.6,usedefault=@]{
  \mkgatebegin{#1}[{\gname[{#3}]}][agate][#4]{#2}
}
\newcommandx{\mkbranch}[4][2=gatenode,3=i,4=.6,usedefault=@]{
  \mkgatebegin{#1}[{\gname[{#3}]}][ogate][#4]{#2}
}
\newcommandx{\mkgatebegin}[5][2={},3=ogate,4=.5,usedefault=@]{
  %
  \coordinate (gatecord) at (0,0);
  \coordinate (xmax) at (0,0);
  \coordinate (xmin) at (0,0);
  \pgfgetlastxy \xmin \xmax;
  \foreach \n [count=\i] in {#1}{
    \pgfgetlastxy \xc \yc;
    \path (\n);
    \pgfgetlastxy \xn \yn;
    \ifnum \i = 1
      \coordinate (xmin) at (\xn,0);
      \coordinate (xmax) at (\xn,0);
      \coordinate (max) at (0,\yn);
    \else
      \ifdim \xn < \xmin
        \coordinate (xmin) at (\xn,0);
      \fi
      \ifdim \xn > \xmax
        \coordinate (xmax) at (\xn,0);
      \fi
      \ifdim \yn < \yc
        \coordinate (max) at (0,\yc);
      \else
        \coordinate (max) at (0,\yn);
      \fi
    \fi
  }
  \coordinate (gatecord) at ($(xmin)!.5!(xmax) + (max) + (0,#4) + (max)$);
  \node[#3,label={below:$#2$}] (#5) at (gatecord) {};
  \pgfgetlastxy{\xgate}{\ygate};
  \pgfmathtruncatemacro{\xgateround}{\xgate};
  \StrCount{#1,}{,}[\l] 
  \ifnum \l < 2 {\errmessage{#1 argument should be a comma-separated list of lenght >= 2}}
  \else{
    \foreach \n in {#1}{
      \path (\n);
      \pgfgetlastxy{\xnode}{\ynode};
      \pgfmathtruncatemacro{\xnround}{\xnode};
      \pgfmathsetmacro\tmpdiff{abs(\xnround - \xgateround)}
      \ifdim \tmpdiff pt > 1 pt \path[line] (#5) -| (\n);
      \else
        \path[line] (#5) -- (\n);
      \fi
    }
  }
  \fi
}
\newcommandx{\mkgatebeginold}[5][2={},3=ogate,4=.5,usedefault=@]{
  %
  \coordinate (gatecord) at (0,0);
  \foreach \n [count=\i] in {#1}{
    \pgfgetlastxy \xc \yc;
    \path (\n);
    \pgfgetlastxy \xn \yn;
    \coordinate (gatecord) at ($(gatecord) + (\xn,0)$);
    \coordinate (gatecord) at ($1/\i*(gatecord)$);
    \ifdim \yn < \yc
    \node (max) at (0,\yc) {};
    \else
    \node (max) at (0,\yn) {};
    \fi
  }
  \coordinate (gatecord) at ($(gatecord) + (0,#4) + (max)$);
  \node[#3,label={below:$#2$}] (#5) at (gatecord) {};
  \pgfgetlastxy{\xgate}{\ygate};
  \pgfmathtruncatemacro{\xgateround}{\xgate};
  \StrCount{#1,}{,}[\l] 
  \ifnum \l < 2 {\errmessage{#1 argument should be a comma-separated list of lenght >= 2}}
  \else{
    \foreach \n in {#1}{
      \path (\n);
      \pgfgetlastxy{\xnode}{\ynode};
      \pgfmathtruncatemacro{\xnround}{\xnode};
      \pgfmathsetmacro\tmpdiff{abs(\xnround - \xgateround)}
      \ifdim \tmpdiff pt > 1 pt \path[line] (#5) -| (\n);
      \else
        \path[line] (#5) -- (\n);
      \fi
    }
  }
  \fi
}
\newcommandx{\mkmerge}[4][2=gatenode,3=i,4=.5,usedefault=@]{
  \mkgateend{#1}[{\ifempty{#3}{}{\nmerge[#3]}}][ogate][#4]{#2}
}
\newcommandx{\mkjoin}[4][2=gatenode,3=i,4=.5,usedefault=@]{
  \mkgateend{#1}[{\ifempty{#3}{}{\nmerge[#3]}}][agate][#4]{#2}
}
\newcommandx{\mkgateend}[5][2={},3=ogate,4=.5,usedefault=@]{
  %
  \coordinate (gatecord) at (0,0);
  \coordinate (xmax) at (0,0);
  \coordinate (xmin) at (0,0);
  \pgfgetlastxy \xmin \xmax;
  \foreach \n [count=\i] in {#1}{
    \pgfgetlastxy \xc \yc;
    \path (\n);
    \pgfgetlastxy \xn \yn;
    \ifnum \i = 1
      \coordinate (xmin) at (\xn,0);
      \coordinate (xmax) at (\xn,0);
      \coordinate (ymin) at (0,\yn);
    \else
      \ifdim \xn < \xmin
        \coordinate (xmin) at (\xn,0);
      \fi
      \ifdim \xn > \xmax
        \coordinate (xmax) at (\xn,0);
      \fi
      \ifdim \yn > \yc
        \coordinate (ymin) at (0,\yc);
      \else
        \coordinate (ymin) at (0,\yn);
      \fi
    \fi
  }
  \coordinate (gatecoord) at ($(xmin)!.5!(xmax) + (ymin)$);
  \node[#3,label={above:$#2$}] (#5) at ($(gatecoord) - (0,{#4})$) {};
  \pgfgetlastxy{\xgate}{\ygate};
  \pgfmathtruncatemacro{\xgateround}{\xgate};
  \StrCount{#1,}{,}[\l] 
  \ifnum \l < 2 {\errmessage{#1 argument should be a comma-separated list of lenght >= 2}}
  \else{
    \foreach \n in {#1}{
      \path (\n);
      \pgfgetlastxy{\xnode}{\ynode};
      \pgfmathtruncatemacro{\xnround}{\xnode};
      \pgfmathsetmacro\tmpdiff{abs(\xnround - \xgateround)}
      \ifdim \tmpdiff pt > 1 pt \path[line] (\n) |- (#5);
      \else
        \path[line] (\n) -- (#5);
      \fi
    }
  }
  \fi
}
\newcommandx{\mkgateendold}[5][2={},3=ogate,4=.5,usedefault=@]{
  %
  \coordinate (gatecord) at (0,0);
  \coordinate (xmax) at (0,0);
  \coordinate (xmin) at (0,0);
  \pgfgetlastxy \xmin \xmax;
  \foreach \n [count=\i] in {#1}{
    \pgfgetlastxy \xc \yc;
    \path (\n);
    \pgfgetlastxy \xn \yn;
    \ifdim \xn < \xmin
    \coordinate (xmin) at (\xn,0);
    \fi
    \ifdim \xn > \xmax
    \coordinate (xmax) at (\xn,0);
    \fi
    \ifdim \yn > \yc
    \coordinate (ymin) at (0,\yc);
    \else
    \coordinate (ymin) at (0,\yn);
    \fi
    \coordinate (gatecord) at ($(xmin)!.5!(xmax) + (ymin)$);
  }
  \node[#3,label={above:$#2$}] (#5) at ($(gatecord) - (0,{#4})$) {};
  \pgfgetlastxy{\xgate}{\ygate};
  \pgfmathtruncatemacro{\xgateround}{\xgate};
  \StrCount{#1,}{,}[\l] 
  \ifnum \l < 2 {\errmessage{#1 argument should be a comma-separated list of lenght >= 2}}
  \else{
    \foreach \n in {#1}{
      \path (\n);
      \pgfgetlastxy{\xnode}{\ynode};
      \pgfmathtruncatemacro{\xnround}{\xnode};
      \pgfmathsetmacro\tmpdiff{abs(\xnround - \xgateround)}
      \ifdim \tmpdiff pt > 1 pt \path[line] (\n) |- (#5);
      \else
        \path[line] (\n) - (#5);
      \fi
    }
  }
  \fi
}
\newcommand{\gatedistancein}{3pt}
\newcommand{\gatedistanceinand}{2pt}
\tikzset{
  src/.style={draw,circle,fill=white,
    minimum size=2mm,
    inner sep=0pt
  },
  sink/.style={draw,circle,double,fill=white,
    minimum size=1.5mm,
    inner sep=0pt
  },
  node/.style={draw,circle,fill=black,
    minimum size=2mm,
    inner sep=0pt
  },
  source/.style={draw,circle,fill=white,
    minimum size=3mm,
    inner sep=0pt
  },
  sink/.style={draw,circle,double,fill=white,
    minimum size=3mm,
    inner sep=0pt
  },
  block/.style = {rectangle, draw=gray, align=center, fill=orange!25, rounded corners=0.1cm,
    minimum size=5mm, inner sep=2pt},
  prenode/.style = {minimum size=9pt,inner sep=2pt, font=\Large},
  bblock/.style = {rectangle, draw=blue!50, opacity=.7, line width=.5pt, align=center, fill=white, rounded corners=0.1cm,
    minimum size=4mm, inner sep=1pt},
  prenode/.style = {minimum size=9pt,inner sep=2pt, font=\Large},
  agate/.style={draw, rectangle,
    minimum size=3mm,
    inner sep=0pt,
    fill=orange!25,
    label={[red]center:$\mid$}
  },
  ogate/.style = {
    diamond, draw, fill=orange!25,
    minimum size=4mm,
    inner sep=0pt,
    label={[Navy]center:{\bf\textsf{X}}}
  },
  lgate/.style = {
    diamond, draw, fill=orange!25,
    minimum size=4mm,
    inner sep=0pt,
    label={[red]center:$\circlearrowleft$}
    },
  altogate/.style = {
    diamond, draw,
    minimum size=4mm,
    inner sep=0pt,
    postaction={path picture={%
        \draw
        ([yshift=\gatedistancein]path picture bounding box.south) -- ([yshift=-\gatedistancein]path picture bounding box.north)
        ([xshift=-\gatedistancein]path picture bounding box.east) -- ([xshift=\gatedistancein]path picture bounding box.west)
        ;}}},
  altgate/.style={draw, rectangle,
    minimum size=3mm,
    inner sep=0pt,
    postaction={path picture={%
        \draw
        ([yshift=\gatedistanceinand]path picture bounding box.south) --
        ([yshift=-\gatedistanceinand]path picture bounding box.north) ;}}},
  anygate/.style = {circle, draw, fill=white,
    minimum size=4mm,
    inner sep=0pt,
    postaction={path picture={%
        \draw[black]
        ([xshift=-\gatedistancein,yshift=\gatedistancein]path picture bounding box.south east) --
        ([xshift=\gatedistancein,yshift=-\gatedistancein]path picture bounding box.north west)
        ([xshift=-\gatedistancein,yshift=-\gatedistancein]path picture bounding box.north east) --
        ([xshift=\gatedistancein,yshift=\gatedistancein]path picture bounding box.south west)
        ;}}
  },
  smallglobal/.style={
        node distance=1cm and 0.8cm, semithick, scale=0.8, every node/.style={transform shape}
  },
  elli/.style = {draw,densely dotted,-},
  %
  line/.style = {draw,->, rounded corners=0.07cm,>=latex},
  nline/.style = {draw,semithick, ->},
  pline/.style = {draw,->,>=latex},
  node distance=1cm and 0.7cm,
  baseline=(current  bounding  box.center),
  local/.style={rectangle, draw, fill=\fillcolor, drop shadow,
    text centered, rounded corners, minimum height=5em
  },
  bigar/.style={
    draw,very thick, ->
  },
  process/.style={rectangle, draw=gray, fill=\fillcolor, drop shadow,
    text centered, minimum height=5em,text=gray
  },
  choreo/.style={rectangle, draw, fill=\fillcolor, drop shadow,
    text centered, rounded corners, minimum height=5em
  },
  mycfsm/.style={
        font=\footnotesize,
        initial where=above,
        ->,>=stealth,auto, node distance=1cm and 1cm,
        scale=1, every node/.style={transform shape},
        every state/.style=inner sep=2pt,
        baseline=(current  bounding  box.center),
        initial text={}
  },
  machinecloud/.style={
    cloud, cloud puffs=10, cloud ignores aspect, minimum height=.1cm, minimum width=2cm, draw
  },
  fitting node/.style={
    inner sep=0pt,
    fill=none,
    draw=none,
    reset transform,
    fit={(\pgf@pathminx,\pgf@pathminy) (\pgf@pathmaxx,\pgf@pathmaxy)}
  },
  mypetri/.style={
    font=\footnotesize,
    baseline=(current  bounding  box.center)
  },
  silentrans/.style = {rectangle, draw=black, align=center, fill=black,
    minimum height=1pt,
    minimum width=15pt,
    inner sep=1.5pt
  },
  reset transform/.code={\pgftransformreset},
  tmtape/.style={draw,minimum size=1.2cm}
}
\newcommand{\gunlessop}{\mbox{\colorOp\tiny\tt unless}}
\newcommandx{\gtry}[5][1=\gname,2={\aG_1 \gchoop \cdots \gchoop \aG_n},3=\gin,4=\gout,5={j},usedefault=@]{
  \def\foo{\gtryop\ {#2} \ \gcatchop\ {#3} {\colorOp \Rightarrow} {#4} {\colorOp \bullet} {\gname[{#5}]}}
  \gnode[{#1}][{\ifempty{#1} {\foo } {(\foo)}}]
}
\newcommandx{\gtrycatch}[4][1=\gname,2={\aG},3=\gin,4={\aG'},usedefault=@]{
  \def\foo{\gtryop\ {#2} \ \gcatchop\ {#3} \gdoop\ {#4}}
  \gnode[{#1}][{\ifempty{#1} {\foo} {(\foo)}}]
}
\newcommandx{\agG}[2][1={\aG},2=\aguard]{{#1} \ifempty{#2}{}{\ \gunlessop\ {#2}}}
\newcommandx{\grcho}[5][1=\gname,2={\agG},3={\agG[\aG'][\aguard']},4={\cdots},5=A,usedefault=@]{
  \def\foo{{#2} {\ \ifempty{#4}{\gchoop}{\gchoop \ifempty{#4}{}{\ {#4}\  \gchoop}}\ } {#3}}
  \ifempty{#1}{\ifempty{#5}{\foo}{\gselop\ \cpt[{#1}][{\ptp[#5]}]\big\{ \foo \big\}}}{\gselop\ \cpt[{#1}][{\ptp[#5]}]\big\{ \foo \big\}}
}
\newcommandx{\ggprefix}[3][1=\ptp,2={\aR},3={\aR'},usedefault=@]{f_{#1}} 
\newcommand{\aconfigfn}{\chi}
\newcommand{\aconfig}{\ell}
\newcommand{\lstates}{\statemap}
\newcommandx{\sysconfig}[3][1=\lstates,2=\aconfigfn,3={},usedefault=@]{
  \conf{ {#1},{#2} \ifempty{#3}{}{, #3} }
}
\newcommand{\sysctxfn}[1][]{\gamma_{#1}}
\newcommandx{\sysctx}[2][1=\aQ,2={},usedefault=@]{({#1},\sysctxfn[{#2}])}
\newcommandx{\alog}[4][1=\msg,2=q,3=\gname,4=t,usedefault=@]{({#1},{#2},{#3},{#4})}
\newcommandx{\aQzero}[1][1=,usedefault=@]{
  {\ifempty{#1}{q_0}{q_{0#1}}}
}
\newcommand{\badbranches}[1][]{\beta\ifempty{#1}{}{({#1})}}
\newcommand{\aTrs}{\tset}
\newcommandx{\guardedaction}[2][1=\al,2=\aguard,usedefault=@]{
  {#1} \ifempty{#2}{}{/} {#2}
}
\newcommandx{\atrM}[4][1=q,2=\al,3={\hat q,\hat \al, \aguard},4=q',usedefault=@]{
  {#1} \xrightarrow[{#3}]{\guardedaction[{#2}][]} {{#4}}
}
\newcommandx{\atrS}[5][
  1={\sysconfig[@][@][\badbranches]},
  2=\al,
  3=\aguard,
  4={\sysconfig[\lstates'][\aconfigfn'][\badbranches]},
  5=\sysctx,usedefault=@
]{
  {#1} \xRightarrow{\qquad} {{#4}}
}
\newcommandx{\arevtrS}[2][
  1={\sysconfig[@][@][\badbranches]},
  2={\sysconfig[\lstates'][\aconfigfn'][\badbranches']},
  usedefault=@
]{
  {#1} \rightsquigarrow {#2}
}
\newcommandx{\enables}[2][1=\aconfigfn,2=\aguard,usedefault=@]{{#1} \vdash {#2}}
\newcommandx{\gprojfn}[5][1=\aG,2=\ptp,3=\cinit,4=\cfinal,5={},usedefault=@]{
  \mathbf{proj}_{#2}({#1},{#3},{#4}\ifempty{#5}{}{,{#5}})
}
\newcommandx{\rbp}[3][1=\aG,2=\aconfigfn,3=\achan,usedefault=@]{\mathtt{RBP}_{{#1},{#2}}\ifempty{#3}{}{({#3})}}
\newcommand{\apseudoCFSM}{\mathtt{M}}
\newcommandx{\pseudoseq}[2][1=\apseudoCFSM,2=\apseudoCFSM',usedefault=@]{{#1}  ; {#2}}
\newcommandx{\pseudoCFSM}[4][1=\aQ,2=\aQzero,3=\cfinal,4=\aTrs,usedefault=@]{(#1 \ ; #2 \ ; #3 \ ; #4)}
\newcommandx{\markt}[3][1=\hat{\al},2=\hat{q},3=\aguard,usedefault=@]{\%\big({#1} , {#2}, {#3}\big)}
\newcommandx{\borderfn}[2][1=\aconfig,2=\aloop,usedefault=@]{
  \mathsf{border}_{{#2}}\ifempty{#1}{}{({#1})}
}
\tikzset{
  mycallout/.style={
	 fill=gray!30, opacity=.5, overlay, align=center,
	 cloud callout, cloud puffs=10, aspect=1.9, cloud ignores aspect, cloud puff arc=100
  }
}
\newcommandx{\ggvisually}[8][1=5pt,2=15pt,3=5pt,4=5pt,5=1.0cm,6=\scriptsize,7={},8={},usedefault=@]{
  \def\dist{\hspace{#5}}
  $\begin{array}{c@{\dist}c@{\dist}c@{\dist}c@{\dist}c@{\dist}c}
	  \begin{tikzpicture}[node distance=0.9cm and 0.4cm, every node/.style={scale=.7,transform shape}]
		 \node[source] (srcint) {};
		 \node[sink,below=of srcint] (sinkint) {};
		 \node[mycallout, above = .3cm of srcint, xshift=1cm, callout absolute pointer={(srcint.east)}] {source node};
		 \node[mycallout, below = .3cm of sinkint, xshift=1cm, callout absolute pointer={(sinkint.west)}] {sink node};
		 \path[line] (srcint) -- (sinkint);
	  \end{tikzpicture}
	  &
		 \begin{tikzpicture}[node distance=0.9cm and 0.4cm, every node/.style={scale=.7,transform shape}]
			\mkint{}{int}[]
			\mkgraph{int}{int};
		 \end{tikzpicture}
	  &
		 \begin{tikzpicture}[node distance=.9cm and 0.4cm, every node/.style={scale=.7,transform shape}]
			\node[bblock] at (0,0) (g) {$\aG$};
			\node[node, below=of g] (s1) {};
			\node[bblock, below=of s1] (gp) {$\aG'$};
			\path[line,dotted] (g) -- (s1);
			\path[line,dotted] (s1) -- (gp);
		 \end{tikzpicture}
	  &
		 \begin{tikzpicture}[node distance=.4cm and 0.4cm, every node/.style={scale=.7,transform shape}]
			\node[bblock] at (-.7,0) (g) {$\aG$};
			\node[bblock] at (.7,0)  (gp) {$\aG'$};
			\node[node, above=of g] (f) {};
			\node[node, below=of g] (j) {};
			\node[node, above=of gp] (fp) {};
			\node[node, below=of gp] (jp) {};
			\path[line,dotted] (f) -- (g);
			\path[line,dotted] (g) -- (j);
			\path[line,dotted] (fp) -- (gp);
			\path[line,dotted] (gp) -- (jp);
			\mkfork{f,fp}[fork][][#1];
			\mkjoin{j,jp}[join][][#2];
			\mkgraph{fork}{join};
			\node[mycallout, above = .3cm of fork, xshift=-1cm, callout absolute pointer={(fork.west)}] {fork gate};
			\node[mycallout, above = -.9cm of join, xshift=-1cm, callout absolute pointer={(join.west)}] {join gate};
		 \end{tikzpicture}
	  &
		 \begin{tikzpicture}[node distance=.4cm and 0.4cm, every node/.style={scale=.7,transform shape}]
			\node[bblock] at (-.7,0) (g) {$\aG$};
			\node[bblock] at (.7,0)  (gp) {$\aG'$};
			\node[node, above=of g] (f) {};
			\node[node, below=of g] (j) {};
			\node[node, above=of gp] (fp) {};
			\node[node, below=of gp] (jp) {};
			\path[line,dotted] (f) -- (g);
			\path[line,dotted] (g) -- (j);
			\path[line,dotted] (fp) -- (gp);
			\path[line,dotted] (gp) -- (jp);
			\mkbranch{f,fp}[fork][][#3];
			\mkmerge{j,jp}[join][][#4];
         \mkgraph{fork}{join};
         \node[mycallout, above = .3cm of fork, xshift=-1cm, callout absolute pointer={(fork.west)}] {branch gate};
         \node[mycallout, above = -.9cm of join, xshift=-1cm, callout absolute pointer={(join.west)}] {merge gate};
       \end{tikzpicture}
     \ifempty{#7}{}{
     &
		 \begin{tikzpicture}[node distance=0.4cm and 0.4cm, every node/.style={scale=.7,transform shape}]
        \node[bblock] (g) {$\aG$};
        \node[node, above=.5cm of g] (f) {};
        \node[node, below=.5cm of g] (j) {};
        \path[line,dotted] (f) -- (g);
        \path[line,dotted] (g) -- (j);
        \mkloop[.4][1]{f}{j};
        \mkgraph[.3cm]{entryf}{exitj};
        \node[mycallout, above = .2cm of entryf, xshift=1.3cm, callout absolute pointer={(entryf.east)}] {loop entry};
        \node[mycallout, above = -.7cm of exitj, xshift=1.3cm, callout absolute pointer={(exitj.west)}] {loop exit};
      \end{tikzpicture}
     }
     \ifempty{#8}{}{
	  \\
     \text{#6 empty}
     &
     \text{#6 interaction}
     &
     \text{#6 sequential}
     &
     \text{#6 parallel}
     &
     \text{#6 branch}
		 &\text{#6 iteration}
   }
   \end{array}$
}
\newcommandx{\newggvisually}[5][1=5pt,2=15pt,3=\scriptsize,4={},5={},usedefault=@]{
  \def\w{1cm}
  \begin{minipage}[c]{\w}
	 \ifempty{#5}{}{\text{#3 empty}\\[#1]}
	 \begin{tikzpicture}[node distance=0.9cm and 0.4cm, every node/.style={scale=.7,transform shape}]
		\node[source] (srcint) {};
		\node[sink,below=of srcint] (sinkint) {};
		\node[mycallout, above = .3cm of srcint, xshift=1cm, callout absolute pointer={(srcint.east)}] {source node};
		\node[mycallout, below = .3cm of sinkint, xshift=1cm, callout absolute pointer={(sinkint.west)}] {sink node};
		\path[line] (srcint) -- (sinkint);
	 \end{tikzpicture}
  \end{minipage}
  \hfill
	  \begin{minipage}[c]{\w}
     \ifempty{#5}{}{\text{#3 interaction}\\[#1]}
		 \begin{tikzpicture}[node distance=0.9cm and 0.4cm, every node/.style={scale=.7,transform shape}]
			\mkint{}{int}[]
			\mkgraph{int}{int};
		 \end{tikzpicture}
	  \end{minipage}
	 \hfill
	  \begin{minipage}[c]{.1cm}
     \ifempty{#5}{}{\text{#3 sequential}\\[#1]}
		 \begin{tikzpicture}[node distance=.9cm and 0.4cm, every node/.style={scale=.7,transform shape}]
			\node[bblock] at (0,0) (g) {$\aG$};
			\node[node, below=of g] (s1) {};
			\node[bblock, below=of s1] (gp) {$\aG'$};
			\path[line,dotted] (g) -- (s1);
			\path[line,dotted] (s1) -- (gp);
		 \end{tikzpicture}
	  \end{minipage}
	  \hfill
	  \begin{minipage}[c]{\w}
     \ifempty{#5}{}{\text{#3 parallel}\\[#1]}
		 \begin{tikzpicture}[node distance=.4cm and 0.4cm, every node/.style={scale=.7,transform shape}]
			\node[bblock] at (-.7,0) (g) {$\aG$};
			\node[bblock] at (.7,0)  (gp) {$\aG'$};
			\node[node, above=of g] (f) {};
			\node[node, below=of g] (j) {};
			\node[node, above=of gp] (fp) {};
			\node[node, below=of gp] (jp) {};
			\path[line,dotted] (f) -- (g);
			\path[line,dotted] (g) -- (j);
			\path[line,dotted] (fp) -- (gp);
			\path[line,dotted] (gp) -- (jp);
			\mkfork{f,fp}[fork][][#1];
			\mkjoin{j,jp}[join][][#2];
			\mkgraph{fork}{join};
			\node[mycallout, above = .2cm of fork, xshift=-1cm, callout absolute pointer={(fork.west)}] {#3 fork gate};
			\node[mycallout, below = .2cm of join, xshift=-1cm, callout absolute pointer={(join.west)}] {#3 join gate};
		 \end{tikzpicture}
	  \end{minipage}
	  \hfill
	  \begin{minipage}[c]{\w}
     \ifempty{#5}{}{\text{#3 branch}\\[#1]}
		 \begin{tikzpicture}[node distance=.4cm and 0.4cm, every node/.style={scale=.7,transform shape}]
			\node[bblock] at (-.7,0) (g) {$\aG$};
			\node[bblock] at (.7,0)  (gp) {$\aG'$};
			\node[node, above=of g] (f) {};
			\node[node, below=of g] (j) {};
			\node[node, above=of gp] (fp) {};
			\node[node, below=of gp] (jp) {};
			\path[line,dotted] (f) -- (g);
			\path[line,dotted] (g) -- (j);
			\path[line,dotted] (fp) -- (gp);
			\path[line,dotted] (gp) -- (jp);
			\mkbranch{f,fp}[fork][][#1];
			\mkmerge{j,jp}[join][][#2];
         \mkgraph{fork}{join};
         \node[mycallout, above = .1cm of fork, xshift=-1cm, callout absolute pointer={(fork.west)}] {branch gate};
         \node[mycallout, below = .1cm of join, xshift=-1cm, callout absolute pointer={(join.west)}] {merge gate};
       \end{tikzpicture}
	  \end{minipage}
     \ifempty{#4}{}{
     \hfill
	  \begin{minipage}[c]{\w}
     \ifempty{#5}{}{\text{#3 iteration}\\[#1]}
		 \begin{tikzpicture}[node distance=0.4cm and 0.4cm, every node/.style={scale=.7,transform shape}]
        \node[bblock] (g) {$\aG$};
        \node[node, above=.5cm of g] (f) {};
        \node[node, below=.5cm of g] (j) {};
        \path[line,dotted] (f) -- (g);
        \path[line,dotted] (g) -- (j);
        \mkloop[.4][1]{f}{j};
        \mkgraph[.3cm]{entryf}{exitj};
        \node[mycallout, above = .2cm of entryf, xshift=1.3cm, callout absolute pointer={(entryf.east)}] {loop entry};
        \node[mycallout, above = -.7cm of exitj, xshift=1.3cm, callout absolute pointer={(exitj.east)}] {loop exit};
      \end{tikzpicture}
	  \end{minipage}
     }
}
  \newcommandx{\wwwcquote}[1][1=quo:w3c,usedefault=@]{
	 \ifempty{#1}{}{\begin{quote}\label{#1}}
		\lq\lq Using the Web Services Choreography specification, a
		\textcolor{orange}{contract} containing a global definition of the
		common \textcolor{orange}{ordering} conditions and constraints
		under which \textcolor{orange}{messages} are exchanged, is
		produced that describes, from a \textcolor{orange}{global
		  viewpoint} [...]  observable behaviour of all the parties
		involved.
		\textcolor{OliveGreen}{Each party} can then use the global definition to
		\textcolor{OliveGreen}{build and test solutions that conform to it}.
		The global specification is in turn \textcolor{OliveGreen}{realised by combination of} the
		resulting \textcolor{OliveGreen}{local systems} [...]\rq\rq
		\ifempty{#1}{}{\end{quote}}
  }
\newcommand{\tool}{\textsf{CAScr}}
\lstdefinelanguage{TypeScript}{
    morekeywords={typeof, new, true, false, catch, instanceof, =>, function, return, null, catch, switch, var, const, let, for, if, in, while, do, else, case, break, default, abstract},
    ndkeywords={class, export, boolean, string, any, number, throw, =>, extends, void, as, implements, import, this},
    comment=[l]{//},
    morestring=[b]",
    morestring=[b]',
    morestring=[b]`,
    morecomment=[s]{/*}{*/},
    sensitive=true,
}
\lstdefinelanguage{json}{
    literate=
     *{:}{{{\color{Periwinkle}{:}}}}{1}
      {,}{{{\color{Periwinkle}{,}}}}{1}
      {\{}{{{\color{Periwinkle}{\{}}}}{1}
      {\}}{{{\color{Periwinkle}{\}}}}}{1}
      {[}{{{\color{Periwinkle}{[}}}}{1}
      {]}{{{\color{Periwinkle}{]}}}}{1},
}
\lstdefinelanguage{Scribble}{
    morekeywords={aux, explicit, connect, disconnect, global, protocol, role, do, rec, choice, from, to, at, continue, or, rec, as, type},
    morestring=[b]",
    morestring=[b]',
    morecomment=[s]{/*}{*/},
    sensitive=true,
}
\ttfamily\color{DarkGray},
\tiny\color{gray},
\definecolor{pine}{RGB}{1, 121, 111}
\begin{document}

\maketitle

\begin{abstract}
  Choreographic models support a correctness-by-construction principle
  in distributed programming.
  Also, they enable the automatic generation of correct message-based
  communication patterns from a global specification of the desired
  system behaviour.
  In this paper we extend the theory of choreography automata, a
  choreographic model based on finite-state automata, with two key
  features. First, we allow participants to act only in some of the
  scenarios described by the choreography automaton. While this seems
  natural, many choreographic approaches in the literature, and
  choreography automata in particular, forbid this behaviour. Second,
  we equip communications with assertions constraining the values that
  can be communicated, enabling a design-by-contract approach. We
  provide a toolchain allowing to exploit the theory above to
  generate APIs for TypeScript web programming. Programs communicating
  via the generated APIs follow, by construction, the prescribed
  communication pattern and are free from communication errors such as
  deadlocks.
\end{abstract}

\section{Introduction}\label{sec:intro}
The development of communicating systems is notoriously a challenging
endeavour.
In this application domain, both researchers and practitioners
consider choreographies a valid approach to tackle software
development
(e.g.~\cite{w3c:cho,bpmn,DBLP:journals/software/AutiliIT15,bon18,fmmt20}).
Besides being naturally geared toward scalability (due to the lack of
central components), choreographic models have been specifically
conceived to support a
\emph{correctness-by-construction}~\cite{w3c:cho} principle hinging on
the interplay between \emph{global} and \emph{local views}.
The former is a description of the interactions among (the \emph{role}
of) participants.
\iftr
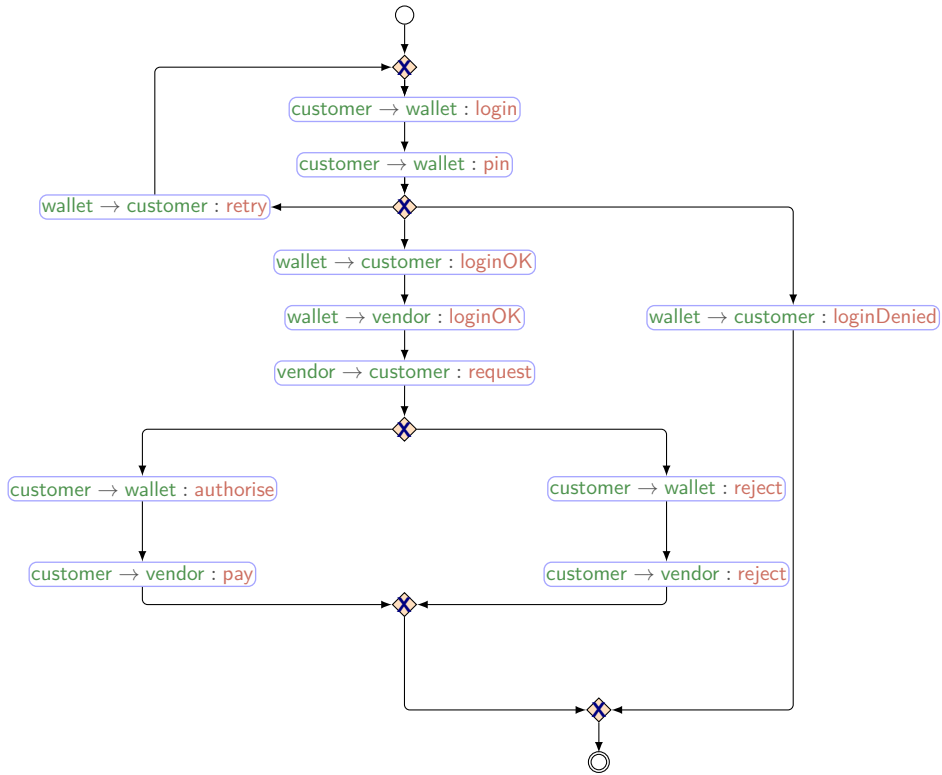
\begin{figure}[t]\centering
  \begin{minipage}{1\linewidth}\centering
	 \begin{tikzpicture}[scale=.8,every node/.style={transform shape, font=\fontsize{10}{30}\selectfont}]
		\mkint{}{login}[customer][wallet][login];
		\mkint{below = .5cm of login}{pin}[customer][wallet][pin];
		\mkseq{login}{pin};
		\mkgatebegin{login,login}[@][@][.7cm]{loopStart};
		\mkgateend{pin,pin}[@][@][.7cm]{loopEnd};
		\mkint{left = 2cm of loopEnd}{retry}[wallet][customer][retry];
		\mknseq{loopEnd,retry};
		\path[line] (retry) |- (loopStart);
		\mkint{below = .5cm of loopEnd}{ok1}[wallet][customer][loginOK];
		\mkint{below = .5cm of ok1}{ok}[wallet][vendor][loginOK];
		\mkint{below = .5cm of ok}{req}[vendor][customer][request];
		\mknseq{loopEnd,ok1,ok};
		\mkint{below left = of req, xshift = .75cm, yshift=-.5cm}{auth}[customer][wallet][authorise];
		\mkint{below right = of req, xshift = -.5cm,yshift=-.5cm}{rej}[customer][wallet][reject];
		\mkint{right = 2cm of ok}{ko}[wallet][customer][loginDenied];
		\path[line] (loopEnd) -| (ko);
		\mkint{below = of auth}{pay}[customer][vendor][pay];
		\mkint{below = of rej}{bye}[customer][vendor][reject];
		\mkseq{ok}{req};
		\mkseq{auth}{pay};
		\mkseq{rej}{bye};
		\mkbranch{auth,rej}[br][@][-5.3cm];
		\mkmerge{pay,bye}[br'][-1cm];
		\mkseq{req}{br};
		\mkmerge{br',ko}[end][@][6.5cm];
		\mkgraph{loopStart}{end};
	 \end{tikzpicture}
  \end{minipage}
\caption{BPMN-like description of the Online Wallet}
\label{fig:olwBPMN}
\end{figure}
\fi
We illustrate this through an OnLineWallet (OLW) service, adapted
from~\cite{spy} and akin to PayPal, used by vendors to process from
customers.
\iftr
\cref{fig:olwBPMN} describes our OLW protocol in terms of a
BPMN-like diagram (diamonds denote choices or merges).
The three participants involved are \ptp[customer], \ptp[wallet], and
\ptp[vendor].
\else
(See~\cite{glsty22TR} for a visual model of OLW).
This protocol involves three participants: \ptp[customer],
\ptp[wallet], and \ptp[vendor].
\fi
The former tries first to $\amsg[login]$ into its account on the
\ptp[wallet]\ server.
In case of failure, \ptp[wallet]\ may ask for a $\amsg[retry]$,
or may decide to deny access.
A successful authentication is communicated by \ptp[wallet]\ to
\ptp[customer]\ and \ptp[vendor]\ through the $\amsg[loginOK]$ message;
the \ptp[vendor]\ then sends a $\amsg[request]$ for payment to \ptp[customer], who
can $\amsg[authorise]$ or $\amsg[reject]$ the transaction.

A natural question to ask is \quo{can the OLW protocol be faithfully
  realised by distributed components?}
The answer to this question requires a careful formalisation which
we carry out in the next sections.
For the moment, we appeal to intuition and interpret \emph{realisation}
as the existence of a set of components that coordinate with each
other exclusively by message-passing and \emph{faithful} as the fact
that components execute all and only the communications prescribed by
the global view without incurring in communication errors such as
deadlocks.
Local views specify the behaviour of each participant \quo{in
  isolation}.
\iftr
For instance, the local view of \ptp[vendor]\ above consists of
an artefact which, after having received the notification from
\ptp[wallet], sends a $\msg[request]$ message to \ptp[customer], and
then waits for either a $\msg[pay]$ment or a $\msg[reject]$ion message
from \ptp[customer].
\else
For instance, the local behaviour of \ptp[vendor]\ in the OLW protocol
is to wait the notification from \ptp[wallet], send a $\msg[request]$
message to \ptp[customer], and then wait for either a $\msg[pay]$ment
or a $\msg[reject]$ion message from \ptp[customer].
\fi
Note that \ptp[vendor]\ is \quo{oblivious} of the interactions between
\ptp[customer]\ and \ptp[wallet].
Also, observe that, if \ptp[customer]\ fails to authenticate to
\ptp[wallet]\ (e.g., by typing a wrong password), then no payment
request can be made.
In this case, it does not make sense to involve \ptp[vendor]\ in the
protocol.
We call the ability to involve a participant only in some branches of
a protocol \emph{selective participation}.

Rather than an exception, selective participation is a norm in
distributed applications, e.g., for data validation, prevention of
server overload, or access control.
Consider, e.g., services giving public access to some resources
while requiring authentication to grant access to others.
Often, the authentication phase is outsourced to external services
(e.g., providing OAuth2.0~\cite{oauth2} and Kerberos~\cite{kerberos}
authentication).
In this case, accesses to public resources should be oblivious to
authentication services while protected resources are not involved in
the communication until the authentication phase is cleared (as for
\ptp[vendor]\ in our example).
Other examples of selective participation emerge from smart contracts
for online money transations (e.g., crowdfunding services
as~\cite{kickstarter}), where participants take part to some stages of
the communication only in case of a positive outcome of some
financial operation.

A paramount element for the correctness-by-construction principle is
the notion of \emph{well-formedness}, namely sufficient conditions
guaranteeing the faithful realisation of a protocol.
Actually, choreographies advocate the algorithmic derivation, by
\emph{projection}, of faithful realisations from well-formed global
views~\cite{w3c:cho}.
In fact, the so-called \emph{top-down} choreographic approach to
development consists of ($a$) the definition of a well-formed global
view of the protocol, ($b$) the projection of the global view onto local
ones, ($c$) the verification that each implemented component complies with
a local view.

Usually, global views abstract away from local computations; for
instance,
\iftr
the diagram of OLW in \cref{fig:olwBPMN}
\else
our description of OWL
\fi
does not specify how \ptp[Wallet]\ takes the decision of letting
\ptp[Customer]\ $\amsg[retry]$ the authentication or the strategy of
\ptp[Customer]\ to $\amsg[authorise]$ or not the payment.
Both these (and the other local) computations are blurred away because
they require to specify the data dependencies that local computations
should enforce.
As pioneered in~\cite{bhty10} in the context of global
types~\cite{HondaYC16}, assertion methods can abstractly handle those
dependencies by suitably constraining the payloads of interactions.
Roughly, this transfers design-by-contract~\cite{mey92} methods to
message-passing applications by imposing rely-guarantee relations on
interactions.
As shown in~\cite{bhty10}, this poses several challenges due to two
main reasons.
Firstly, pre-conditions ensuring the feasibility of some interactions
depend on information scattered across distributed participants.
Hence, it is necessary that data flow to participants so that all the
information necessary for a participant to guarantee some assertion is
available when needed.
This requires to restrict to \emph{history sensitive}~\cite{bhty10}
protocols, namely specifications have to be such that participants
required to guarantee an assertion are aware of the information needed
to satisfy it.
Secondly, a careless use of such assertions may lead to inconsistent
specifications so to eventually spoil the realisability of the
protocol.
This requires to restrict to \emph{temporally
  satisfiable}~\cite{bhty10} protocols, where no assertion ever
becomes inconsistent during the execution.

Models and results based on the top-down approach to choreography
abound in the literature (see, e.g., the survey~\cite{survey}).
This paper builds on \emph{choreography automata}
(c-automata)~\cite{BarbaneraLT20}; intutively, a c-automaton is a
finite-state machine whose transitions are labelled by
\iftr
interactions such as those in \cref{fig:olwBPMN}.
\else
interactions.
\fi
The use of automata brings several benefits.
On the one hand, automata models are well-known to both academics and
industrial computer scientists and engineers.
On the other hand, they allow one to exploit the well-developed theory
of automata.
Furthermore, automata do not have syntactic constraints imposed by
algebraic models such as multiparty session types (see,
e.g.,~\cite{HYC08,ScalasY19,CarboneM13}).
Indeed, as noted in~\cite{BarbaneraLT20}, c-automata seem to be
more flexible than \quo{syntax}-based formalisms such as
global graphs~\cite{gt18} or multiparty session types.
This is due to the fact that, in the latter family, well-formedness is
attained via syntactic restrictions that rule out unrealisable
protocols.
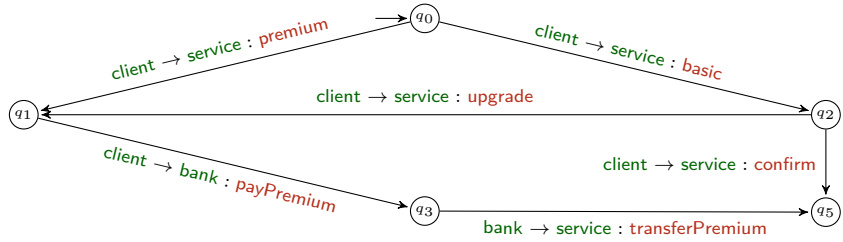
\begin{figure}[t]
  \[\begin{tikzpicture}[node distance=1cm and 5cm]
		\tikzstyle{every state}=[cnode]
		\tikzstyle{every edge}=[carrow]
		\node[state, initial, initial text={}] (q0) {$q_0$};
		\node[state, below left = of q0] (q1) {$q_1$};
		\node[state, below right = of q0] (q2) {$q_2$};
		\node[state, below right = of q1] (q3) {$q_3$};
		\node[state, below = of q2, yshift=.1cm] (q5) {$q_5$};
		\path (q0) edge node[above] {$\gint[client][Service][{\amsg[premium]}]$} (q1);
		\path (q0) edge node[above] {$\gint[client][Service][{\amsg[basic]}]$} (q2);
		\path (q2) edge node[above] {$\gint[client][Service][{\amsg[upgrade]}]$} (q1);
		\path (q1) edge node[below]{$\gint[client][bank][{\amsg[payPremium]}]$} (q3);
		\path (q2) edge node[left,rotate=90] {$\gint[client][Service][{\amsg[confirm]}]$} (q5);
		\path (q3) edge node[below] {$\gint[bank][Service][{\amsg[transferPremium]}]$} (q5);
	 \end{tikzpicture}\]
  \caption{Non-well-structured choreography}\label{fig:nonstru}
\end{figure}
Indeed, a distinguished feature of c-automata is that they admit
\emph{non-well-structured} interactions.
Let us explain this with the c-automaton in \cref{fig:nonstru},
modelling a choreography where a \ptp[client]\ registers to a
\ptp[service]\ according to two options.
If \ptp[client]\ opts for the basic level, then no payment is due,
while the premium option requires a \ptp[bank]\ payment.
Thus, we have a choice at $q_0$ between the $\msg[basic]{}$ and
$\msg[premium]{}$ service levels.
Then, in state $q_2$ of \cref{fig:nonstru}, \ptp[client]\ either
$\msg[confirm]$s the choice or decides to $\msg[upgrade]$.
(Selective participation is required since the bank only
acts in the \quo{left} run.)
In a structured model, the \quo{left} and the \quo{right} runs from
$q_0$ to $q_5$ must be different branches of a choice.
But those models cannot encode the $\msg[upgrade]{}$ transition that
intuitively allows one to move from one branch to the
other, before the end of the choice construct.

\noindent
\textbf{Contribution and structure.}  We provide two main
contributions to the theory of c-automata, as well as an
implementation in the setting of TypeScript programming.

First, we extend c-automata with selective participation, which,
although natural as seen above, is actually forbidden in many
choreographic models (e.g.,~\cite{HYC08,ScalasY19,CarboneM13})
including c-automata~\cite{BarbaneraLT20}.
For instance, we will use the OLW protocol, where \ptp[vendor]'s
involvement occurs only on successful authentication, as our running example.
  Our second contribution is the definition of \emph{asserted
	 c-automata}, that is a design-by-contract framework for
  c-automata.
  More precisely, we equip transitions with assertions constraining
  the exchanged messages, allowing one to specify such policies.
  For example, we can specify that the authentication of OLW
  \ptp[customer]{} can fail at most $3$ times.
  At a glance, asserted c-automata mimick the constructions introduced
  in~\cite{bhty10}.
  However, the generalisation of c-automata to selective participation
  (not featured in~\cite{bhty10}) and the greater flexibility
  introduced by non well-structured interactions require to address
  non-trivial technical challenges that we discuss in
  \cref{sec:achor}.

  %
  The last contribution is a toolchain, dubbed \tool{}, based on the
  theory of c-automata with selective participation developed in this
  paper.
  More precisely, \tool{} allows one to specify a protocol using the
  Scribble framework~\cite{scribble-paper,FeatherweightScribble,nuscr}
  and to check its well-formedness relying on our theory.
%
Finally, \tool{} generates TypeScript APIs to implement the roles of
the original protocol.
To the best of our knowledge, \tool{} is the first toolchain that
integrates Scribble
with the flexibility of the theory of c-automata.

Our paper is structured as follows. \cref{sec:back} introduces
notions on finite state automata, and in particular on
communicating finite state machines, to model
participants, and on c-automata.

\cref{sec:theory} develops the theory of c-automata. The main novelty
w.r.t.~\cite{BarbaneraLT20} is to allow for selective
participation. The resulting framework is more flexible with respect
to~\cite{BarbaneraLT20}, e.g., it allows one to prove that the OLW
protocol can be faithfully projected. Even in this more general
setting we can prove standard results: the implementation has the same
behaviour as the original specification
(\cref{th:projectionCorrectness}) and is free from deadlocks
(\cref{thm:df}).
Also, when focusing on one of the participants the projected system is
lock free (\cref{thm:lf}).

\cref{sec:achor} develops our second contribution, namely
design-by-contract in the setting of c-automata. More precisely,
c-automata are extended with assertions (\cref{def:aca}) and the
related theory is extended accordingly. Also in this setting the
implemented system faithfully executes its specification
(\cref{{th:acprojectionCorrectness}}) and it is deadlock free
(\cref{thm:ac-df}).

\cref{sec:apply} presents \tool{},
a novel, full toolchain---from the
  Scribble \cite{scribble-paper,FeatherweightScribble,nuscr} 
specification of the communication protocol, to the generation of APIs---providing support for
distributed web development in TypeScript
and relying on flexible c-automata with selective participation. 

Finally, \cref{sec:related} discusses related work, while
\cref{sec:conc} draws some conclusions,
and sketches future directions.
\iftr A possible extension of our implementation is discussed in the appendix.
\else Proofs and auxiliary material can be found on the full version of the paper~\cite{glsty22TR}.
\fi


\section{Choreography Automata and Communicating Systems}\label{sec:back}
This section recalls basic notions about automata in general and
about choreography automata (c-automata)~\cite{BarbaneraLT20} and
systems of Communicating Finite State Machines (CFSMs)~\cite{bz83} in particular.
%
%
Following~\cite{BarbaneraLT20}, global views, rendered as c-automata,
are projected into systems of local descriptions modelled as
CFSMs.
%
We start by surveying finite-state automata (FSA).
\begin{definition}[FSA]\label{def:fsa}
  A \emph{labelled transition system} (LTS) is a tuple
  $\autom{Q}{q_0}{\lset}{\tset}$ where
  \begin{itemize}
  \item $Q$ is a set of states (ranged over by $s,q,\ldots$) and
	 $q_0 \in Q$ is the \emph{initial state};
  \item $\lset$ is a finite set of labels (ranged over by
	 $\ell,\ldots$);
  \item $\tset \subseteq Q \times (\lset \cup \set \emptyword) \times Q$
	 is a set of transitions where $\emptyword \not\in \lset$ is a
	 distinguished label.
  \end{itemize}
  A \emph{finite-state automaton} (FSA) is an LTS whose set of states
  is finite.
\end{definition}
When the LTS $A = \autom{Q}{q_0}{\lset}{\tset}$ is understood
we use the usual notations $s_1 \arro{\ell} s_2$ for the transition
$(s_1,\ell,s_2) \in \tset$ and $s_1 \arro{} s_2$ when there
exists $\ell$ such that $s_1 \arro{\ell} s_2$, as well as
$\arro{}^\star$ for the reflexive and transitive closure of $\arro{}$.
We denote as $\transset{\chora}{q}$ the set of transitions from $q$ in
$A$.
We occasionally write $q \in A$ and $(q, \aint, q') \in A$ instead of,
respectively, $q \in Q$ and $(q, \aint, q') \in \tset$, and likewise
for $\_\subseteq\_$.
%
%
We recall standard notions on LTSs.
\begin{definition}[Traces and trace equivalence]\label{def:traces}
  A \emph{run} of an LTS $A = \conf{\sset, s_0, \lset, \tset}$ is a
  (possibly empty) finite or infinite sequence
  $\arun = (s_i \arro{\ell_i} s_{i+1})_{0 \leq i \leq n}$ of consecutive
  transitions starting at $s_0$ (assume $n = \infty$ if the run is
  infinite).
  The \emph{trace} (or \emph{word}) of $\pi$ is the concatenation of
  the labels $\traceof[\arun]$ of the run $\pi$, namely
  $\traceof[\arun] = \ell_0 \cdot \ell_1 \cdots \ell_n$.
  As usual, $\emptyword$ denotes the identity element of concatenation
  and the trace of an empty run is $\emptyword$.
  Function $\traceof[\cdot]$ extends homomorphically to sets of runs.
  Also, $s$-runs and $s$-traces of $A$ are, respectively, runs and
  traces of ${\conf{\sset, s, \lset, \tset}}$.
  The \emph{language of $A$} is
  $\lang[A] = \set{\traceof[\arun] \sst \arun \text{ is a run of }
	 A}$; \emph{$A$ accepts $\aword$} if $\aword \in \lang[A]$ and
  \emph{$A$ accepts $\aword$ from $s$} if
  $\aword \in \lang[\conf{\sset, s, \lset, \tset}]$.
  LTSs $A$ and $B$ are \emph{trace equivalent} if
  $\lang[A] = \lang[B]$.
\end{definition}

Bisimilarity~\cite{SangiorgiBisim} is an equivalence relation on LTSs
simpler to prove than trace equivalence which is implied by
bisimilarity, and coincides with it for deterministic LTSs.
\begin{definition}[Bisimulation]\label{def:bisim}
  Let $A = \conf{\sset_A, s_{0A}, \lset, \tset_A}$ and
  $B = \conf{\sset_B, s_{0B}, \lset, \tset_B}$ be two LTSs.
  A relation
  $\rcal \subseteq (\sset_A \times \sset_B) \cup (\sset_B \times
  \sset_A)$ is a \emph{(strong) bisimulation} if it is symmetric,
  $(s_{0A},s_{0B}) \in \rcal$, and for every pair of states
  $(p,q) \in \rcal$ and all labels $\ell$:
  \begin{align*}
	 \text{if } p \arro \ell p' \text{ then there is }
	 q \arro \ell q' \text{ such that } (p',q') \in \rcal
  \end{align*}
	 Relation $\rcal$ is a \emph{weak} bisimulation  if it is symmetric,
  $(s_{0A},s_{0B}) \in \rcal$, and for every pair of states
  $(p,q) \in \rcal$ and all labels $\ell$:
  \begin{itemize}
  \item if $p \arro \ell p'$ with $\ell \neq \emptyword$ then there is a run
	 $q \arro{\emptyword}^\star \arro \ell \arro{\emptyword}^\star
	 q'$ such that $(p',q') \in \rcal$ and
	\item if $p \arro \emptyword p'$ then there is a
	  run $q \arro{\emptyword}^\star q'$
          such that $(p',q') \in \rcal$.
  \end{itemize}
\end{definition}
If two LTSs are bisimilar then they are
also trace equivalent.

A main role in our models is played by \emph{interactions} built on
the alphabet:
\begin{align*}
  \lint \mmdef & \Set{\gint \sst \p \neq \q \in \ptpuniv
				\qand \amsg \in \msgset}
  &&
	  \text{ranged over by lowercase Greek letters}
\end{align*}
where $\ptpuniv$ and $\msgset$ are, respectively, sets of participants
and of messages.
We assume $\ptpuniv \cap \msgset = \emptyset$.
An interaction $\gint$ specifies that participant \p sends a message
(of type) $\amsg$ to participant \q and participant \q receives
$\amsg$.
Hence, by construction, each send is paired with a unique receive and
vice versa.
In most choreographic models, this forbids to specify message losses, races, and deadlocks.
%
  Adopting the terminology of the session type
  community (see, e.g.,~\cite{survey}),
  \begin{itemize}
  \item with \emph{message loss} we mean a send that cannot be matched
	 by a receive; this cannot happen in interactions since $\gint$
	 specifies both the send and the receive together;
	 \item with \emph{race} we mean a configuration where a receiver non-deterministically interacts with either of two senders (or a sender with either of two receivers),
		depending
	 on the relative speed of their execution; this
	 cannot happen since an interaction specifies which send is
	 supposed to interact with which receive and vice versa (notably,
	 concurrency can take place without message races, e.g., if
	 participant $\p$ sends to participant $\q$ and at the same time
	 $\ptp[C]$ sends to $\ptp[D]$ there is no race);
  \item with \emph{deadlock} we mean a configuration where two or more
	 participants are blocked waiting for one another
	 forming cyclic dependencies (e.g., \p is waiting for \q which
	 waits for \ptp[c], which in turns waits for \p); this cannot
	 happen either since an interaction specifies which participant has
	 to send and which one has to receive.
	 \end{itemize}
	 All these properties hold by construction in most choreographic
	 models. However, care is needed to ensure that these properties
	 are preserved when moving from the choreographic specification to
	 a distributed implementation. Such analysis has been performed for
	 many choreographic models in the literature (see~\cite{survey}).
\begin{definition}[Choreography automata]\label{def:chorautomata}
  A \emph{choreography automaton} (c-automaton) is an
  FSA on the alphabet $\lint$.
  Elements of $\lint^\star \cup \lint^\omega$ are \emph{choreography
	 words}, subsets of $\lint^\star \cup \lint^\omega$ are
  \emph{choreography languages}.
\end{definition}

The set of participants of a c-automaton is finite; we denote with
$\ptpset_{\chora}$ (or simply $\ptpset$ if $\chora$ is understood) the
set of participants of c-automaton $\chora$.
Given $\gint \in \lint$, we define $\ptpof[\gint] \mmdef \set{\p,\q}$
and extend it homorphically to (sets of) transitions.
We say that $\aint,\aint[b] \in \lint$ are \emph{independent}, written
$\aint \parallel \aint[b]$, if
$\ptpof[\aint] \cap \ptpof[{\aint[b]}] = \emptyset$.

\begin{example}[OLW's c-automaton]\label{ex:OLWca}
The c-automaton
    \[\begin{tikzpicture}[node distance=1cm and 3cm]
      \tikzstyle{every state}=[cnode]
      \tikzstyle{every edge}=[carrow]
      \node[state, initial, initial text={\aCM = }] (q0) {$q_0$};
		\node[state, below = 2cm of q0] (q1) {$q_1$};
		\node[state, right = of q1] (q2) {$q_2$};
		\node[state, right = of q2] (q3) {$q_3$};
		\node[state, right = of q0] (q4) {$q_4$};
		\node[state, right = of q4] (q4') {$q_5$};
		\node[state, right = of q4'] (q5) {$q_6$};
		\node[state, right = of q3] (q6) {$q_7$};
		\node[state, below left = of q5,yshift=.7cm,xshift=.5cm] (q7) {$q_8$};
		\path (q0) edge node[below] {$\gint[C][W][{\amsg[login]}]$} (q1);
		\path (q1) edge node[above,xshift=-.5cm] {$\gint[C][W][{\amsg[pin]}]$} (q2);
		\path (q2) edge[bend left] node[above,yshift=.3cm] {$\gint[W][C][{\amsg[retry]}]$} (q0);
		\path (q2) edge node[above] {$\gint[W][C][{\amsg[loginDenied]}]$} (q3);
		\path (q2) edge node[above] {$\gint[W][C][{\amsg[loginOK]}]$} (q4);
		\path (q4) edge node[above] {$\gint[W][V][{\amsg[loginOK]}]$} (q4');
		\path (q4') edge node[above] {$\gint[V][C][{\amsg[request]}]$} (q5);
		\path (q5) edge node[above] {$\gint[C][W][{\amsg[authorise]}]$} (q6);
		\path (q5) edge node[below] {$\gint[C][W][{\amsg[reject]}]$} (q7);
		\path (q6) edge node[below] {$\gint[C][V][{\amsg[pay]}]$} (q3);
		\path (q7) edge node[above] {$\gint[C][V][{\amsg[reject]}]$} (q3);
	 \end{tikzpicture}
    \]
	 models the OLW example in \cref{sec:intro}.
%
	 \finex
\end{example}

We now survey communicating systems~\cite{bz83}, our formal model of local views.

\begin{definition}[Communicating system]
  A \emph{communicating finite-state machine} (CFSM) is an FSA on the
  set
    $\lact  \mmdef \{\aout, \ain \mid \p, \q \in \ptpuniv \qand  \amsg \in \msgset\}$
  of \emph{actions}.
  
  Action $\aout$ is the send of message $\amsg$ from \p to \q, while
  action $\ain$ is the corresponding receive.
  The \emph{subjects} of an output and an input action, say $\aout$
  and $\ain$, are respectively \p and \q.
  A CFSM is \emph{\p-local} if all its transitions have labels with
  subject \p.  A \emph{(communicating) system} is a map
  $\aCS = (\aCM_{\p})_{\p \in \ptpset}$ assigning a $\p$-local CFSM
  $\aCM_{\p}$ to each participant $\p \in \ptpset$.
  We require that $\ptpset \subseteq \mathfrak{P}$ is finite and that
  any participant occurring in a transition of $\aCM_\p$ is in
  $\ptpset$.
\end{definition}

We now introduce the notion of projection from c-automata to systems
of CFSMs. 
Intuitively, projection builds a system aimed at implementing the projected c-automaton. 
Similar notions in the literature often take the name of
endpoint projection (see, e.g.,~\cite{HondaYC16,CarboneM13}).

\begin{definition}[Automata projection]\label{def:projection}
  The \emph{projection $\proj{\aint}{p}$ of an interaction
  $\aint$ on $\p \in \ptpuniv$} is
  \[
    \proj{(\gint)}{p} = \aout,
	 \hfill
	 \proj{(\gint[q][p])} p = \ain[q][p], \qqand
	 \hfill
	 \proj{\aint} p = \emptyword \text{ for any other label $\aint$}
  \]
  Function $\proj\_ p$ extends homomorphically to transitions, runs,
  and choreography words.

  The \emph{projection} $\proj{{\chora}}{p}$ of a c-automaton
  $\chora = \conf{\sset, q_0, \lint, \tset}$ on a participant
  $\p \in \ptpset$ is obtained by determinising and minimising up-to
  language equivalence the \emph{intermediate} CFSM
  \[
    A_\p = \conf{\sset, q_0, \lact, \Set{(q \arro{\proj{\aint}{p}} q') \sst q \arro{\aint} q'  \in \tset}}
  \]
  The \emph{projection of $\chora$}, written $\proj{\chora}{}$, is the
  communicating system
  $(\proj{\chora}{\p})_{\p \in \ptpset}$.
\end{definition}

\begin{example}[Projecting OLW]\label{ex:OLWproj}
  We instantiate here projection on the c-automaton for
  the OLW protocol described in \cref{ex:OLWca}.
  In particular, the intermediate CFSM $A_{\ptp[v]}$ is
  \[\begin{tikzpicture}[node distance=1cm and 3cm]
		\tikzstyle{every state}=[cnode]
		\tikzstyle{every edge}=[carrow]
      \node[state, initial, initial text={$A_{\ptp[v]}$ = }] (q0) {$q_0$};
		\node[state, below = 1.5cm of q0] (q1) {$q_1$};
		\node[state, right = of q1] (q2) {$q_2$};
		\node[state, right = of q2] (q3) {$q_3$};
		\node[state, right = of q0] (q4) {$q_4$};
		\node[state, right = of q4] (q4') {$q_5$};
		\node[state, right = of q4'] (q5) {$q_6$};
		\node[state, right = 1cm of q5] (q6) {$q_7$};
		\node[state, below left = of q5,xshift=.5cm,yshift=.7cm] (q7) {$q_8$};
		\path (q0) edge node[below] {$\emptyword$} (q1);
		\path (q1) edge node[below] {$\emptyword$} (q2);
		\path (q2) edge[bend left] node[above,yshift=.3cm] {$\emptyword$} (q0);
		\path (q2) edge node[below] {$\emptyword$} (q3);
		\path (q2) edge node[above] {$\emptyword$} (q4);
		\path (q4) edge node[above] {$\ain[W][V][{\amsg[loginOK]}]$} (q4');
		\path (q4') edge node[above] {$\aout[V][C][{\amsg[request]}]$} (q5);
		\path (q5) edge node[above] {$\emptyword$} (q6);
		\path (q5) edge node[below] {$\emptyword$} (q7);
		\path (q6) edge[bend left=5] node[above] {$\ain[C][V][{\amsg[pay]}]$} (q3);
		\path (q7) edge node[above] {$\ain[C][V][{\amsg[reject]}]$} (q3);
	 \end{tikzpicture}
  \]
  the determinisation of which yields the following CFSM $\proj{\chora}{{\ptp[v]}}$ for the vendor
  participant
  \[\begin{tikzpicture}[node distance=1cm and 3cm]
		\tikzstyle{every state}=[cnode]
		\tikzstyle{every edge}=[carrow]
      \node[state, initial, initial text={$\proj{\chora}{{\ptp[v]}}$ = }] (q0) {$Q_0$};
		\node[state, right = of q0] (q5) {$Q_1$};
		\node[state, right = of q5] (q5') {$Q_2$};
		\node[state, right = of q5'] (q3) {$Q_3$};
		\path (q0) edge node[above] {$\ain[W][V][loginOK]$} (q5);
		\path (q5) edge node[above] {$\aout[V][C][{\amsg[request]}]$} (q5');
		\path (q5') edge[bend left=7] node[above] {$\ain[C][V][{\amsg[pay]}]$} (q3);
		\path (q5') edge[bend right=7] node[below] {$\ain[C][V][{\amsg[reject]}]$} (q3);
	 \end{tikzpicture}
  \]
  Noteworthy, due to determinisation, states of the projection
  correspond to (not necessarily disjoint) sets of states of the
  starting c-automaton.   Indeed, in $\proj{\chora}{{\ptp[v]}}$ we have $Q_0 = \set{q_0,q_1,q_2,q_3,q_4}$, $Q_1 = \set{q_5}$, $Q_2 = \set{q_6,q_7,q_8}$, and $Q_2 = \set{q_3}$.
  \finex
\end{example}

We present below the semantics of communicating systems.
We consider a synchronous semantics. Essentially, a system can execute
an interaction $\gint$ if two of its participants can provide
complementary actions $\aout$ and $\ain$ (while the others do not
move), and can take an $\epsilon$ action if one of its participant can
do it (while the others do not move).
\begin{definition}[Semantics of communicating systems]\label{def:syncSem}
  Let $\aCS = (\aCM_{\p})_{\p \in \ptpset}$ be a communicating system
  where $\aCM_{\p} = \conf{\sset_{\p}, q_{0\p}, \lact, \tset_{\p}}$
  for each participant $\p \in \ptpset$.

  A \emph{configuration of } $\aCS$ is a map
  $\aConf = (q_{\p})_{\p \in \ptpset}$ assigning a \emph{local state}
  $q_{\p}\in \sset_{\p}$ to each $\p \in \ptpset$.
  %
  %
  The \emph{semantics} of $\,\aCS$ is the c-automaton
  $\ssem{\aCS}=\conf{\sset,\aConf_0,\lint,\tset}$ where
  \begin{itemize}
  \item $\sset$ is the set of configurations of $\aCS$, as
	 defined above, and $\aConf_0: \p \mapsto q_{0\p}$ for each
	 $\p \in \ptpset$ is the \emph{initial} configuration of $\sset$
       \item $\tset$ is the set of transitions
         \begin{itemize}
	 \item $\aConf_1 \arro{\gint} \aConf_2$ such that
           \begin{itemize}
    \item $\aConf_1(\p) \arro{\aout} \aConf_2(\p) \in \tset_{\p}$ and
      $\aConf_1(\q) \arro{\ain} \aConf_2(\q) \in \tset_{\q}$,
      and
    \item for all $\ptp[x] \in \ptpset \setminus \set{\p,\q}$, $\aConf_1(\ptp[x]) =
      \aConf_2(\ptp[x])$
           \end{itemize}
  \item $\aConf_1 \arro[]{\emptyword} \aConf_2$ such that $\aConf_1(\p)
	 \arro{\emptyword} \aConf_2(\p) \in \tset_{\p}$, and for all
	 $\ptp[x] \in \ptpset \setminus \set \p$, $\aConf_1(\ptp[x]) =
	 \aConf_2(\ptp[x])$.
         \end{itemize}
  \end{itemize}
\end{definition}


\section{Flexible Choreography Automata}\label{sec:theory}
We now introduce a theory of c-automata enabling faithful
realisations, 
which is formalised as language equivalence between
  a c-automaton and the semantics of its projection as proved
  in~\cref{th:projectionCorrectness}. However, not all c-automata can
  be faithfully realised, hence we need to restrict to
  \emph{well-formed} c-automata. Well-formedness is defined as 
    the
conjunction of two properties, \emph{well-sequencedness} and
\emph{well-branchedness}.
Both these properties are inspired from~\cite{BarbaneraLT20}.
However, well-branchedness is generalised to allow participants to act
on some of the scenarios specified by the c-automaton only upon
request from other participants.
We call this feature \emph{selective participation}, since a
participant may act on a branch only if 
selected to be involved 
by
some other participant.
This is disallowed in many choreographic formalisms
(e.g.,~\cite{HYC08,ScalasY19,CarboneM13}), including choreography
automata~\cite{BarbaneraLT20}.
On the other hand, well-sequencedness is strengthened since the
formulation in~\cite{BarbaneraLT20} is not enough to ensure
faithful realisations.
%
We start by defining concurrent transitions, exploited in the
definition of well-sequencedness.
\begin{definition}[Concurrent transitions]\label{def:concurrent}
  Two
consecutive transitions $q \arro{\aint} q' \arro{{\aint[b]}} q''$ are \emph{concurrent} if there is $q'''$ such that
	 $q \arro{\aint[b]} q''' \arro{\aint} q''$.
\end{definition}
Essentially, two transitions are concurrent if they give rise to a
commuting diamond.  

\begin{definition}[Well-sequencedness]\label{def:wellseq}
A c-automaton $\chora$ is \emph{well-sequenced} if for each two
consecutive transitions $q \arro{\aint} q' \arro{{\aint[b]}} q''$
either
  \begin{enumerate}[(a)]
  \item\label{seq:dep} $\aint \not\parallel \aint[b]$, i.e., $\aint$ and
	 $ \aint[b]$ are not independent (hence
	 $\ptpof[\aint] \cap \ptpof[{\aint[b]}] \neq
	 \emptyset$), or
  \item\label{seq:indep} there is $q'''$ such that
    $q \arro{\aint[b]} q''' \arro{\aint} q''$ 
    (i.e., the transitions are concurrent); 
    furthermore
	 for each transition $q''' \arro{\aint[g]} q''''$,
	 $\aint[g] \parallel \aint$ and $\aint[g] \parallel
	 \aint[b]$.  \end{enumerate}
\end{definition}
Intuitively, well-sequencedness forces the explicit representation of
concurrency among interactions with disjoint sets of participants as
commuting diamonds.
The second part of clause \eqref{seq:indep} in \cref{def:wellseq}
rules out the entanglement of choices with commuting diamonds, while
enabling to compose an arbitrary number of independent actions.
This condition, absent in~\cite{BarbaneraLT20}, does not allow them to
enforce faithful realisations as shown in the next example.
\begin{example}
  Consider the c-automaton below.
  \[
    \begin{tikzpicture}[node distance=2cm]
      \tikzstyle{every state}=[cnode]
      \tikzstyle{every edge}=[carrow]
      \node[state, initial, initial text={$\chora=$}] (0) {$q_0$};
      \node[state, right= of 0, yshift=.7cm] (1) {$q_1$};
      \node[state, right= of 0, yshift=-.7cm] (2) {$q_2$};
      \node[state, right= 5cm of 0] (3) {$q_3$};
      \node[state, right= 5cm of 1] (4) {$q_4$};
		\node[state, right= 5cm of 2] (5) {$q_5$};
		\node[state, right= 5cm of 3] (6) {$q_6$};
      \path
      (0) edge node[above] {$\gint[A][B][{\amsg[m]}]$} (1)
      (1) edge node[above, near end] {$\gint[C][D][{\amsg[n]}]$} (3)
      (1) edge node[above] {$\gint[C][B][{\amsg[r]}]$} (4)
      (0) edge node[below] {$\gint[C][D][{\amsg[n]}]$} (2)
      (2) edge node[below, near end] {$\gint[A][B][{\amsg[m]}]$} (3)
      (2) edge node[below] {$\gint[C][B][{\amsg[r]}]$} (5)
      (3) edge node[below, near start] {$\gint[C][B][{\amsg[r]}]$} (6)
      (4) edge node[above] {$\gint[C][D][{\amsg[n]}]$} (6)
      (5) edge node[below] {$\gint[A][B][{\amsg[m]}]$} (6)
      ;
    \end{tikzpicture}
    \]
    In $\proj{\chora}{}$, participant $\ptp[C]$ can immediately send
    $\amsg[r]$ to $\ptp[B]$, since it is not involved in transition $q_0
    \arro{\gint[A][B][{\amsg[m]}]} q_1$. Similarly, $\ptp[B]$ can
    immediately receive $\amsg[r]$ from $\ptp[C]$, since it is not involved in
    transition $q_0 \arro{\gint[C][D][{\amsg[n]}]} q_2$. Thus, a
    transition with label $\gint[C][B][{\amsg[r]}]$ is enabled in the
    initial configuration of the semantics of $\proj{\chora}{}$. However, no transition with the same label is enabled in the initial state of $\chora$, hence the implementation is not faithful.\finex
\end{example}

The following auxiliary concepts are instrumental in the definition of
well-branchedness (cf.\ \cref{def:wb}).
Given a word $\aword$, $\pref[\aword]$ denotes the set of its
prefixes.

\begin{definition}[Full awareness]\label{def:fullaware}
  Let $(\arun_1,\arun_2)$ be a pair of $q$-runs of a c-automaton
  $\chora$.
  Participant $\p \in \ptpof[\arun_1] \cap \ptpof[\arun_2]$ is
  \emph{fully aware} of $(\arun_1,\arun_2)$ if there are
  $\aint_1 \neq \aint_2 \in \lint$ such that
  $\p \in \ptpof[\aint_1] \cap \ptpof[\aint_2]$ and 
  \begin{enumerate}
  \item\label{it:diff} either $\aint_h$ is the first interaction in $\lang[\arun_h]$
    for $h=1,2$
  \item\label{it:fa} or for $h \in \set{1,2}$ there is a proper prefix $\hat\arun_i$ of
	 $\arun_i$ such that $\traceof[{\proj{\hat\arun_1} p}] =
	 \traceof[{\proj{\hat\arun_2} p}]$, the partners of
	 \p in $\aint_h$ are fully aware of $(\hat \arun_1, \hat
	 \arun_2)$, $\traceof[\hat \arun_h] \aint_h \in
	 \pref[{\traceof[\arun_h]}]$,  and $\aint_h$ does not occur on $\arun_{3-h}$.
  \end{enumerate}
\end{definition}
Intuitively, a participant \p is fully aware of two
$q$-runs when able to ascertain which branch has been taken.
This happens either when \p itself chooses \eqref{it:diff}, or when
\p is informed of the choice by interacting with some other
participant already fully aware of the $q$-runs \eqref{it:fa}.
\begin{example}[Full awareness in OLW]\label{ex:fullaware}
  Let us consider the runs $\arun_1=q_2
  \arro{\gint[W][C][{\amsg[loginOK]}]} q_4
  \arro{\gint[w][v][{\amsg[loginOK]}]} q_5$ and $\arun_2=q_2
  \arro{\gint[W][C][{\amsg[loginDenied]}]}
  q_3$ of the OLW c-automaton $\aCM$ in \cref{ex:OLWca}.
  Both $\ptp[w]{}$ and
  $\ptp[c]{}$ are fully-aware of
  $(\arun_1,\arun_2)$ since they occur in the first interaction in
  both the runs (\cref{def:fullaware}\eqref{it:diff}).
  Participant $\ptp[v]{}$ is not fully-aware of
  $(\arun_1,\arun_2)$ since it occurs on $\arun_1$ only.

  Take now the runs $\arun_3=q_6 \arro{\gint[C][W][{\amsg[reject]}]}
  q_8 \arro{\gint[C][V][{\amsg[reject]}]} q_3$ and $\arun_4=q_6
  \arro{\gint[C][W][{\amsg[authorise]}]} q_7
  \arro{\gint[C][V][{\amsg[pay]}]}
  q_3$ in $\aCM$.
  As before, both participants $\ptp[w]{}$ and
  $\ptp[c]{}$ are fully-aware of
  $(\arun_3,\arun_4)$ since they occur in the first interaction in
  both the runs. Participant
  $\ptp[v]{}$ is fully-aware of
  $(\arun_3,\arun_4)$ as well, since its partner
  $\ptp[c]{}$ is fully-aware of $(q_6
  \arro{\gint[C][W][{\amsg[reject]}]} q_8,q_6
  \arro{\gint[C][W][{\amsg[authorise]}]} q_7)$.\finex
\end{example}

To establish well-branchedness of a c-automaton 
  we have to ensure
  that for each choice, namely for each state with (at least) two
  non-independent outgoing transitions, and each participant \p, if \p
  has to take different actions in the branches starting from the two
  transitions, then \p is fully-aware of the taken branch. In
  principle, such a condition should be checked on all pairs of
  coinitial paths. However, this would lead to redundant checks, hence
  below we borrow from~\cite{BarbaneraLT20} the notion of $q$-spans,
  namely pairs of paths from $q$ on which we will perform the
  check. Essentially, 
  we have to handle choices with loops on some
branches and we have to consider \quo{long-enough} branches.
  More precisely, a $q$-run in a c-automaton $\chora$ is a
  \emph{pre-candidate $q$-branch} if each of its cycles has at
  most one occurrence within the whole run (i.e., if $\arun'$ is a
  $q'$-run included in $\arun$ and ending in $q'$, then $\arun'$ has
  exactly one occurrence in $\arun$); a \emph{candidate $q$-branch} is
  a maximal pre-candidate $q$-branch with respect to the prefix order.
\begin{definition}[$q$-span]\label{def:span}
  A pair $(\arun,\arun')$ of pre-candidate $q$-branches of $\chora$ is
  a \emph{$q$-span} if 
  \begin{enumerate}
  \item\label{span:cofinal} 
	 either $\arun$ and $\arun'$ are cofinal, with no common node but $q$ and the last one;
  \item\label{span:maximal} or $\arun$ and $\arun'$ are candidate $q$-branches with no common
	 node but $q$;
  \item\label{span:loop} or $\arun$ and $\arun'$ are a candidate $q$-branch and a loop on $q$
	 with no other common nodes.
  \end{enumerate}
\end{definition}

We can now introduce well-branchedness.

\begin{definition}[Well-branchedness]\label{def:wb}
  A c-automaton $\chora$ is \emph{well-branched} if it is
  deterministic and for each of its states $q$
  there is a
  partition $T_1,\dots,T_k$ of $\transset{\chora}{q}$ such that
  \begin{itemize}
  \item for all $1 \leq i \neq j \leq k$,
	 $\ptpof[T_i] \cap \ptpof[T_j] = \emptyset$ and for each
	 $q \arro{\aint_i} q_i \in T_i$, $q \arro{\aint_j} q_j \in T_j$
	 there exists $q'$ such that $q_i \arro{\aint_j} q'$ and
	 $q_j \arro{\aint_i} q'$
  \item for all $1 \leq i \leq k$,
	 $\bigcap_{t \in T_i} \ptpof[t] \neq \emptyset$ and for all
	 $\p \in \ptpof[\chora] \setminus \bigcap_{t \in T_i} \ptpof[t]$
	 and $q$-span $(\arun_1,\arun_2)$ starting from transitions in
	 $T_i$, if $\proj{\arun_1}{p} \neq \proj{\arun_2}{p}$ then either
	 \p is fully aware of $(\arun_1,\arun_2)$ or there is
	 $i \in \set{1,2}$ such that $\p \not\in \ptpof[\arun_i]$ and
	 \begin{enumerate}
	 \item\label{wb:aware} the first transition in $\arun_{3-i}$
		involving \p is with a fully aware participant of
		$(\arun_1,\arun_2)$ and
	 \item\label{wb:continuations} for all runs $\arun'$ such that $\arun_i\arun'$ is a
		candidate $q$-branch of $\chora$ the first transition in
		$\arun'$ involving \p is with a participant which is fully aware
		of $(\arun_1,\arun_2)$.
	 \end{enumerate}
  \end{itemize}
\end{definition}
Intuitively, a c-automaton is well-branched if for any state with
multiple
outgoing
transitions (both clauses in \cref{def:wb} trivially
hold when $\transset{\chora}q$ is empty or a singleton)
, we can group them in equivalence classes.
Transitions in different classes are concurrent, hence they give rise
to commuting diamonds.
Transitions in the same class are choices: one participant, belonging
to all the (initial) transitions, makes the choice, and any other
participant \p is either fully aware of the $q$-runs or
it is inactive in some branch $\arun_i$ (condition
$\p \not\in \ptpof[\arun_i]$). In the last case, \p has to interact
with a fully aware partner (i) on each continuation $\arun'$ (if any)
of $\arun_i$ as well as (ii) inside the other branch, $\arun_{3-i}$.
Intuitively, (i) is necessary to make \p aware of when the
choice is fully completed and (ii) on whether the branch on which \p needs
to act has been taken.
%
%
  At the price of increasing the technical complexity, the second
  clause in \cref{def:wb} can be relaxed. Indeed, right now it
  requires a participant $\p$, occurring in one branch only, to
  interact (both in the branch where it occurs and in the
  continuations after the merge of the two branches) with a
  fully-aware participant. We could instead allow $\p$ to interact
  with a chain of other participants occurring only in the same
  branch, and such that the last participant in the chain interacts
  with a fully-aware participant.  
%

\begin{example}[OLW is well-branched]
  Let us show that the c-automaton in \cref{ex:OLWca} is
  well-branched.
  The only states for which well-branchedness is not trivial are $q_2$
  and $q_6$ (the others have at most one outgoing transition).
	   %
	 In both the cases we have a single equivalence class where
	 $\ptp[w]$ and $\ptp[c]$ are in all the first transitions; hence
	 they are both fully-aware in all the possible spans.
  Let us check the condition for $\ptp[v]$.
	 Let us consider $q_6$. There is one $q_6$-span, with branches with
	 states $q_6,q_8,q_3$ and $q_6,q_7,q_3$, which fits case
	 \ref{span:cofinal} in \cref{def:span}.
  %
  As discussed in \cref{ex:fullaware}, in this $q_6$-span $\ptp[v]{}$
  is fully-aware, hence the condition is satisfied. 
    Let us now
	 consider $q_2$. Here we have a loop with states $q_2$, $q_0$,
	 $q_1$, $q_2$, a candidate $q_2$-branch with states $q_2,q_3$, and
	 two candidate $q_2$-branches with a common prefix (states
	 $q_2,q_4,q_5,q_6$) and two continuations (states $q_6,q_8,q_3$ and
	 $q_6,q_7,q_3$). Any combination of the self-loop with the
	 candidate $q_2$-branches fit in case \ref{span:loop} in
	 \cref{def:span}, while the pairings of the first candidate
	 $q_2$-branch with any of the others fit in case \ref{span:cofinal}
	 in \cref{def:span}. 
           In the $q_2$-spans above $\ptp[v]$ occurs
           only in the one 
             with two continuations. 
  Since there it interacts with $\ptp[c]{}$ which is
  fully-aware, 
    condition \ref{wb:aware} in \cref{def:wb} holds. Condition \ref{wb:continuations} holds trivially,
    since 
    the branches join only in state $q_3$ which has no outgoing
  transitions.\finex
\end{example}

  \begin{example}[Non well-branched c-automata]\label{ex:nonwb}
    Consider the c-automaton below.
    \[
    \begin{tikzpicture}[node distance=3cm]
      \tikzstyle{every state}=[cnode]
      \tikzstyle{every edge}=[carrow]
      \node[state, initial, initial text={$\chora=$}] (0) {$q_0$};
      \node[state, above right of=0, yshift=-1.8cm] (1) {$q_1$};
      \node[state, right of=1] (2) {$q_2$};
      \node[state, right of=2] (3) {$q_3$};
      \node[state, below right of=0, yshift=1.8cm] (4) {$q_4$};
		\node[state, right of=4] (5) {$q_5$};
		\node[state, right of=5] (6) {$q_6$};
      \path
      (0) edge node[above] {$\gint[A][B][{\amsg[l]}]$} (1)
      (1) edge node[above] {$\gint[B][C][{\amsg[n]}]$} (2)
      (2) edge node[above] {$\gint[C][D][{\amsg[l]}]$} (3)
      (0) edge node[below] {$\gint[A][B][{\amsg[r]}]$} (4)
      (4) edge node[below] {$\gint[B][C][{\amsg[n]}]$} (5)
      (5) edge node[below] {$\gint[C][D][{\amsg[r]}]$} (6)
      ;
    \end{tikzpicture}
    \]
    Here, $\ptp[C]$ is not fully-aware since it interacts with
	 $\ptp[B]$ (which is fully-aware) receiving the same message on
	 both the branches.
	 Hence, its first different interactions are with $\ptp[D]$, which
	 is not fully-aware. Indeed, $\ptp[D]$ gets different messages, but
	 from $\ptp[C]$ which is not fully aware either.  Thus, $\ptp[C]$
	 and $\ptp[D]$ can decide, e.g., to take the lower branch even if
	 $\ptp[A]$ and $\ptp[B]$ took the upper one, thus producing a trace
	 $\gint[A][B][{\amsg[l]}] \cdot \gint[B][C][{\amsg[n]}] \cdot
	 \gint[C][D][{\amsg[r]}]$ not part of the language of $\chora$.
	 \finex
  \end{example}

  \begin{example}[Non well-branchedness with selective participation]
    Consider the c-automaton:
    \[
    \begin{tikzpicture}[node distance=3cm]
      \tikzstyle{every state}=[cnode]
      \tikzstyle{every edge}=[carrow]
      \node[state, initial, initial text={$\chora=$}] (0) {$q_0$};
      \node[state, below right of=0, yshift = 1.5cm] (1) {$q_1$};
      \node[state, above right of=1, yshift = -1.5cm] (2) {$q_2$};
      \node[state, right of=2] (3) {$q_3$};
      \path
      (0) edge node[above] {$\gint[A][C][{\amsg[n]}]$} (2)
      (0) edge node[below] {$\gint[A][B][{\amsg[n]}]$} (1)
      (1) edge node[below] {$\gint[B][C][{\amsg[n]}]$} (2)
      (2) edge node[above] {$\gint[B][D][{\amsg[n]}]$} (3)
      ;
    \end{tikzpicture}
    \]
    Here, $\ptp[B]$ occurs in the bottom branch only, interacting with
	 $\ptp[A]$ which is fully-aware, as required. However, after the
	 merge of the two branches, $\ptp[B]$ interacts with $\ptp[D]$
	 which is not fully aware, thus violating condition
	 \ref{wb:continuations} in \cref{def:wb}. Indeed the interaction
	 $\gint[B][D][{\amsg[n]}]$ is enabled since the initial
	 configuration, against the prescription of $\chora$.\finex
  \end{example}

\begin{definition}[Well-formedness]\label{def:wf}
  A c-automaton $\chora$ is \emph{well-formed} if it is both
  well-sequenced and well-branched.
\end{definition}

Well-formed c-automata enjoy relevant
properties.
First, for each well-formed c-automaton the semantics of
the projection is bisimilar to the starting c-automaton.
\iftr
\begin{lemma}\label{lem:intfromruns}
  Let $\chora$ be a well-formed c-automaton and $A_{\p}$ and $A_{\q}$
  be the intermediate CFSM for two participants $\p$ and $\q$ of
  $\chora$ (cf. \cref{def:projection}).  If
  \begin{align*}
  q {\arro \emptyword}^\star q_{\p} \arro{\aout}{q'_{\p}} \text{ in } A_{\p}
  \qqand q {\arro \emptyword}^\star q_{\q} \arro{\ain}{q'_{\q}} \text{ in } A_{\q}
  \end{align*}
  then the state $q$ of $\chora$ has an outgoing transition with label
  $\gint$.
\end{lemma}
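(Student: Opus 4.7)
The plan is to proceed by induction on $n+m$, where $n$ is the length of the $\emptyword$-path $q {\arro \emptyword}^\star q_{\p}$ in $A_\p$ and $m$ is that of $q {\arro \emptyword}^\star q_{\q}$ in $A_\q$. Recall that by construction of the intermediate CFSMs, these $\emptyword$-paths correspond to paths in $\chora$ whose labels are interactions not involving $\p$ (respectively, $\q$). The base case $n=m=0$ is immediate, since then $q = q_\p = q_\q$ and the transition $q \arro{\aout} q'_\p$ in $A_\p$ arises from a $\chora$-transition $q \arro{\gint} q'_\p$.

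For the inductive step, assume WLOG $n \geq 1$ and let $q \arro{\aint_1} q_1$ be the first transition of the $\p$-path in $\chora$, so $\p \notin \ptpof[\aint_1]$. I would split on whether the $\q$-path also starts with $\aint_1$. If it does, then $\q \notin \ptpof[\aint_1]$ as well, hence $\aint_1 \parallel \gint$; the inductive hypothesis applied at $q_1$ with both paths shortened by one yields $q_1 \arro{\gint} \cdot$, and well-sequencedness (\cref{def:wellseq}) then produces the commuting diamond $q \arro{\gint} \cdot \arro{\aint_1} \cdot$, giving the required transition from $q$. If $m=0$, then $q = q_\q$ already has the desired outgoing transition, since $q \arro{\ain} q'_\q$ in $A_\q$ must come from $q \arro{\gint} q'_\q$ in $\chora$.

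The remaining, and harder, case is when $m \geq 1$ and the $\q$-path begins with a different transition $q \arro{\aint_1'} q_1'$. Here well-branchedness at $q$ (\cref{def:wb}) partitions $\transset{\chora}{q}$ and forces $\aint_1, \aint_1'$ to lie either in different classes or in the same class. In the former case, $\aint_1 \parallel \aint_1'$ and they close a commuting diamond $q \arro{\aint_1} q_1 \arro{\aint_1'} q_2$, $q \arro{\aint_1'} q_1' \arro{\aint_1} q_2$; since $\q \notin \ptpof[\aint_1]$, the $\q$-path can be re-routed to start with $\aint_1$ going through $q_1$, reducing to the previous case via the IH at $q_1$. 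In the same-class case, the common subjects of $T_i$ appear in every initial transition of the branch, and the full-awareness clauses of well-branchedness, combined with $\p, \q \notin \ptpof[T_i]$ (forced by $\p \notin \ptpof[\aint_1]$ and $\q \notin \ptpof[\aint_1']$, each participant being in at most one branch it acts in), let one identify a $\gint$-transition directly among the outgoing transitions of $q$.

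The principal obstacle is this last same-class sub-case: extracting from the full-awareness requirements of well-branchedness that a $\gint$-transition is already available at $q$ whenever $\p$- and $\q$-witnesses exist via distinct initial branches. The rest of the argument is a systematic commutation of independent actions using well-sequencedness, fed back into the induction.
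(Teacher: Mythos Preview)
Your inductive organisation is close in spirit to the paper's case analysis, but both your sub-cases in the ``different first transition'' situation contain real gaps.

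\textbf{Different-classes sub-case.} You assert $\q \notin \ptpof[\aint_1]$ without justification, and in fact it need not hold: from $\p \notin \ptpof[\aint_1]$, $\q \notin \ptpof[\aint_1']$, and $\ptpof[T_i]\cap\ptpof[T_j]=\emptyset$ you cannot conclude that $\q$ is absent from $\aint_1$. Take for instance $\aint_1 = \gint[a][q][\amsg[x]]$ and $\aint_1' = \gint[p][b][\amsg[y]]$ with $\ptp[a],\ptp[b]$ fresh: these sit in different classes yet $\q \in \ptpof[\aint_1]$. Without $\q \notin \ptpof[\aint_1]$ you lose both ingredients you rely on: (i) the re-routed first step $q \arro{\aint_1} q_1$ is no longer an $\emptyword$-step in $A_\q$, so you have no $\q$-path from $q_1$ to feed to the IH; and (ii) even if the IH gave you $q_1 \arro{\gint} \cdot$, well-sequencedness does \emph{not} produce a diamond at $q \arro{\aint_1} q_1 \arro{\gint} \cdot$ because $\aint_1 \not\parallel \gint$. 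Moreover, even granting $\q \notin \ptpof[\aint_1]$, ``re-routing the $\q$-path through $q_1$'' is not a single diamond: you would need to slide $\aint_1$ past every step $\aint_2',\aint_3',\ldots$ of the original $\q$-path, which requires independence at each stage --- that is not supplied by well-branchedness at $q$ alone.

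\textbf{Same-class sub-case.} Here your plan is pointed in the wrong direction. Well-branchedness does not manufacture a $\gint$-transition at $q$; it constrains the structure of spans. The paper's move is the opposite one: under the additional assumption that $\p$ (or $\q$) actually occurs on the \emph{other} run, it shows that $t_\p$ and $t_\q$ being in the same class would force $\p$ to be fully aware of the span (or to fit the selective-participation clause), which is impossible since $\p$'s first action on its own side is the final $\gint$ with partner $\q$, who is likewise not fully aware. Thus the same-class case is ruled out by contradiction. Your proposal misses the preliminary split ``does $\p$ occur on the $\q$-side run (and vice versa)?'': when neither crosses over, one of the two runs is already free of both $\p$ and $\q$, and $\gint$ commutes all the way back to $q$ by well-sequencedness alone; only when they do cross over does the well-branchedness contradiction argument kick in to force different classes.
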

\begin{proof}
  Let the two runs for \p and \q in the intermediate CFSMs be the
  projections of runs $\arun_\p$ and $\arun_\q$ in $\chora$.
  We have two cases, depending on whether $\arun_\p = \arun_\q$.

  If $\arun_\p = \arun_\q$ then the last transition of $\arun_\p$
  is concurrent to all the previous ones.
  Indeed, the previous transitions are projected to $\emptyword$ both on
  $\p$ and on $\q$, hence neither $\p$ nor $\q$ can occur in the
  label.
  Thus, by well-sequencedness, the state $q$ has a transition with
  label $\gint$.

  Assume $\arun_\p \neq \arun_\q$.
  By construction,
  \[
	 \arun_\p = \arun \ t_\p \ \arun'_\p \ q_\p \arro\gint q'_\p
	 \qqand
	 \arun_\q = \arun \ t_\q \ \arun'_\q \ q_\q \arro\gint q'_\q
  \]
  for a $q$-run $\arun$, two transitions $t_\p$ and $t_\q$, two runs
  $\arun'_\p$ and $\arun'_\q$; observe that the ending state of
  $\arun$, say $\hat q$, is also the source state of $t_\p$ and $t_\q$
  while $\arun'_\p$ (resp.\ $\arun'_\q$) ends in $q_\p$ (resp.\ $q_\q$).

  Note that $\p \not\in \ptpof[\arun \ t_\p \ \arun'_\p]$ and
  $\q \not\in \ptpof[\arun \ t_\q \ \arun'_\q]$.
  If $\q \not\in \ptpof[\arun \ t_\p \ \arun'_\p]$ or
  $\p \not\in \ptpof[\arun \ t_\q \ \arun'_\q]$ then the thesis
  immediately follows by well-sequencedness as before.
  Therefore we can assume $\p \in \ptpof[\arun \ t_\q \ \arun'_\q]$
  ($\q \in \ptpof[\arun \ t_\p \ \arun'_\p]$ could be assumed as
  well).

  Consider now the two $\hat q$-runs $\arun_1 = t_\p \ \arun'_\p$ and
  $\arun_2 = t_\q \ \arun'_\q$.
  By well-branchedness, there is a partition of
  $\transset\chora{\hat q}$ satisfying the conditions of
  \cref{def:wb}.
  Then $t_\p$ and $t_\q$ cannot belong to the same equivalence class
  of such partition since neither \p nor \q are fully aware of
  $(\arun_1,\arun_2)$ and the first interaction of \p on $\arun_1$
  is with \q.
  Hence, $t_\p = \hat q \arro{\aint_1} q_1$ and
  $t_\q = \hat q \arro{\aint_2} q_2$ necessarily belong to different
  equivalence classes.
  Therefore, by \cref{def:wb}, there is a state $\hat q'$ such that
  $q_1 \arro{\aint_2} \hat q'$ and $q_2 \arro{\aint_1} \hat q'$.
  Hence, the transitions of $\arun'_\p$ and those of $\arun'_\q$ form
  commuting diamonds and therefore there is a $q$-run in $\chora$
  where \q does not occur and all the transitions involving \p in
  $\arun'_\q$ follow a transition with label $\gint$ (easily by
  induction on the length of $\arun'_\p$ and $\arun'_\q$).
  The thesis then follows since, as before, the transition labelled by
  $\gint$ commutes with any preceding transition by
  well-sequencedness.
\end{proof}

\begin{lemma}\label{lemma:diffclass}
  %
  %
  Let $\chora$ be a well-formed c-automaton and, for
  $i \in \set{1,2}$, $t_i = q \arro{\aint_i} q_i'$ the first two
  transitions of a $q$-span $(\arun_1,\arun_2)$ in $\chora$ such that
  $t_1$ and $t_2$ are concurrent.
  Then either $\aint_1$ occurs on $\traceof[\arun_2]$ or for each run
  $\arun_2 \ \arun$ in $\chora$, $\aint_1$ occurs in
  $\traceof[\arun]$.
\end{lemma}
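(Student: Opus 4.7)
The strategy is to exploit the concurrency of $t_1$ and $t_2$ to build a commuting diamond at $q$, and then track the enabledness of $\aint_1$ along $\arun_2$ by induction.

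First, I would invoke \cref{def:concurrent} together with clause~(b) of \cref{def:wellseq} to obtain a state $q'$ with $q \arro{\aint_1} q_1' \arro{\aint_2} q'$ and $q \arro{\aint_2} q_2' \arro{\aint_1} q'$, and the furthermore clause gives that every transition from $q'$ bears a label independent from both $\aint_1$ and $\aint_2$. As a second preliminary observation, $t_1$ and $t_2$ must belong to different partition classes at $q$ in the sense of \cref{def:wb}: concurrent transitions have disjoint participant sets, while transitions in the same class share at least one participant.

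Assume $\aint_1 \notin \traceof[\arun_2]$. I would then prove, by induction on the length of $\arun_2$, the invariant that $\aint_1$ is enabled at the final state of $\arun_2$. The base case $\arun_2 = t_2$ is immediate from the diamond. For the inductive step, at a state $p$ reached along $\arun_2$ with $\aint_1$ enabled, consider $p \arro{\aint[g]} p'$ with $\aint[g] \neq \aint_1$. If $\aint_1 \parallel \aint[g]$, then well-sequencedness applied at $p$ produces a commuting diamond, so $\aint_1$ is enabled at $p'$. If $\aint_1 \not\parallel \aint[g]$, then by well-branchedness at $p$, $\aint_1$ and $\aint[g]$ lie in the same class; I would trace the dependency back through the chain of diamonds that well-sequencedness induces along $\arun_2$ to derive a contradiction with the fact that $t_1$ and $t_2$ sit in distinct classes at $q$.

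The main obstacle is precisely this last case, where a dependent step along $\arun_2$ threatens the persistence of $\aint_1$. Resolving it requires a careful backward reasoning along the commuting diamonds that well-sequencedness forces between $\aint_2$ and every step of $\arun_2$ that is independent of $\aint_1$: the shared participant between $\aint[g]$ and $\aint_1$ would propagate all the way back to $q$, producing at $q$ an outgoing transition that witnesses a participant shared with $t_1$ but reachable via $t_2$'s class, contradicting the strict separation of the two classes. Once the invariant is established, the lemma follows: either $\aint_1$ already occurs on $\arun_2$, or $\aint_1$ is enabled at its endpoint and by the same inductive argument applied to any extension $\arun$, $\aint_1$ must eventually be performed in $\traceof[\arun]$ (since no infinite chain of steps avoiding $\aint_1$ could all be independent of it while forming a valid run of $\chora$ compatible with the span structure).
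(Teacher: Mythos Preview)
Your high-level plan—propagate the enabledness of $\aint_1$ along $\arun_2$—is the same as the paper's coinduction. But the execution swaps the roles of well-sequencedness and well-branchedness, and this swap is precisely what creates the gap you flag as ``the main obstacle''.

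First, the furthermore clause of \cref{def:wellseq}\eqref{seq:indep} constrains the outgoing transitions of $q'''$, the intermediate state on the \emph{other} side of the diamond—here $q_2'$, not the top $q'$ as you write. Applied correctly, it says that every transition out of $q_2'$ is independent of $\aint_1$ (and of $\aint_2$). This is the engine of the whole argument: it makes your ``non-parallel'' case vacuous, so the hand-wavy ``trace the dependency back through the chain of diamonds'' is never needed.

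Second, in your parallel case you invoke well-sequencedness to build a diamond from two \emph{co-initial} transitions $p \arro{\aint_1} r$ and $p \arro{\gamma} p'$; but \cref{def:wellseq} is about \emph{consecutive} transitions and gives you nothing here. The diamond for co-initial independent transitions comes from well-branchedness (\cref{def:wb}, first bullet): independent labels force different partition classes (since same-class transitions share a participant by the second bullet), and different-class transitions close into a diamond.

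The paper's coinduction alternates these two ingredients: at each state $p$ along $\arun_2$, the furthermore clause of well-sequencedness—applied to the \emph{previous} diamond, with $p$ playing the role of $q'''$—forces the next step of $\arun_2$ to be independent of $\aint_1$; then well-branchedness closes the new diamond, so $\aint_1$ is again enabled and the new state again plays $q'''$; repeat. With this correction your induction goes through with no case split and the ``main obstacle'' dissolves.
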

\begin{proof}
  %
  The proof is by coinduction.
  By well-branchedness there is a state $q'$ such that
  \[\begin{tikzpicture}[node distance=.5cm and 1cm]
		\tikzstyle{every state}=[cnode]
		\tikzstyle{every edge}=[carrow]
    	\node[state] (q) {$q$};
		\node[state, above right = of q] (q1) {$q_1$};
		\node[state, below right = of q] (q2) {$q_2$};
		\node[state, below right = of q1] (q') {$q'$};
		\path (q) edge node[above] {$\aint_2$} (q1);
		\path (q) edge node[below] {$\aint_2$} (q2);
		\path (q2) edge node[below] {$\aint_1$} (q');
		\path (q1) edge node[above] {$\aint_2$} (q');
	 \end{tikzpicture}
  \]
  If $q'$ occurs in $\arun_2$ we are done.
  Otherwise, by well-sequencedness
  (cf. \cref{def:wellseq}\eqref{seq:indep}), $q_1 \arro{\aint_2} q'$
  is concurrent with the first transition of $\arun_2$ and the thesis
  follows by the coinductive hypothesis.
\end{proof}

\begin{proposition}\label{lemma:cui}
  Let $\chora$ be a well-formed c-automaton,
  $\p,\q \in \ptpset_{\chora}$, and $\aint \in \lint$.
  If $\aword_{\p} \aint,\ \aword_{\q} \aint,\ \aword \in \lang[\chora]$
  are three words such that $\proj{\aword_{\p}}{p}=\proj{\aword}{p}$
  and $\proj{\aword_{\q}}{q}=\proj{\aword}{q}$ then
  $\aword \aint \in \lang[\chora]$.
\end{proposition}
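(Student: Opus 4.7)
The plan is to apply \cref{lem:intfromruns} at the state reached by $\aword$ in $\chora$. For the premise $\aword_\p\aint,\ \aword_\q\aint \in \lang[\chora]$ combined with the projection equalities to yield useful constraints, $\p$ and $\q$ must be the subjects of $\aint$ (otherwise the hypothesis is easily shown to be insufficient with a two-participant counterexample); so $\aint = \gint$ with $\proj{\aint}{p} = \aout$ and $\proj{\aint}{q} = \ain$. Pick $q_0$-runs $\arun,\arun_\p,\arun_\q$ in $\chora$ accepting respectively $\aword,\aword_\p,\aword_\q$ and ending at states $q,q_\p,q_\q$, with $q_\p \arro{\aint} \cdot$ and $q_\q \arro{\aint} \cdot$. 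Projecting $\arun$ and $\arun_\p$ to $A_\p$ yields two $q_0$-runs sharing the trace $\proj{\aword}{p} = \proj{\aword_\p}{p}$ and reaching $q$ and $q_\p$ respectively, the latter enabling $\aout$. The symmetric situation holds in $A_\q$ for $\ain$.

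By \cref{lem:intfromruns}, it suffices to exhibit in $A_\p$ a path $q \arro{\emptyword}^\star q' \arro{\aout} \cdot$ and in $A_\q$ a path $q \arro{\emptyword}^\star q'' \arro{\ain} \cdot$; the lemma then delivers an outgoing $\aint$-transition from $q$ in $\chora$, and concatenating this with $\arun$ gives a run accepting $\aword\aint$. The crux is the following auxiliary claim: for any two $q_0$-runs $\arun_1,\arun_2$ of $\chora$ with $\proj{\arun_1}{p} = \proj{\arun_2}{p}$, if the endpoint of $\arun_1$ has an outgoing interaction with $\p$ as sender, then from the endpoint of $\arun_2$ one can reach, via transitions not involving $\p$ (i.e.\ $\emptyword$-steps in $A_\p$), a state with the same outgoing interaction.

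The auxiliary claim is proved by induction on the combined length of $\arun_1,\arun_2$, peeling any common prefix and then case-analysing the first point of divergence. Well-sequencedness (\cref{def:wellseq}, clause \eqref{seq:indep}) ensures that concurrent initial transitions form commuting diamonds that realign the runs, and \cref{lemma:diffclass} guarantees that the first label of each diverging run still occurs along every continuation of the other, so the target $\aout$-enabling state is ultimately reachable from both sides. When the diverging transitions instead belong to the same choice equivalence class, well-branchedness (\cref{def:wb}) takes over: since the $\p$-projections agree, $\p$ cannot itself be the fully-aware chooser (otherwise $\proj{\arun_1}{p}$ and $\proj{\arun_2}{p}$ would immediately differ), so $\p$ must appear in at most one branch and the selective-participation conditions (clauses \ref{wb:aware} and \ref{wb:continuations} of \cref{def:wb}) guarantee that $\p$ only interacts with fully-aware partners, preserving $\aout$-readiness under the rearrangement of non-$\p$-transitions. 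The main obstacle is exactly this auxiliary claim: it requires careful bookkeeping to commute non-$\p$-transitions and to synchronise choice information across the two runs, and the selective-participation relaxation of \cref{def:wb} is essential to cover continuations in which $\p$ is absent from one branch---the case that would not arise in the more restrictive setting of~\cite{BarbaneraLT20}.
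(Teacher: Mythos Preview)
Your reduction to \cref{lem:intfromruns} is a natural idea, but the auxiliary claim you isolate is \emph{false} as stated, and this breaks the argument. Consider the well-formed c-automaton
\[
\begin{tikzpicture}[node distance=1cm and 2.5cm]
  \tikzstyle{every state}=[cnode]
  \tikzstyle{every edge}=[carrow]
  \node[state, initial, initial text={}] (q0) {$q_0$};
  \node[state, above right = of q0] (q1) {$q_1$};
  \node[state, right = of q1] (q2) {$q_2$};
  \node[state, below right = of q0] (q3) {$q_3$};
  \path (q0) edge node[above] {$\gint[a][b][{\amsg[l]}]$} (q1);
  \path (q1) edge node[above] {$\gint[c][b][{\amsg[n]}]$} (q2);
  \path (q0) edge node[below] {$\gint[a][b][{\amsg[r]}]$} (q3);
\end{tikzpicture}
\]
where $\ptp[c]$ appears in the upper branch only (well-branchedness holds: $\ptp[c]$'s partner $\ptp[b]$ is fully aware, and $q_3$ has no continuations). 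Take $\p=\ptp[c]$, $\arun_1 = q_0 \arro{\gint[a][b][{\amsg[l]}]} q_1$, $\arun_2 = q_0 \arro{\gint[a][b][{\amsg[r]}]} q_3$; both project to $\emptyword$ on $\ptp[c]$, and $q_1$ has an outgoing interaction with $\ptp[c]$ as sender. But $q_3$ is a sink, so no $\emptyword$-path leads from it to any state enabling $\aout[c][b][{\amsg[n]}]$. Your proof sketch fails precisely at the sentence ``preserving $\aout$-readiness under the rearrangement of non-$\p$-transitions'': selective participation lets $\p$ be entirely absent from the branch that $\arun_2$ took, and nothing in \cref{def:wb} forces that branch to reconnect to a $\p$-enabled state.

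The reason this does not contradict the proposition itself is that the \emph{additional} hypothesis $\proj{\aword_\q}{q}=\proj{\aword}{q}$ rules out such $\aword$: in the counterexample, any $\aword$ reaching $q_3$ projects on $\ptp[b]$ to $\ain[a][b][{\amsg[r]}]$, whereas every $\aword_\q$ with $\aword_\q\,\gint[c][b][{\amsg[n]}]\in\lang[\chora]$ projects to $\ain[a][b][{\amsg[l]}]$. So the two projection constraints are \emph{coupled}: you cannot establish the $A_\p$-path and the $A_\q$-path independently and then invoke \cref{lem:intfromruns}. The paper's proof accordingly argues by contradiction on all three runs $\arun,\arun_\p,\arun_\q$ simultaneously, using the $\q$-constraint to exclude exactly the configurations that defeat the $\p$-only analysis (and vice versa). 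To repair your approach you would need to replace the two separate auxiliary claims by a single joint statement about $\arun,\arun_\p,\arun_\q$---at which point you are essentially reproducing the paper's case analysis rather than reducing to \cref{lem:intfromruns}.
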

\begin{proof}
  The proof is by case analysis on the form of the words.

  If $\aword = \aword_\p = \aword_\q = \emptyword$ then the thesis
  follows trivially.
  Otherwise at least one among those words is not empty.
  Let $\aint = \gint$ and, towards a contradiction, suppose
  $\aword \aint \notin \lang[\chora]$.

  Let $\arun$, $\arun_\p$, and $\arun_\q$ be runs of $\chora$ such
  that
  \begin{align}\label{eq:runs}
	 \aword = \traceof[\arun],
	 \qquad
	 \aword_\p\ \aint = \traceof[\arun_\p],
	 \qqand
	 \aword_\q\ \aint = \traceof[\arun_\q]
  \end{align}
  respectively; and let $\hat q$ be the state from where at least two
  of the three runs in~\eqref{eq:runs} start to become different.

  We first consider the case where one of the words is empty.
  \begin{itemize}
  \item If $\aword = \emptyword$, we show that $\gint$ is enabled in
	 the initial state contradicting our assumption that
	 $\aword\ \aint = \gint \not\in \lang[\chora]$.
	 From the hypothesis, $\p$ does not occur in $\arun_\p$ and $\q$
	 does not occur in $\arun_\q$.
	 If $\p$ also does not occur in $\arun_\q$ then we can commute
	 $\gint$ and the thesis follows.
	 Otherwise, either $\p$ or $\q$ should be fully aware, but this is
	 not possible since each of them occurs in one of the runs only.
  \item If $\aword_\p = \emptyword$ then there is a transition
	 labelled with the interaction $\aint = \gint$ from the initial
	 state of $\chora$, say $q_0$.
	 Such transition cannot be concurrent with the first transition of
	 $\arun$ (otherwise, by \cref{lemma:diffclass}, it would occur on
	 all continuations of $\arun$ contradicting our assumption that
	 $\aword\ \aint \not\in \lang[\chora]$).
    Hence, by well-branchedness, the initial transitions of $\arun$
	 and $\arun_\p$ must belong to a same partition of
	 $\transset{\chora}{q_0}$.
	 Again, by well-branchedness, \q must be fully aware of
	 $(\arun_\p,\arun)$.
	 The only possibility is that \q occurs in the initial transitions
	 of $\arun$ and $\arun_\q$.
	 The condition $\proj \aword q = \proj{\aword_\q} q$ implies that
	 $\arun$ and $\arun_\q$ must have the same initial transition and
	 therefore they share a non-empty prefix; let $\hat q'$ be the last
	 state of the longest of such prefixes and $t \neq t_\q$ be the first
	 transitions from $\hat q'$ on $\arun$ and $\arun_q$ respectively.
	 \begin{itemize}
	 \item If $t$ and $t_\q$ are concurrent then by well-branchedness
		the last transition of $\arun_\q$, say $t_\p$ (note that the label of $t_\p$
		is $\aint = \gint$) must be
		concurrent with the last transition of $\arun$, hence there is a
		run $\arun\ t_\p$ in $\chora$, which violates our assumption
		that $\aword\ \aint \in \lang[\chora]$.
	 \item Otherwise \q is not fully aware of the choice because
		otherwise $t = t_\q$.
		Hence by well-branchedness \p occurs on $\arun_\p$ contrary to
		the assumption that $\proj \aword p = \emptyword$.
	 \end{itemize}
	 In all cases we derive a contradiction, hence
	 $\aword \ \aint \in \lang[\chora]$.
  \item If $\aword_\q$ is empty then the proof is as in the previous
	 case.
  \end{itemize}

  For the case that none of the words is empty (i.e.,
  $\aword_\p \neq \emptyword$, $\aword_\q \neq \emptyword$, and
  $\aword \neq \emptyword$) we analyse how the runs branch.

  Let $\hat q$ be the first state on $\arun$ after which $\arun$ and
  $\arun_\p$ start to diverge along two different transitions $t$ of
  $\arun$ and $t_\p$ of $\arun_\p$.

  Then $t$ and $t_\p$ must be in the same equivalence class of the
  partition of $\transset \chora {\hat q}$ otherwise the last
  transition of $\arun_\p$, say $t'$ (which is labelled with $\aint$)
  would commute with all the transitions of $\arun$; hence
  $\arun \ t'$ would be a run in $\chora$ contrary to our assumption
  that $\aword\ \aint \not\in \lang[\chora]$.
  (Such partition should also include a transition $t_\q$ on
  $\arun_\q$ for the same reason.)

  The projection on \q of $\arun$ and $\arun_\q$ differ.
  Assume the number of interactions involving \p in $\aword$, which is
  the same as those in $\aword_\p$, is not $0$.
  Then, \p should be fully aware of $(\arun,\arun_\p)$.
  However, if $t$ and $t_\p$ involve \p then $t = t_\p$ by the
  hypothesis that $\proj \aword p = \proj{\aword_\p} p$. Hence
  condition~\eqref{it:diff} of \cref{def:fullaware} does not apply.
  Condition~\eqref{it:fa} of \cref{def:fullaware} does not apply
  either since there is an action on one side only after equal traces.
  Hence we have a contradiction if the number of interactions involving \p is not $0$.

  Now, let the number of interactions involving \p be $0$.
  Hence, \p should be fully aware of $(\arun, \arun_\p)$.
  Using the same reasoning on $(\aword,\aword_{\p})$, however, the
  number of occurrences of \p in $\aword$ is $0$, hence it cannot be
  fully aware of $(\aword, \aword_\q\ \gint)$.  Again we have a
  contradiction, hence this case can never happen.
\end{proof}
\fi

\begin{theorem}\label{prop:bisim}
  $\chora$ is bisimilar to $\ssem{\proj{\chora}{}}$ for any
  well-formed c-automaton $\chora$.
\end{theorem}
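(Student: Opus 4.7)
The plan is to exhibit an explicit strong bisimulation $\rcal$ between the reachable states of $\chora$ and the reachable configurations of $\ssem{\proj{\chora}{}}$. For each reachable state $q$ of $\chora$, fix any $q_0$-run $\arun$ ending at $q$, and define $\aConf_q$ to assign to every participant $\p$ the state of $\proj{\chora}{\p}$ reached by reading $\traceof[{\proj{\arun} p}]$. This is well-defined: since $\proj{\chora}{\p}$ is obtained by determinising and minimising the intermediate CFSM $A_\p$, any two $q_0$-runs ending at $q$ induce the same state in $\proj{\chora}{\p}$ (they lie in the same $\emptyword$-closure of $A_\p$, hence in the same determinised state, and are identified by minimisation). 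The candidate bisimulation is then $\rcal = \Set{(q,\aConf_q) \sst q \text{ reachable in } \chora}$ closed under symmetry, and the initial pair $(q_0,\aConf_0)$ lies in $\rcal$ by construction.

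For the forward direction, I would assume $q \arro{\gint} q'$ in $\chora$. By \cref{def:projection}, $A_\p$ has a transition $q \arro{\aout} q'$ and $A_\q$ has a transition $q \arro{\ain} q'$; for every other participant $\ptp[r]$ the label $\gint$ projects to $\emptyword$, so $q$ and $q'$ lie in the same $\emptyword$-closure of $A_{\ptp[r]}$ and hence in the same state of $\proj{\chora}{\ptp[r]}$. Well-sequencedness ensures that the commuting-diamond structure required by \cref{def:wellseq} is already explicit in $\chora$, so no extra transitions are introduced by determinisation that would allow $\ptp[r]$ to observe $\gint$ spuriously. Consequently $\aConf_q$ can fire $\gint$ in $\ssem{\proj{\chora}{}}$ and reach exactly $\aConf_{q'}$.

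The main obstacle will be the backward direction. Suppose $\aConf_q$ fires $\gint$ reaching some $\aConf'$ in $\ssem{\proj{\chora}{}}$. Unfolding the determinisation gives witnessing runs $q \arro{\emptyword}^\star q_\p \arro{\aout} q'_\p$ in $A_\p$ and $q \arro{\emptyword}^\star q_\q \arro{\ain} q'_\q$ in $A_\q$, and \cref{lem:intfromruns} yields a transition $q \arro{\gint} q''$ in $\chora$. What remains is to show $\aConf' = \aConf_{q''}$: the equality for $\p$ and $\q$ follows from the transition just produced, while for each other participant $\ptp[r]$ I must argue that both $\aConf'(\ptp[r])$ and $\aConf_{q''}(\ptp[r])$ coincide with $\aConf_q(\ptp[r])$. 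This last step is the delicate one and is precisely where well-branchedness enters, through \cref{lemma:cui}: it reconciles the differently projected traces of $\p$, $\q$, and $\ptp[r]$ into a single global word of $\chora$ that is still extendable by $\gint$, forcing $\ptp[r]$'s state to remain unchanged. The remaining bisimulation clauses then follow routinely, and trace equivalence with $\chora$ is immediate.
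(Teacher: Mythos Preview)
Your candidate relation is not well-defined: the claim that any two $q_0$-runs ending at $q$ induce the same state in $\proj{\chora}{\p}$ is false even for well-formed c-automata. In the subset construction both determinised states contain $q$, but they may be distinct subsets, and sharing one element does not make them language-equivalent, so minimisation need not merge them. Concretely, take
\[
q_0 \arro{\gint[a][b][m]} q_1 \arro{\gint[b][c][x]} q_3 \arro{\gint[a][b][p]} q_5 \arro{\gint[a][c][s]} q_7
\]
together with $q_0 \arro{\gint[a][b][n]} q_2 \arro{\gint[b][c][y]} q_4$, and from $q_4$ both $q_4 \arro{\gint[a][b][p]} q_5$ and $q_4 \arro{\gint[a][b][q]} q_6 \arro{\gint[a][c][t]} q_8$. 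One checks directly that this c-automaton is well-sequenced and well-branched. Yet in $\proj{\chora}{\ptp[c]}$ the two runs to $q_5$ lead to the states $\{q_3,q_5\}$ and $\{q_4,q_5,q_6\}$, whose languages are $\{\emptyword,\ain[a][c][s]\}$ and $\{\emptyword,\ain[a][c][s],\ain[a][c][t]\}$ respectively; minimisation keeps them apart, so $\aConf_{q_5}(\ptp[c])$ depends on the chosen run.

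The paper avoids this by taking $\rcal = \{(q,\confel) \mid q \in \confel(\p) \text{ for all } \p\}$, a genuinely many-to-many relation: a single $\confel$ may correspond to several $q$'s and vice versa, which is exactly what is needed. Your use of \cref{lem:intfromruns} and \cref{lemma:cui} for the backward step is the right instinct, but there is a further wrinkle you gloss over: when $\confel \arro{\gint} \confel'$, the witness for the output is some $\tilde q_\p \in \confel(\p)$ with $\tilde q_\p \arro{\emptyword}^\star q_\p \arro{\aout} q'_\p$ in $A_\p$, and $\tilde q_\p$ need not equal $q$ nor be $\emptyword$-reachable from it. The paper therefore splits into the case $q = \tilde q_\p = \tilde q_\q$ (handled by \cref{lem:intfromruns}) and the general case, where one lifts runs to $q$, $\tilde q_\p$, $\tilde q_\q$ in $\chora$ to obtain words $\aword,\aword_\p,\aword_\q$ satisfying the hypotheses of \cref{lemma:cui}.
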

\iftr
\begin{proof}
  Let $\chora=\conf{Q, q_0,\lint,\tset}$ and let
	 $\sset$ be the set of configurations of
	 $\ssem{\proj{\chora}{}}$.
  We show by coinduction that  the relation
  \begin{align*}
	 \rcal = \set{(q,\confel) \in Q \times \sset \sst q \in
	 \confel(\p) \text{ for each } \p \in \ptpset_{\chora}}
  \end{align*}
  is a bisimulation.
  (Recall that, due to determinisation and minimisation, for each
  $\confel \in \sset$ and each participant $\p$, $\confel(\p)$ is a
  subset of $Q$).
  Since bisimulation implies trace equivalence, we also have that
  corresponding elements are reachable via the same trace.

  Let $(q,\confel) \in \rcal$ and consider a challenge from
  $\ssem{\proj{\chora}{}}$, namely $\confel \arro \gint \confel'$.
  By definition of synchronous semantics,
  $\confel(\p) \arro \aout \confel'(\p)$,
  $\confel(\q) \arro \ain \confel'(\q)$ and
  $\confel(\ptp[x]) = \confel'(\ptp[x])$ for each
  $\ptp[x] \notin \set{\p,\q}$.
  By definition of determinisation, there are
  $\tilde q_\p \in \confel(\p)$ and $\tilde q_\q \in \confel(\q)$ such
  that
  \[
	 \tilde q_\p {\arro \emptyword}^\star q_\p \arro{\aout}{q'_{\p}}
	 \text{ in } A_{\p}
	 \qqand
	 \tilde q_\q {\arro \emptyword}^\star q_\q \arro{\ain}{q'_{\q}}
	 \text{ in } A_\q
  \]
  If $q = \tilde q_\p = \tilde q_\q $ then $q \arro \gint q'$ by
  \cref{lem:intfromruns}.
  Otherwise, since $(q,\confel) \in \rcal$, each run in
  $\ssem{\proj \chora {}}$ that reaches $\confel$ has, for all
  participants $\ptp[x] \in \ptpset$, a corresponding run that reaches
  $q$ in $A_{\ptp[x]}$.
  Now consider a word $w$ that reaches $q$ and matching words $w_\p$
  and $w_\q$ obtained by lifting to $\chora$ runs reaching
  $\tilde q_\p$ and $\tilde q_\q$ in the respective auxiliary
  CFSMs. By construction, they are in the hypothesis of
  \cref{lemma:cui}, hence $q \arro\gint q'$ also in this case.
  We now show that $q' \in \confel'(\ptp[x])$ for each $\ptp[x]$.
  By definition, for each $\ptp[x]$, in the intermediate CFSM
  $A_{\ptp[X]}$ we have $q_{\ptp[x]} \arro{\proj \gint x} q'_{\ptp[x]}$ for some
  $q_{\ptp[x]} \in \confel(\ptp[x])$ and some $q'_{\ptp[x]}$.
  \begin{itemize}
  \item If $\ptp[x] \notin \set{\p,\q}$ then
	 $\proj \gint x = \emptyword$ hence
	 $q_{\ptp[x]} \in \confel(\ptp[x])$ implies
	 $q'_{\ptp[x]} \in \confel'(\ptp[x])$ as required since
		$\confel(\ptp[x])$ contains the $\varepsilon$-closure of its
		elements by construction.
  \item If $\ptp[x] = \p$ then $\proj \gint x = \aout$.
	 Thus $q_{\ptp[x]} \arro \aout q'_{\ptp[x]}$ and, as shown above,
	 $\confel(\p) \arro \aout \confel'(\p)$.
	 Since $q_{\ptp[x]} \in \confel(\p)$ and $\chora$ is deterministic
	 then $q'_{\ptp[x]} \in \confel'(\ptp[x])$
	 since $\confel'(\ptp[x])$ is the
		$\varepsilon$-closure of $q'_{\ptp[x]}$ by construction.
  \item If $\ptp[x] = \q$ then the reasoning is analogous to the previous case.
  \end{itemize}

  Let us now consider a challenge from $\chora$, namely
  $q \arro \gint q'$.
  By definition of projection and $\varepsilon$-closure,
  $\confel(\p) \arro \aout \confel'(\p)$,
  $\confel(\q) \arro \ain \confel'(\q)$, and
  $\confel(\ptp[x]) = \confel'(\ptp[x])$ for each
  $\ptp[x] \notin \set{\p,\q}$.
  By definition of synchronous semantics $\confel \arro \gint
  \confel'$ as desired. For each participant $\ptp[y]{}$ from $q \in \confel(\ptp[y])$
  we get $q' \in \confel'(\ptp[y])$, hence the thesis follows.
\end{proof}
\fi

An immediate consequence of \cref{prop:bisim} is that the language of
a well-formed c-automaton coincides with the language of the semantics of its
projection.

\begin{corollary}\label{th:projectionCorrectness}
  $\lang[\chora] = \lang[\ssem{\proj{\chora}{}}]$  for any well-formed c-automaton $\chora$.
\end{corollary}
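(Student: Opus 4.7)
The plan is to derive the corollary as an immediate consequence of \cref{prop:bisim}, exploiting the standard fact (recalled in the discussion preceding \cref{def:bisim}) that bisimilarity implies trace equivalence. Concretely, \cref{prop:bisim} yields a strong bisimulation $\rcal$ between $\chora$ and $\ssem{\proj{\chora}{}}$, relating in particular their initial states.

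From here the argument is a routine induction on the length of a trace. First I would show, by a straightforward induction on $n$, that whenever $(s, \confel) \in \rcal$ and $s \arro{\aint_1}\cdots\arro{\aint_n} s'$ is a run of $\chora$, the bisimulation game produces a matching run $\confel \arro{\aint_1}\cdots\arro{\aint_n} \confel'$ in $\ssem{\proj{\chora}{}}$ with $(s', \confel') \in \rcal$; the symmetric statement follows from symmetry of $\rcal$. Applied to the pair of initial states, this gives $\lang[\chora] \subseteq \lang[\ssem{\proj{\chora}{}}]$ and the reverse inclusion.

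There is no real obstacle: the only subtlety is noticing that \cref{def:bisim} uses strong bisimulation on labels drawn from $\lint$ (i.e.\ $\emptyword$-transitions in $\ssem{\proj{\chora}{}}$ and in $\chora$ do not play a special role here because $\chora$ itself has no $\emptyword$-transitions and the semantics of the system, by \cref{def:syncSem}, synchronises senders and receivers into a single $\gint$-labelled step), so the concatenation of labels along a run is literally the trace in the sense of \cref{def:traces}. Hence the language equality $\lang[\chora] = \lang[\ssem{\proj{\chora}{}}]$ follows directly.
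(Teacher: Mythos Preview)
Your proposal is correct and follows essentially the same approach as the paper's own proof, which is the one-liner ``From \cref{prop:bisim} given that bisimulation implies trace equivalence.'' You simply unpack the standard implication in more detail (the induction on trace length and the remark about $\emptyword$-transitions), but the core argument is identical.
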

\iftr
\begin{proof}
  From \cref{prop:bisim} given that bisimulation implies trace equivalence.
\end{proof}
\fi

We now show that projections of well-formed c-automata do not
deadlock.  To this end, we need to extend CFSMs with a concept of
final state.  Intuitively, a state is final in the projection on some
participant $\p$ of a given c-automaton $\chora$ iff one of the
corresponding states of $\chora$ (remember that states of the
projection are sets of states of $\chora$) has an outgoing maximal
path along with $\p$ is not involved. Formally:

\begin{definition}[Final states in projected CFSMs]\label{def:final}
  Let $\chora$ be a c-automaton and \p one of its participants.
  A state $Q$ of $\proj \chora p$ is \emph{final} if in $\chora$ there
  is $q \in Q$ and a candidate $q$-branch such that
  $\p \not\in \ptpof[\arun]$.
\end{definition}

\begin{definition}[Deadlock freedom]\label{def:DF}
  The projection of a c-automaton is \emph{deadlock-free} if for each
  of its reachable configurations $\confel$ either $\confel$ has an
  outgoing transition or, for each participant $\p$, $\confel(\p)$ is
  final.
\end{definition}

\begin{theorem}[Projections of well-formed c-automata are deadlock-free]\label{thm:df}
  Let $\chora$ be a well-formed c-automaton.
  Then $\proj{\chora}{}$ is deadlock-free.
\end{theorem}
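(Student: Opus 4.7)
The plan is to reduce deadlock freedom of $\proj\chora{}$ to the structure of $\chora$ itself via \cref{prop:bisim}. That theorem establishes a bisimulation $\rcal = \set{(q,\confel) \sst q \in \confel(\p) \text{ for each } \p \in \ptpset_{\chora}}$ between $\chora$ and $\ssem{\proj\chora{}}$. Since determinisation takes $\varepsilon$-closures, the initial state $q_0$ lies in $\confel_0(\p)$ for every $\p$, so $(q_0,\confel_0) \in \rcal$. For any reachable configuration $\confel$ of $\ssem{\proj\chora{}}$, the bisimulation therefore yields a state $q$ of $\chora$ with $q \in \confel(\p)$ for every $\p$, reached along a matching run.

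Next I would split by whether $q$ has an outgoing transition in $\chora$. If some $q \arro\gint q'$ exists, the bisimulation clause produces a matching $\confel \arro\gint \confel'$, so $\confel$ has an outgoing transition and the first disjunct of \cref{def:DF} holds. Otherwise $q$ is a sink of $\chora$; I claim the empty $q$-run is a candidate $q$-branch. It is trivially a pre-candidate (no cycle occurs in it), and it is maximal with respect to the prefix order because $q$ has no successor. Since its set of participants is empty, every $\p$ satisfies $\p \notin \ptpof[\arun]$ vacuously, so by \cref{def:final} every $\confel(\p)$---which contains $q$---is final, matching the second disjunct of \cref{def:DF}.

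I expect the main subtlety to be the \emph{uniformity} of the finality witness: \cref{def:DF} requires every $\confel(\p)$ to be final at the same time when no outgoing transition exists, and this forces using a single witness branch whose participant set is empty. The empty $q$-run at a sink is precisely such a witness, and it exists exactly because bisimilarity has transported the transition-free $\confel$ back to a sink $q$ of $\chora$. A secondary technical point to check is that reachability transfers cleanly between the two systems: this follows from determinisation and minimisation eliminating internal $\varepsilon$-transitions in the projected CFSMs, so transitions in $\ssem{\proj\chora{}}$ are interactions, matched one-to-one with transitions of $\chora$ by the bisimulation.
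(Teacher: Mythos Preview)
Your proposal is correct and follows essentially the same approach as the paper: both rely on the bisimulation of \cref{prop:bisim} to transport a reachable configuration $\confel$ back to a state $q$ of $\chora$ with $q \in \confel(\p)$ for all $\p$, and then use the structure of $\chora$ at $q$ to conclude. The paper phrases it by contradiction and invokes ``the first action of $\arun$'' (implicitly assuming $q$ is not a sink), whereas you give the direct case split and make the sink case explicit via the empty candidate $q$-branch; this is a cosmetic reorganisation, not a different argument.
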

\iftr
\begin{proof}
  Let us assume, towards a contradiction, that $\proj{\chora}{}$ is
  not deadlock-free.
  Then there is a reachable configuration $\confel$ in
  $\ssem{\proj{\chora}{}}$ with no outgoing transition and there
  exists a participant $\p$ such that $\confel(\p)$ is not final.
  Then, by \cref{def:final}, for each $q \in
	 \confel(\p)$ and each candidate $q$-branch
	 $\arun$ in $\chora$, $\p \in \ptpof[\pi]$.
  %
  %
  From the proof of \cref{prop:bisim},
  $\confel$ is bisimilar to one such $q$; hence, $\confel$ should answer
  the challenge from the first action of $\arun$, hence it has an
  outgoing transition against the hypothesis.
\end{proof}
\fi

\begin{example}[C-automaton with deadlock]\label{ex:deadlock}
	Consider the c-automaton
    \[
    \begin{tikzpicture}[node distance=3cm]
      \tikzstyle{every state}=[cnode]
      \tikzstyle{every edge}=[carrow]
      \node[state, initial, initial text={$\chora=$}] (0) {$q_0$};
      \node[state, above right of=0, yshift = -1.8cm] (1) {$q_1$};
      \node[state, right of=1] (2) {$q_2$};
      \node[state, right of=2] (3) {$q_3$};
      \node[state, right of=3] (4) {$q_4$};
      \node[state, below right of=0, yshift = 1.8cm] (5) {$q_5$};
      \node[state, right of=5] (6) {$q_6$};
      \node[state, right of=6] (7) {$q_7$};
      \node[state, right of=7] (8) {$q_8$};

      %
      \path
      (0) edge node[above] {$\gint[A][B][{\amsg[l]}]$} (1)
      (1) edge node[above] {$\gint[B][C][{\amsg[n]}]$} (2)
      (2) edge node[above] {$\gint[C][D][{\amsg[l]}]$} (3)
      (3) edge node[above] {$\gint[C][A][{\amsg[l]}]$} (4)
      (0) edge node[below] {$\gint[A][B][{\amsg[r]}]$} (5)
      (5) edge node[below] {$\gint[B][C][{\amsg[n]}]$} (6)
      (6) edge node[below] {$\gint[C][D][{\amsg[r]}]$} (7)
      (7) edge node[below] {$\gint[C][A][{\amsg[r]}]$} (8)

      (3) edge[dashed,loop above] node[above] {$\gint[C][D][{\amsg[l]}]$} (3)
      (7) edge[dashed,loop below] node[below] {$\gint[C][D][{\amsg[r]}]$} (7)
      ;
    \end{tikzpicture}
    \]
	 obtained by adding the transitions from states $q_3$ and $q_7$ to the one in \cref{ex:nonwb}.
	 Disregard the dashed transitions.
	 If, as discussed in \cref{ex:nonwb}, $\ptp[C]$ and $\ptp[D]$
	 decide to take the bottommost branch while $\ptp[A]$ and $\ptp[B]$ take
	 the uppermost one, we can reach a configuration $\confel$ where $\ptp[C]$
	 wants to send $\msg[r]$ to $\ptp[A]$, but $\ptp[A]$ is only
	 willing to take $\msg[l]$. Hence, no transition is possible and we
	 have a deadlock.  Due to \cref{thm:df} this is possible only since
	 the c-automaton is not well-formed.\finex
 \end{example}

We can refine the result above by focusing on a single participant.

\begin{definition}[Lock freedom]\label{def:LF}
  The projection of a c-automaton is \emph{lock-free} if for each of
  its reachable configurations $\confel$ and each participant $\p$,
  either $\confel(\p)$ is final or $\confel$ 
  has at least an outgoing transition and 
  for each candidate $\confel$-branch
  $\arun$ we have $\p \in \ptpof[\arun]$.
\end{definition}
  Lock freedom is strictly stronger than deadlock freedom. Indeed, each configuration $\confel$ and a participant $\p$ such that $\confel(\p)$ is not final has an outgoing transition, hence it is not a deadlock. However, there are systems which are deadlock-free but not lock-free, as discussed below.
  \begin{example}[C-automaton with locks (but no deadlock)]
    Consider again the c-automaton from \cref{ex:deadlock}, 
	 including the dashed self-loops.  There is now no deadlock, since
	 the configuration $\confel$ has an outgoing transition, namely a
	 self-loop involving $\ptp[C]$ and $\ptp[D]$.  However, $\confel$
	 is a lock for $\ptp[A]$. Indeed, it is not final for $\ptp[A]$,
	 yet $\ptp[A]$ does not take part in the branch corresponding to
	 the execution of the self-loop.
  \end{example}

\begin{theorem}[Projections of well-formed c-automata are lock-free]\label{thm:lf}
  Let $\chora$ be a well-formed c-automaton.
  Then $\proj{\chora}{}$ is lock-free.
\end{theorem}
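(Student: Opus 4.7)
My plan is to reduce lock freedom to the combination of deadlock freedom (\cref{thm:df}) and the bisimulation established in \cref{prop:bisim}, by lifting the relevant property from the c-automaton to the projected system. Fix a reachable configuration $\confel$ of $\ssem{\proj{\chora}{}}$ and a participant $\p$ with $\confel(\p)$ not final. I need to verify two obligations from \cref{def:LF}: (i) $\confel$ has an outgoing transition, and (ii) every candidate $\confel$-branch contains $\p$.

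For (i), I would invoke \cref{thm:df}. Since $\confel(\p)$ is not final, the clause requiring all local states to be final is not met, hence deadlock freedom forces $\confel$ to have at least one outgoing transition.

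For (ii), I would appeal to the bisimulation $\rcal$ built in the proof of \cref{prop:bisim}. Since $\confel$ is reachable, there is $q \in Q$ with $(q,\confel) \in \rcal$, and by construction of $\rcal$ we have $q \in \confel(\p)$. Applying \cref{def:final} to $\confel(\p)$ not being final yields that \emph{every} candidate $q$-branch of $\chora$ contains $\p$. Now, given any candidate $\confel$-branch $\hat\arun$ in $\ssem{\proj{\chora}{}}$, I would match it step-by-step along $\rcal$ to a run $\arun$ of $\chora$ starting at $q$ with $\traceof[\arun] = \traceof[\hat\arun]$. Since the subjects appearing in each label are shared between the two automata, this will give $\ptpof[\arun] = \ptpof[\hat\arun]$, so it suffices to argue that $\arun$ is (or can be chosen to be) a candidate $q$-branch.

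The main obstacle is precisely this last step: transferring the candidacy condition across the bisimulation. Maximality of $\arun$ follows cleanly since any proper extension of $\arun$ in $\chora$ would, by bisimilarity, induce a proper extension of $\hat\arun$ in $\ssem{\proj{\chora}{}}$, contradicting maximality of $\hat\arun$. The delicate point is the cycle condition: cycles in $\ssem{\proj{\chora}{}}$ correspond to cycles in $\chora$ only up to the equivalence classes induced by determinisation, so I would use $\rcal$ to show that each cycle traversed in $\hat\arun$ lifts to at most one traversal of the corresponding cycle in $\arun$, possibly passing through an intermediate bisimilar representative. With $\arun$ confirmed to be a candidate $q$-branch, (ii) follows because $\p \in \ptpof[\arun]=\ptpof[\hat\arun]$, completing the argument.
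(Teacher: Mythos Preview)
Your proposal is correct and follows essentially the same approach as the paper: invoke \cref{thm:df} for the outgoing-transition obligation, and use the bisimulation from the proof of \cref{prop:bisim} to lift candidate $\confel$-branches to candidate $q$-branches in $\chora$, where non-finality of $\confel(\p)$ forces $\p$ to occur. The paper's proof is terser---it simply asserts that ``each candidate $\confel$-branch matches a candidate $q$-branch in $\chora$'' without spelling out the maximality and cycle-preservation arguments you sketch---but the underlying strategy is the same.
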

\iftr
\begin{proof}
  Let us assume, towards a contradiction, that $\proj{\chora}{}$ is
  not lock-free.
  Then there is a reachable configuration $\confel$ in
  $\ssem{\proj{\chora}{}}$ and a participant $\p$ such that
  $\confel(\p)$ is not final and either there is no outgoing transition or there is a candidate $\confel$-branch $\arun$
  with $\p \notin \ptpof[\arun]$. In the first case the configuration is a deadlock and we have a contradiction from \cref{thm:df}.

  Otherwise, by \cref{def:final}, for each $q \in
	 \confel(\p)$ and each candidate $q$-branch
	 $\arun$ in $\chora$, $\p \in \ptpof[\pi]$.

  From the proof of \cref{prop:bisim},
  $\confel$ is bisimilar to one such $q$. Hence, each candidate $\confel$-branch matches a candidate $q$-branch in $\chora$, thus it contains interactions where $\p$ participates.
\end{proof}
\fi


\section{Design-by-Contract}
\label{sec:achor}
We now extend the theory of choreography automata and communicating
systems to handle specifications amenable to predicate over data
exchanged through a protocol.
The basic idea is to frame the design-by-contract theory proposed
in~\cite{bhty10} for global types in the context of c-automata.
This theory advocates \emph{global assertions} to specify and verify
contracts among participants of a protocol.
Taking inspiration from Design-by-Contract (DbC)~\cite{mey92}, widely
used in the practice of sequential programming~\cite{hoa69,flo67}, a
global assertion is a global type decorated with logical formulae
predicating on the payload carried by interactions.
Just as in the traditional DbC, the use of logical predicates allows
one to specify protocols where the content of messages is somehow
constrained.
\subsection{Asserted choreography automata}\label{sec:aca}
To specify protocols that encompass
constraints on payloads, we extend c-automata to \emph{asserted
  c-automata}.
The structure of messages is reshaped to
account for sorted data in interactions and predicate over the payload
of communications.
More precisely, the set of \emph{messages} $\msgset$ consists of
\emph{tagged tuples} $\atuple \tuplevar$ where $\atag$ is a \emph{tag}
and $\tuplevar = \varid_1 \ \asort_1, \ldots, \varid_h \ \asort_h$ is
a tuple of pairwise distinct sorted variables (namely,
$\varid_i = \varid_j \implies i = j$ for $1 \leq i \leq j \leq h$).
The set of variables of
$\tuplevar = \varid_1 \ \asort_1, \ldots, \varid_h \ \asort_h$ is
$\fvar[\tuplevar] \mmdef \set{\varid_1, \ldots, \varid_h}$ and,
accordingly
  $\fvar[\amsg] \mmdef \fvar[\tuplevar]$ and
  $\fvar[\gint] \mmdef
  \fvar[\amsg]$
are the set of variables of $\amsg$ and of $\gint$ respectively.
Intuitively, 
now an
interaction specifies also the sort of the values communicated by the
sender and the \quo{local} variables where the receiver
\quo{stores} those values.
\begin{example}[OLW variable sorts]
  When asking \ptp[customer]\ for another login attempt, \ptp[wallet]\
  can send a message $\amsg[retry]\conf{\varid[msg] \ \asort[string]}$
  where the payload $\varid[msg]$ yields an error message.
  \finex
\end{example}

We borrow from~\cite{bhty10} (with minor syntactic changes) the
first-order logic to specify the constraints on payloads; the set
$\assertionset$ of logical formulae are derived from the
following grammar
\begin{align}\label{eq:pred}
  \apred, \predB
  \bnfas
  &
	 \truek \bnfalt \falsek
	 \bnfalt \phi(e_1, \ldots, e_n)
	 \bnfalt \neg \apred
    \bnfalt \apred \land \predB
    \bnfalt \apred \ENTAILS \predB
	 \bnfalt \EXISTS {\varid \ \asort} \apred
\end{align}
In~\eqref{eq:pred}, $\phi$ ranges over pre-defined atomic predicates
with fixed arities and sorts (e.g., $\asort[bool]$, $\asort[int]$,
etc) \cite[\S 2.8]{men87} and $e_1,\ldots,e_n$ denote expressions.
Instead of fixing a specific language of expressions, we just assume
that they encompass usual data types of programming languages and
variables $\varid$.
Also, we assume that sorts of expressions can be inferred (hence, we
occasionally omit sorts and tacitly assume that usage of variables is
consistent with respect to their sort).
  For simplicity, we consider only basic sorts (as
  in~\cite{bhty10}). More complex static data structures can be handled similarly, while
  dynamic data structures (e.g., pointers) require to extend
  our theory with suitable semantics of value passing (e.g.,
  deep-copy).
  %
%

Let $\fvar[e]$ be the set of variables occurring in expression $e$;
likewise $\fvar[\apred]$ denotes the set of free variables of
predicate $\apred \in \assertionset$, while $\bvar[\apred]$ denotes
the bound variables in $\apred$ (defined in the standard way).
Hereafter, assume that $\fvar[\apred] \cap \bvar[\apred] = \emptyset$.
\begin{example}[OLW payloads]\label{ex:payload}
  The payloads of the OLW protocol which we will use through the paper
  are those in the following FSA:
  \[\begin{tikzpicture}[node distance=1cm and 3cm, scale = .8, transform shape]
      \tikzstyle{every state}=[cnode]
      \tikzstyle{every edge}=[carrow]
      \node[state, initial, initial text = {}] (q0) {$q_0$};
		\node[state, below = 3cm of q0] (q1) {$q_1$};
		\node[state, right = of q1] (q2) {$q_2$};
		\node[state, right = of q2] (q3) {$q_3$};
		\node[state, right = of q0] (q4) {$q_4$};
		\node[state, right = of q4] (q4') {$q_5$};
		\node[state, right = of q4'] (q5) {$q_6$};
		\node[state, right = of q5] (q6) {$q_7$};
		\node[state, below left = of q5,xshift=.5cm] (q7) {$q_8$};
		\path (q0) edge node[below] {$\gint[C][W][{\amsg[login]\conf{\varid[account]\ \asort[int]}}]$} (q1);
		\path (q1) edge node[below] {$\gint[C][W][{\amsg[pin]\conf{\varid[pin]\ \asort[int]}}]$} (q2);
		\path (q2) edge[bend left] node[above,yshift=.3cm] {$\gint[W][C][{\amsg[retry]\conf{\varid[msg]\ \asort[string]}}]$} (q0);
		\path (q2) edge node[below] {$\gint[W][C][{\amsg[loginDenied]\conf{\varid[msg]\ \asort[string]}}]$} (q3);
		\path (q2) edge node[above] {$\gint[W][C][{\amsg[loginOK]\conf{}}]$} (q4);
		\path (q4) edge node[above] {$\gint[W][V][{\amsg[loginOK]\conf{}}]$} (q4');
		\path (q4') edge node[above] {$\gint[V][C][{\amsg[request]\conf{\varid[bill]\ \asort[int]}}]$} (q5);
		\path (q5) edge node[above] {$\gint[C][W][{\amsg[authorise]\conf{}}]$} (q6);
		\path (q5) edge node[below] {$\gint[C][W][{\amsg[reject]\conf{}}]$} (q7);
		\path (q6) edge[bend left] node[below] {$\gint[C][V][{\amsg[pay]\conf{\varid[payment]\ \asort[int]}}]$} (q3);
		\path (q7) edge[bend left] node[below] {$\gint[C][V][{\amsg[reject]\conf{}}]$} (q3);
	 \end{tikzpicture}
  \]
  Notice that some messages have empty payloads.
  \finex
\end{example}

We will consider FSAs where transitions are decorated with
\emph{assertions}, namely formulae in $\assertionset$ predicating on
variables of the FSAs.
The interplay between payloads and assertions requires some care to
handle iterative behaviour and the scoping of variables.
In fact, we will need to slightly
change the FSA above
to handle the iteration of the authentication phase.

  %
  %

  %
  Iterative computations require a few more ingredients.
  First we fix a \emph{recursion context} $\recctx$ which maps each
  recursion variable $\gtvar r$ to a triplet $(\tuplevar, \apred, q)$
  consisting of
  \begin{itemize}
  \item a set of sorted variables $\tuplevar$ which identify
	 the formal parameters of $\gtvar r$,
  \item a predicate $\apred \in \assertionset$, the loop invariant to
	 be maintained through the iteration, and
  \item a state $q$ of the FSA identifying the start of the iteration.
  \end{itemize}
  We assume that if $\recctx(\gtvar r) = (\tuplevar, \apred, q)$ and
  $\recctx(\gtvar r') = (\tuplevar', \apred', q')$ then
  $\gtvar r \neq \gtvar r'$ implies $q \neq q'$ and
  $\tuplevar \cap \tuplevar' = \emptyset$.
  Then we use FSAs on the set $\widehat \lint$ (ranged over by
  $\lambda$), defined as the union of $\lint$ and the set of
  \emph{recursive calls} which are defined as pairs $\reccall r \iota$
  of a recursive variable and a map assigning expressions to recursive
  parameters of $\gtvar r$.
  %

\begin{example}[OLW iteration]\label{ex:rec}
  Using assertions, the constraint on the authentication phase of the
  OLW protocol described in \cref{sec:intro} can be specified as
  follows:
  \[\begin{tikzpicture}[node distance=1.5cm and 4cm, scale=.8, transform shape]
      \tikzstyle{every state}=[cnode]
      \tikzstyle{every edge}=[carrow]
      \node[state, initial, initial text = {}] (q0') {$q_0'$};
      \node[state, right = 3cm of q0'] (q0) {$q_0$};
		\node[state, below = 3cm of q0] (q1) {$q_1$};
		\node[state, right = of q1] (q2) {$q_2$};
		\node[state, right = of q2] (q3) {$q_3$};
		\node[state, right = of q0] (q4) {$q_2'$};
		\node[state, right = of q4] (q4') {$q_4$};
		\path (q0') edge node[above] {$\reccall r {\varid[try] \mapsto 0}$} node[below]{$0 \leq \varid[try] \leq 3$} (q0);
		\path (q0) edge node[above] {$\gint[C][W][{\amsg[login]\conf{\varid[account]\ \asort[int]}}]$} node[below]{$\truek$} (q1);
		\path (q1) edge node[above] {$\gint[C][W][{\amsg[pin]\conf{\varid[pin]\ \asort[int]}}]$}node[below]{$\truek$} (q2);
		\path (q4) edge node[above] {$\reccall r {\varid[try] \mapsto \varid[try]+1}$} node[below]{$0 \leq \varid[try] \leq 3$} (q0);
		\path (q2) edge node[above, xshift=.2cm] {$\gint[W][C][{\amsg[loginDenied]\conf{\varid[msg]\ \asort[string]}}]$}
		node[below] {$\varid[try] \geq 3 \land \varid[msg] = \text{"fail"}$} (q3);
		\path (q2) edge node[above] {
		  $\gint[W][C][{\amsg[retry]\conf{\varid[msg]\ \asort[string]}}]$
		}
		node[below, rotate=-90,xshift=1cm,yshift=1.3cm]{
		  $\begin{array}{c}
			  0 \leq \varid[try] < 3
			  \\\land\\
			  \varid[msg] = \text{"fail"}
			\end{array}
		  $
		 } (q4);
		\path (q2) edge node[above] {
		  $\gint[W][C][{\amsg[loginOk]\conf{}}]$
		}
		node[below]{
		  $\begin{array}{c}
			  0 \leq \varid[try] \leq 3
			\end{array}
		  $
		 } (q4');
	 \end{tikzpicture}
  \]
  where
  $\recctx(\gtvar r) = (\set{\varid[try]}, 0 \leq \varid[try] \leq 3,
  q_0)$.
  The automaton above refines the left part of the c-automaton in
  \cref{ex:payload}. In particular, states with the same names do
  correspond. States $q_0'$ and $q_2'$ are new (in particular $q_0'$
  is the new initial state), introduced to correctly model iteration.
	 %
	 The assertions on the transitions from states $q_0'$ and $q_2'$
	 model recursive calls where the $\varid[try]$ parameter is
	 respectively set to $0$ and incremented (cf. \cref{ex:reqs}).
  \finex
\end{example}

Transitions $t=(q, (\lambda, \apred), q')$, written as
$q \arro[\apred]\lambda q'$, are interpreted according to their label:
\begin{itemize}
\item If $\lambda = \gint$ then $t$ (dubbed \emph{interaction
	 transition}) establishes a rely-guarantee relation: when $t$ is
  fired, \p \emph{guarantees} $\apred$ while \q \emph{assumes} that $\apred$
  holds.
\item
	 %
	 If $\lambda = \reccall r \iota$ then $t$ (dubbed \emph{iteration
		transition}) records the invariant $\apred$ (fixed by the
	 recursion context $\recctx$) that should be maintained through
	 each loop corresponding to $\gtvar r$.
\end{itemize}

Variable scoping requires attention, as best illustrated by the
following example.
\begin{example}[Confusion]\label{ex:confusion}
  In the following FSA
  \[\begin{tikzpicture}[node distance=1cm and 2cm]
      \tikzstyle{every state}=[cnode]
      \tikzstyle{every edge}=[carrow]
      \node[state, initial, initial text = {}] (q0) {$q_0$};
		\node[state, above right = of q0, xshift=1cm, yshift=-.5cm] (q1) {$q_1$};
		\node[state, below right = of q0, xshift=1cm, yshift=.5cm] (q2) {$q_2$};
		\node[state, right = 6cm of q0] (q3) {$q_3$};
		\node[state, right = of q3] (q4) {$q'$};
		\path (q0) edge node[above] {$\gint[p][r][\atuple{\varid\ \asort[bool]}]$} node[below]{$\truek$} (q1);
		\path (q1) edge node[above] {$\gint[q][r][\atuple{\varid[w]\ \asort[int]}]$} node[below]{$\truek$} (q3);
		\path (q0) edge node[above] {$\gint[q][r][\atuple{\varid\ \asort[int]}]$} node[below]{$\truek$} (q2);
		\path (q2) edge node[above] {$\gint[p][r][\atuple{\varid[u]\ \asort[bool]}]$} node[below]{$\truek$} (q3);
		\path (q3) edge node[above] {$\gint[r][s][\atuple{\varid[y]\ \asort[int]}]$} node[below]{$\varid[y] = \varid \mod 2$} (q4);
	 \end{tikzpicture}
  \]
  it is not clear
  if the assertion on the transition from $q_3$ predicates on the variable $\varid$ bound in the
  interaction between \p\ and \ptp[r]\ or in the one between \q\ and
  \ptp[r], hence its sort is not clear.
  \finex
\end{example}

The binding and scoping of variables yield a first
difference w.r.t.~\cite{bhty10}, where syntactic structures
of global assertions facilitate the definition of these notions.
The lack of syntactic structures of c-automata requires instead to
introduce constructions to handle variables.

Let us now consider recursion.
An FSA $A$ \emph{respects} a recursion context $\recctx$ when there
are no loops without iteration transitions and for each iteration
transition $t = q \arro[\apred]{\reccall r \iota} \hat q$ in $A$ with
$\recctx(\gtvar r) = (\tuplevar, \apred, \hat q)$
  \begin{enumerate}[(a)]
  \item\label{it:selfloop} $t$ is the only outgoing transition of $q$
	 and $q \neq \hat q$ and
  \item\label{it:guard} either $q$ is the initial state of $A$ or there is a
	 unique transition entering $q$ and it is an interaction transition.
  \end{enumerate}
  %
  Condition~\eqref{it:selfloop} forbids self-loops
  while~\eqref{it:guard} forces iterations to be guarded by
  interactions.
\begin{example}[OLW is respectful]\label{ex:reqs}
  The requirements imposed by respectfulness are met by the FSA in \cref{ex:rec}.  \finex
\end{example}
  For an FSA
  $A = (Q, q_0, \widehat \lint \times \assertionset, \tset)$ on
  $\widehat \lint \times \assertionset$, we let $\simpleruns[q][A]$
  denote the set of simple paths\footnote{A path is \emph{simple} if
	 no state occurs twice on it.} reaching the state $q \in Q$ from
  $q_0$; also, $\fvar[q {\arro[\apred]\aint} q'] \mmdef \fvar[\aint]$
  and
  $\fvar[q {\arro[\apred]{\reccall r \iota}} q'] \mmdef \fvar[\gtvar
  r] \mmdef \tuplevar$ if
  $\recctx(\gtvar r) = (\tuplevar, \apred, \hat q)$.
  Finally, we say that a transition $t \in \tset$ from a state
  $q \in Q$ \emph{fixes a variable $\varid$ (in $A$)} if
  $\varid \in \fvar[t]$ and, for each path
  $\arun \in \simpleruns[q][A]$ there is no transition $t' \in \arun$
  that fixes $\varid$.

The next definition addresses the issues of confusion and
respectfulness described above.
\begin{definition}[Asserted c-automata]\label{def:aca}

  An FSA, say $\chora$, on the alphabet
  $\widehat \lint \times \assertionset$ such that
  \begin{enumerate}
  \item\label{eq:sort} for each co-final span $(\arun,\arun')$ in
	 $\chora$, if there are $t \in \arun$ and $t' \in \arun'$ such that
	 both $t$ and $t'$ fix $\varid$ then $t$ and $t'$ assign the same
	 sort to $\varid$
  \item\label{eq:rec} $\chora$ respects the (fixed) recursion context $\recctx$
  \item\label{eq:detaca} the underlying c-automaton obtained by removing the assertions from $\chora$ is deterministic
  \end{enumerate}
  is an \emph{asserted c-automaton} (ac-automaton for short).
\end{definition}
Intuitively, one can think of a variable $\varid$ fixed at a
transition $t$ as \quo{local} to the receiver of the interaction
labelling $t$; also, the sender of the interaction is aware of the value
to be assigned to $\varid$.
Condition~\eqref{eq:sort} in \cref{def:aca} simply avoids confusion
on the sort of a variable when it could be assigned along different
paths.
%

Without loss of generality, we can assume that
$\fvar[t] \cap \bvar[\apred] = \emptyset$ for all transitions and
predicates $\apred$ of an ac-automaton; in fact, such condition can be
enforced by simply renaming bound variables in predicates.
Hereafter, we write $q \arro\lambda q'$ instead of
$q \arro[\truek]\lambda q'$.


\subsection{Consistent choreography automata}\label{sec:wa}
Our interpretation of transitions as rely-guarantee relations
requires some care.
Indeed, for a transition $t$ to be viable, participants involved in
$t$ must \quo{know} the variables used in $t$.
In particular, if $t$ is an interaction variable then the sender and
receiver in $t$ must \quo{know} the assertion in $t$ and participants
involved in an iteration should \quo{know} the invariant of the loop.
Before formalising this in the next definition, we introduce the
  auxiliary concept of \emph{assertion of a path of an ac-automaton},
  which yields the conjunction of all assertions in $\arun$ while
  substituting recursive variables with actual values of recursive
  calls.
  Formally, if $t = q_1 \arro[{\apred}]{\reccall{r}{\iota}} q_2$ then
  $\nabla(t)=\iota$, otherwise $\nabla(t)$ is the empty substitution.
  
  
  %
  Then the \emph{assertion of a path $\arun$} is defined
  as
  $\assertionof[\arun][] =
  \assertionof[\arun][\mathsf{id}]$ where
  \[
	 \assertionof[\emptyword][\iota] \mmdef \truek
	 \quad\qand\quad
	 \assertionof[{q \arro[\apred] \lambda q' \
		\arun}][\iota] \mmdef \apred\iota' \land
	 \assertionof[{\arun}][\iota']
	 \qand[with]
	 \iota' = \iota[\nabla(q \arro[\apred] \lambda q')]
  \]
  Namely, the assertion of a path is the conjunct of all
  the assertions of its transitions once the recursion
  parameters are updated with their actual values.
We can now define the notion of \emph{knowledge} of a variable.

\begin{definition}[Knowledge]\label{def:knows}
  Let $\chora$ be an ac-automaton.
  A participant $\p \in \ptpset$ \emph{knows $\varid$ at a transition
	 $t = q \arro[\apred]\lambda q'$ in $\chora$} if
  \begin{itemize}
  \item either $t$ fixes $\varid$ and
	 \begin{enumerate}[(a)]
	 \item if $\lambda \in \lint$ then $\p \in \ptpof[\lambda]$ and
	 \item
		if $\lambda = \reccall r \iota$ with
		$\recctx(\gtvar r) = (\tuplevar, \apred, q')$ and 
		$\p$ is on a cycle from $q'$ to $q'$ then
		$\varid \in \tuplevar$
	 \end{enumerate}
  \item or $\varid \in \fvar[\apred]$ and there are a variable
	 $\varid[u]$ and a transition $t'$ on each path
	 $\arun \in \simpleruns[q][\chora]$ such that \p knows $\varid[u]$ at $t'$
	 and $\assertionof[\arun][] \ENTAILS \varid = \varid[u]$ holds.
  \end{itemize}
  Let $\knw$ be the set of variables that \p knows at $t$ in
  $\chora$.
\end{definition}
\begin{example}[OLW knowledge]
  In the FSA of \cref{ex:payload} both $\ptp[vendor]$ and
  $\ptp[customer]$ know $\varid[bill]$ at the outgoing transition of
  state $q_5$.
  Also, $\ptp[customer]$ and $\ptp[wallet]$ know the recursion variable
  $\varid[try]$ of the ac-automaton in \cref{ex:rec}.
  \finex
\end{example}
  The notion of knowledge in \cref{def:knows} is more complex than the
  one in~\cite{bhty10}; this is an effect of the higher
  complexity in the notions of binding and scoping of variables.
\cref{def:knows} is instrumental to transfer the concept of
\emph{history-sensitivity} introduced in~\cite{bhty10} to ac-automata.
\begin{definition}[History sensitiveness]\label{def:hs}
  An ac-automaton $\chora$ is \emph{history-sensitive} if the
  following holds for each transition $t = q \arro[\apred]\lambda q'$
  in $\chora$
  \begin{enumerate}
  \item\label{it:trans} $\lambda = \gint$ implies
	 $\fvar[\apred] \subseteq \knw$, namely $\ptp$ knows each
	 variable free in $\apred$ at $t$.
  \item\label{it:rec}
	 $\lambda = \reccall r \iota$ implies $\fvar[\gtvar r] \subseteq
	 \knw$ for each $\ptp[p] \in \ptpset$ occurring on a cycle from
	 $q'$ to $q'$.
  \end{enumerate}
\end{definition}
Condition~\eqref{it:trans} guarantees that the assertion of a
transition cannot predicate on variables not \quo{accessible} to the
participants of the interaction.
Condition~\eqref{it:rec} ensures that participants involved in a loop
are aware of the loop invariant.
  The notion of history sensitivity in~\cite{bhty10} relies on the
  fact that participant \ptp[p]\ knows a variable
  $\varid$ on each interaction involving $\varid$.
  Here instead a weaker notion is adopted since, due to selective
  participation, the c-automaton may have a transition fixing
  $\varid$ but not involving \ptp[p].
\begin{example}[OLW is history-sensitive]
  The ac-automaton in \cref{ex:rec} is history-sensitive.
  In particular, note that the variable
  $\varid[try]$ in the assertion on the transition from $q_2$ to
  $q_3$ is known to $\ptp[customer]$ and
  $\ptp[wallet]$ since it is in the invariant of the authentication
  loop.
  \finex
\end{example}

For a transition $t$ of an ac-automaton $\chora$ to be enabled, it is
not enough that the source state of $t$ is reachable from the initial
state of $\chora$.
In fact, the transition $t$ can be fired if the information accumulated by
the participants ensures the satisfiability of the assertion of $t$.
To formalise this notion we introduce the following definitions.
%
%
Given a state $q$ of an ac-automaton $\chora$, we let
\begin{align}\label{eq:preof}
  \preof[q][\chora] \mmdef \set{\assertionof[\arun][] \sst \arun \text{ run to } q \text{ in } \chora \text{ and } \assertionof[\arun][] \text{ is satisfiable}}
\end{align}
be the set of \emph{preconditions} of $q$ (in $\chora$) and
\begin{align}\label{eq:ec}
  \enablingof[q][\chora] \mmdef \bigcup_{ \apred[B] \in \preof[q][\chora]}\Set{
  \apred[B] \ENTAILS \bigvee_{q \arro[\apred]\lambda q' \in \chora}\EXISTS {\fvar[\lambda]} \apred
  }
\end{align}
be the set of \emph{enabling conditions} of $q$ (in $\chora$)

Similarly to~\cite{bhty10} for global types, progress of ac-automata
cannot be guaranteed if there is a possible computation leading to a
state with no enabled transitions.
Hence, we adapt from~\cite{bhty10} the notion of \emph{temporal
  satisfiability}.
\begin{definition}[Temporal satisfiability]\label{def:tsat}
	 %
	 An ac-automaton $\chora$ is \emph{temporally satisfiable} if for
	 each $q \in \chora$ reachable from the initial state of $\chora$
	 each formula in $\enablingof[q][\chora]$ is satisfiable.
\end{definition}
\begin{example}[OLW is temporally satisfiable]
	 The ac-automaton in \cref{ex:rec} is temporally satisfiable because
	 the enabling conditions of all the nodes are satisfiable.
	 %
	 However, if the assertion on the transition from $q_2$ to $q_3$
	 were replaced by e.g., $\varid[try] > 3 \land \varid[msg] =
	 \text{"fail"}$ then temporal satisfiability would be violated
	 because the precondition of the simple path from $q_0$ to $q_2$
	 would not entail $0 \leq \varid[try] < 3 \lor \varid[try] > 3$.
	 %
	 %
  \finex
\end{example}

%
%
%
%
As c-automata, ac-automata are well-formed if they are well-sequenced
and well-branched; these two notions are as for c-automata modulo the
presence of assertions, which are disregarded; \iftr we refer to
\cref{def:ac-sw,def:ac-fullaware,def:ac-wb} \else see~\cite{glsty22TR}
\fi for the formal definitions.
\iftr
  %
  We state explicitly the definitions for well-formedness for
  ac-automata.
  \begin{definition}[Well-sequencedness for ac-automata]\label{def:ac-sw}
	 An ac-automaton $\chora$ is \emph{well-sequenced} if for each two
	 consecutive transitions
	 $q \arro[\apred]{\aint} q' \arro[{\apred[B]}]{{\aint[b]}} q''$ either
	 \begin{enumerate}[(a)]
	 \item\label{seq:ac-dep} $\aint \not\parallel \aint[b]$ or
	 \item\label{seq:ac-indep} there is $q'''$ such that
		$q \arro[{\apred[B]}]{\aint[b]} q''' \arro[\apred]{\aint} q''$;
		furthermore for each transition
		$q''' \arro[\apred']{\aint[g]} q''''$, $\aint[g] \parallel \aint$
		and $\aint[g] \parallel \aint[b]$.
	 \end{enumerate}
  \end{definition}

  For well-branchedness we need to slightly adjust the notion of trace;
  traces of runs of an ac-automaton simply ignore assertions:
  \begin{align*}
	 \traceof[\emptyset] \mmdef \emptyword
	 \qqand
	 \traceof[{q \arro[\apred]\lambda q'\ \arun}] \mmdef \lambda\ \arun
  \end{align*}
  Likewise, we let $\ptpof[(\aint, \apred)] \mmdef \ptpof[\aint]$.
  We can now tune up full-awareness for ac-automata.
  \begin{definition}[Full awareness for ac-automata]\label{def:ac-fullaware}
	 Let $(\arun_1,\arun_2)$ be a pair of $q$-runs of an ac-automaton
	 $\chora$.
	 Participant $\p \in \ptpof[\arun_1] \cap \ptpof[\arun_2]$ is
	 \emph{fully aware} of $(\arun_1,\arun_2)$ if there are two labels
	 $(\aint_1,\apred_1), (\aint_2,\apred_2) \in \lint \times
	 \assertionset$ such that $\aint_1 \neq \aint_2$,
	 $\p \in \ptpof[\aint_1] \cap \ptpof[\aint_2]$, and
	 \begin{enumerate}
	 \item\label{it:ac-diff} either $\aint_h$ is the first interaction in $\lang[\arun_h]$
		for $h=1,2$
	 \item\label{it:ac-fa} or there are proper prefixes $\hat\arun_1$ of
		$\arun_1$ and $\hat\arun_2$ of
		$\arun_2$ such that $\traceof[{\proj{\hat\arun_1} p}] =
		\traceof[{\proj{\hat\arun_2} p}]$, the partner of
		\p in $\aint_h$ is fully aware of $(\hat \arun_1, \hat
		\arun_2)$, and $\traceof[\hat \arun_h] \aint_h \in
		\pref[{\traceof[\arun_h]}]$ for $h \in \set{1,2}$.
	 \end{enumerate}
  \end{definition}
  Notice that, analougously to full-awareness for c-automata,
  \cref{def:ac-fullaware} considers only interaction labels.
  Finally we define well-branchedness.
  \begin{definition}[Well-branchedness for ac-automata]\label{def:ac-wb}
  An ac-automaton
  $\chora$ is \emph{well-branched} if for all of its states
  $q$ there is a partition
  $T_1,\dots,T_k$ of $\transset{\chora}{q}$ such that
	 \begin{itemize}
	 \item for all $1 \leq i \neq j \leq k$, $\ptpof[T_i] \cap
		\ptpof[T_j] = \emptyset$ and for each $q \arro[\apred_i]{\aint_i}
		q_i \in T_i$, $q \arro[\apred_j]{\aint_j} q_j \in
		T_j$ there exists $q'$ such that $q_i \arro[\apred_j]{\aint_j} q'$ and $q_j
		\arro[\apred_i]{\aint_i} q'$
	 \item for all $1 \leq i \leq k$,
		$\bigcap_{t \in T_i} \ptpof[t] \neq \emptyset$ and for all
		$\p \in \ptpof[\chora] \setminus \bigcap_{t \in T_i} \ptpof[t]$
		and $q$-span $(\arun_1,\arun_2)$ starting from transitions in
		$T_i$, if $\proj{\arun_1}{p} \neq \proj{\arun_2}{p}$ then either
		\p is fully aware of $(\arun_1,\arun_2)$ or there is
		$i \in \set{1,2}$ such that $\p \not\in \ptpof[\arun_i]$ and
		\begin{itemize}
		\item the first transition in $\arun_{3-i}$ involving \p is with a
		  participant which is fully aware of $(\arun_1,\arun_2)$ and
		\item for all runs $\arun'$ such that $\arun_i\arun'$ is a candidate $q$-branch of
		  $\chora$ the first transition in $\arun'$ involving \p is with a
		  participant which is fully aware of $(\arun_1,\arun_2)$.
		\end{itemize}
	 \end{itemize}
  \end{definition}
\fi
Finally, we can define \emph{consistent} ac-automata.

\begin{definition}[Consistency]\label{def:wa}
  %
  %
  An ac-automaton is \emph{consistent} if it is history-sensitive,
  temporally satisfiable, and well-formed.
\end{definition}


\subsection{Asserted communicating systems}\label{sec:aCS}

Projecting ac-automata requires to handle asserted transitions.
We therefore extend communicating systems to \emph{asserted
  communicating systems} (a-CSs for short), which basically are
communicating systems where CFSMs are \emph{asserted} (a-CFSMs for
short), namely they have transitions decorated with formulae
in $\assertionset$.
The synchronous semantics of a-CSs can be defined as an LTS similarly
to the semantics of communicating systems.
In fact, configurations can be defined as in \cref{def:syncSem} taking
into account assertions when synchronising transitions.
This basically means that assertions are used to verify that a sent
message guarantees the expectation of its receiver, that is the
assertion the receiver relies upon.

Recall that a \emph{prenex normal form} is a formula
$\mathcal{Q} \apred$ where $\mathcal{Q}$ is a sequence of quantifiers
and variables (called \emph{prefix}) and $\apred$ is a
quantifiers-free logical formula (called \emph{matrix})~\cite{men87}.
If $\apred, \apred[B] \in \assertionset$ then
$\apred \pcirc \apred[B]$ is a logical formula obtained by quantifying
with the prefix of a prenex normal form $\apred'$ logically equivalent
to $\apred$ the conjunction of $\apred[B]$ with the matrix of
$\apred'$.
Similarly to assertions for paths on ac-automata, we define
assertions of a run of an a-CFSM
\[
  \assertionofcm[\emptyword][]  \mmdef  \truek
  \quad\qqand\quad
  \assertionofcm[{q \arro[\apred] \ell q' \ \arun}][]
  \mmdef
  \apred \pcirc \assertionofcm[\arun][]
\]
The preconditions of a state of an a-CFSM are defined as for
ac-automata but for the use of the assertion function
$\assertionofcm[][]$ for CFSMs instead of the corresponding one for
ac-automata.
%

\begin{definition}[Semantics of a-CS]\label{def:acfsm}
  The \emph{semantics} of an a-CS
  $\aCS = (\aCM_{\p})_{\p \in \ptpset}$ is the transition system
  $\ssem{\aCS}$ defined by taking the set of configurations as in
  \cref{def:syncSem} and as set of transitions the smallest set
  including
  \begin{itemize}
  \item $\confel_1 \arro[\apred]{\gint} \confel_2$ if $\p, \q \in \ptpset$ and
	 \begin{itemize}
	 \item $\pstate{1}{\p} \arro[\apred]{\aout} \pstate{2}{\p}$ in
		$\aCM_{\p}$,
		$\pstate{1}{\q} \arro[{\apred[B]}]{\ain} \pstate{2}{\q}$ in
		$\aCM_{\q}$ and,
		  there are $\apred' \in \preof[\pstate 1 \p][\aCM_{\p}]$ and
		  $\apred[B]' \in \preof[\pstate 1 \q][\aCM_{\q}]$ such that it holds
$
			 (\apred' \ENTAILS \apred) \land (\apred[B]' \ENTAILS
			 \apred[B]) \land (\apred' \pcirc \apred[B]' \pcirc \apred)
			 \ENTAILS \EXISTS{\fvar[\amsg]}{\apred[B]}
$
	 \item and $\pstate{1}{\ptp[x]} = \pstate{2}{\ptp[x]}$ for all
		$\ptp[x] \in \ptpset \setminus \set{\p,\q}$
	 \end{itemize}
  \item $\confel_1 \arro[\apred]{\emptyword} \confel_2$ if
	 $\p \in \ptpset$ and
	 \begin{itemize}
	 \item $\pstate{1}{\p} \arro[\apred]{\emptyword} \pstate{2}{\p}$ in
		$\aCM_{\p}$ and there is
		$\apred' \in \preof[\pstate 1 \p][\aCM_{\p}]$ such that
		$\apred' \ENTAILS \apred$
	 \item and $\pstate{1}{\ptp[x]} = \pstate{2}{\ptp[x]}$ for all
		$\ptp[x] \in \ptpset \setminus \set{\p}$.
	 \end{itemize}
  \end{itemize}
\end{definition}

Like the projection of communicating systems
(cf. \cref{def:projection}), the projection of a-CSs relies on the
determinisation and minimisation of a-CFSMs.
The presence of assertions imposes to adapt the classical
constructions on FSA to a-CFSMs.
More precisely, we have to generalise equality on labels of the form
$(\lambda,\apred)$.
Essentially, this is done by (injectively) renaming the variables
occurring in actions and assertions decorating transitions.
For $\sigma$ an endofunction on variables and
$\amsg = \atuple{\varid_1 \ \asort_1, \ldots, \varid_h \ \asort_h}$
let
$\amsg\sigma \mmdef \atuple{\sigma(\varid_1) \ \asort_1, \ldots,
  \sigma(\varid_h) \ \asort_h}$; we define
\begin{align*}
  \emptyword \sigma \mmdef \emptyword
  \qqand
  (\gint)\sigma \mmdef \gint[@][@][m']
  \quad\qqand[where]
  \amsg[m'] = \amsg\sigma
\end{align*}
Two labels $(\lambda, \apred)$ and $(\lambda', \apred')$ are
\emph{equivalent}, in symbols
$(\lambda, \apred) \sim (\lambda', \apred')$, if there is an injective
substitution of variables such that $\lambda = \lambda'\sigma$ and
$\apred$ is logically equivalent to $\apred'\sigma$.
We will similarly consider equivalence on
$\lact \times \assertionset$.

The $\varepsilon$-closure of an a-CFSM
$\aCM = \autom Q {q_0} \lact \tset$ is the map
$\eclos[][\aCM] : Q \to 2^{Q \times \assertionset}$ defined assigning
to each state $q$ of $\aCM$ the set of states reachable with
$\emptyword$-transitions together with their assertions; more
precisely, for each $q \in Q$, $\eclos[q][\aCM]$ is the smallest set
satisfying
\begin{align}\label{eq:eclos}
  \eclos[q][\aCM] \mmdef \set{(q, \truek)}
  \cup
  \bigcup_{(q',\apred) \in \eclos[q][\aCM]}\Set{(q'', \apred \pcirc \apred') \sst q' \arro[\apred'] \emptyword q'' \in \tset}
\end{align}
Removal of $\emptyword$-transitions from an a-CFSM $\aCM$ is computed,
using~\eqref{eq:eclos}, similarly to the classical algorithm on FSAs
$\conf{\mathbb{Q}, \eclos[q_0], \lact, \mathbb{T}}$ where
\begin{align*}
  \mathbb{Q} & = \set{\eclos[q][\aCM] \sst q \in Q}
  \qqand\\
  \mathbb{T} & = \set{Q \arro[\apred_1 \pcirc \apred \pcirc
  \apred_2]\ell Q' \sst q_1 \arro[\apred] \ell q_2 \in
  \tset \text{ for some } (q_1, \apred_1) \in Q \text{
	 and } (q_2, \apred_2) \in Q'}
\end{align*}
Handling assertions in the determinisation algorithm requires
some care.
We illustrate the problem in the following example.
\begin{example}[Non-determinism \& assertions]
  Consider the two a-CFSMs below
  \[\begin{tikzpicture}[node distance=.05cm and 3cm]
      \tikzstyle{every state}=[cnode]
      \tikzstyle{every edge}=[carrow]
      \node[state, initial, initial text={$\aCM = $}] (q0) {$q_0$};
		\node[state, above right = of q0] (q1) {$q_1$};
		\node[state, below right = of q0] (q2) {$q_2$};
		\path (q0) edge[bend left=15] node[above] {$\ell$} node[below]{$\apred$} (q1);
		\path (q0) edge[bend right=15] node[above] {$\ell$} node[below]{$\apred[B]$} (q2);
	 \end{tikzpicture}
	 \qquad\qquad
	 \begin{tikzpicture}[node distance=.35cm and 2cm]
      \tikzstyle{every state}=[cnode]
      \tikzstyle{every edge}=[carrow]
      \node[state, initial, initial text={$\aCM' =$}] (q0) {$q_0$};
		\node[state, above right = of q0] (q1) {$q_1$};
		\node[state, below right = of q0] (q2) {$q_2$};
		\node[state, right = 3cm of q0] (q1q2) {$q'$};
		\path (q0) edge[bend left=15] node[above] {$\ell$} node[below]{$\apred \land \neg \apred[B]$} (q1);
		\path (q0) edge node[above,near end] {$\ell$} node[below,near end]{$\apred \land \apred[B]$} (q1q2);
		\path (q0) edge[bend right=15] node[above] {$\ell$} node[below]{$\neg \apred \land \apred[B]$} (q2);
	 \end{tikzpicture}
  \]
  If both $\apred$ and $\apred[B]$ are satisfiable then $\aCM$ has a
  non-deterministic behaviour.
  We therefore aim to define a determinisation algorithm which on
  $\aCM$ yields something like $\aCM'$.
  Also, the new state $q'$ should provide transitions
  corresponding to both transitions from $q_1$ and $q_2$.
\finex
\end{example}

Let $\aCM = \autom Q {q_0} \lact \tset$ be a CFSM.
A state $q \in Q$ \emph{is non-deterministic on $\ell \in \lact$} if
its \emph{derivative in $\aCM$ with respect to $\ell$}, defined as
$\cmderiv[q,\ell][\aCM] \mmdef \set{(\apred, q') \sst q
  \arro[\apred]\ell q' \text{ in } \aCM}$, has more than one element.
Also, if $X,Y \subseteq \cmderiv[q,\ell][\aCM]$ then
$\assertionnd \mmdef \displaystyle{\bigwedge_{(\apred,q) \in X} \apred
  \land \bigwedge_{(\apred[B],q) \in Y} \neg \apred[B]}$.
The determinisation of $\aCM$ is obtained by applying the classical
FSA determinisation algorithm to the $\varepsilon$-closure of the
a-CFSM
$\aCM' = \autom{Q'}{q_0} \lact {\tset' \cup \tset'' \cup \tset'''}$
where
\begin{align*}
  Q' \mmdef & Q \cup \bigcup_{q \in Q, \ell \in \lact}\Set{\conf X \sst
			q \text{ is non-deterministic on }
			\ell \text{ and } \emptyset \neq X \subseteq \cmderiv[q,\ell][\aCM]
			}
  \\
  \tset' \mmdef & \Set{q \arro[\apred]\ell q' \in \tset \sst \cmderiv[q,\ell][\aCM] \text{ is a singleton}}
  \\
  \tset'' \mmdef & \bigcup_{\emptyset \neq X \subseteq \cmderiv[q,\ell][\aCM]}\Set{
				  q \arro[{\assertionnd[X,Y]}]\ell \conf{X} \sst \cmderiv[q,\ell][\aCM] \text{ not a singleton}
				  \qand
				  Y = \cmderiv[q,\ell][\aCM] \setminus X
				  }
  \\
  \tset''' \mmdef & \bigcup_{\emptyset \neq X \subseteq \cmderiv[q,\ell][\aCM]}\Set{
					\conf{X} \arro[\apred]\ell q' \sst \text{there is } q \arro[\apred]\ell q' \in \tset
					\qand[with] \set q \times \assertionset \cap X \neq \emptyset
				  }
\end{align*}
Basically, we ($i$) introduce a new state $\conf X$ for any
combination of assertions of $\ell$-transitions, ($ii$) replace
non-deterministic behaviours on $\ell$ with a set of
$\ell$-transitions with \quo{disjoint} assertions, and ($iii$) let
state $\conf X$ have the transitions that any of the states $q \in X$
has in $\aCM$.

We remark that the adaptation of the determinisation algorithm is
imposed by the use of a-CFSMs to model local behaviour.
This is a main technical difference with respect to~\cite{bhty10}
where local types with assertions, which need no determinisation, play
the role of a-CFSMs.

The projection of an ac-automaton acts as the projection of c-automata
on interactions and accommodates the variables not known to the
participant by existentially quantifying them.
This requires to consider the points in the ac-automaton where
variables are fixed.
\begin{definition}[Projection of ac-automata]\label{def:proj-aca}
  The \emph{projection on $\p \in \ptpset$} of an asserted transition
  $t$ in an ac-automaton $\chora$ on $\ptpset$, written
  $\proj{t}{\chora,p}$, is defined by:
  \[
    \proj{t}{\chora,p} \ = \begin{cases}
      q \arro[\apred]{\aout} q' & \text{if }  t = q \arro[\apred]{\gint} q'
      \\[1em]
		q \arro[\apred]{\ain[q][p]} q' &  \text{if } t = q \arro[\apred]{\gint[q][p]} q'
      \\[1em]
      q \arro[{\EXISTS {\varid[X]} \apred(\nabla(t))}]{\emptyword} q' & \text{if } t = q \arro[\apred]{\lambda} q', \ \p \not\in \ptpof[\lambda],
		\text{ and } \varid[X] = \set{\varid \in \fvar[\apred] \sst t \text{ fixes } \varid \text{ in } \chora}
    \end{cases}
  \]
  The \emph{projection} of $\chora$ on $\p \in \ptpset$, denoted
  $\proj{\chora} p$, is obtained by determinising and minimising
  up-to-language equivalence the \emph{intermediate} a-CFSM
  \[
    A_\p = \conf{\sset, q_0, \lact, \Set{\proj{(q \arro[\apred]{\lambda} q')}{\chora,p}
		  \sst q \arro[\apred]{\lambda} q' \text{ in } \chora}}
  \]
  where ($i$) syntactic equality of labels is replaced by $\sim$ and
  ($ii$) $\emptyword$-transitions are those with label of the form
  $(\emptyword, \apred)$.
  The \emph{projection of $\chora$}, written $\proj{\chora}{}$, is the
  a-CS $(\proj{\chora}{\p})_{\p \in \ptpset}$.
\end{definition}

\iftr
  Well-formed consistent ac-automata are deadlock-free; the proof
  mimics the one for c-automata.
  Let $q \arro[\apred_1,\ldots,\apred_n] \emptyword q'$ abbreviate
  $q \arro[\apred_1] \emptyword \ldots \arro[\apred_n] \emptyword q'$.
  \begin{lemma}\label{lem:ac-intfromruns}
	 If $A_\p$ and $A_\q$ are the intermediate a-CFSM for two
	 participants $\p$ and $\q$ of a well-formed ac-automaton $\chora$
	 and
	 \begin{align*}
		q {\arro[\apred_1,\ldots,\apred_n] \emptyword} q_{\p} \arro[\apred]{\aout}{q'_{\p}} \text{ in } A_{\p}
		\qqand
		q {\arro[\apred_1',\ldots,\apred_n'] \emptyword} q_{\q} \arro[\apred]{\ain}{q'_{\q}} \text{ in } A_{\q}
	 \end{align*}
	 then there is a state $q'$
	 such that $q \arro[\apred] \gint q' \in \transset \chora q$.
  \end{lemma}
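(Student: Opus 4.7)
The plan is to adapt the proof of \cref{lem:intfromruns} for the asserted setting. Let me lift the two projected runs back to runs $\arun_\p$ and $\arun_\q$ in $\chora$ that start at $q$: for each $\emptyword$-transition in the intermediate a-CFSMs there is a corresponding transition in $\chora$ on which the given participant does not occur, and the final $\aout$ (resp.\ $\ain$) transition lifts to a transition labelled $\gint$ with assertion $\apred$. Since the underlying c-automaton of $\chora$ is deterministic (by \cref{def:aca}\eqref{eq:detaca}) such liftings are unambiguous.

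I would then split on whether $\arun_\p = \arun_\q$. In the first case, the shared run ends with a transition $t = q_0' \arro[\apred]\gint q'$ in $\chora$, and all preceding transitions of the run project to $\emptyword$ on both $\p$ and $\q$, hence involve neither $\p$ nor $\q$. By well-sequencedness for ac-automata (\cref{def:ac-sw}) the transition $t$ is concurrent with each preceding transition, and its assertion $\apred$ is preserved along the commuting diamonds, so by induction on the length of the prefix we obtain the required $q \arro[\apred] \gint q'$ in $\chora$.

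In the second case, let $\hat q$ be the last common state, so $\arun_\p = \arun\ t_\p\ \arun_\p'\ (q_\p \arro[\apred]\gint q_\p')$ and $\arun_\q = \arun\ t_\q\ \arun_\q'\ (q_\q \arro[\apred]\gint q_\q')$ with $t_\p \neq t_\q$ both starting from $\hat q$. Reasoning as in the proof of \cref{lem:intfromruns}, one shows that $t_\p$ and $t_\q$ must lie in different classes of the partition of $\transset\chora{\hat q}$ provided by well-branchedness for ac-automata (\cref{def:ac-wb}): otherwise \p or \q would have to be fully aware of the span $(\arun_\p\setminus\arun,\arun_\q\setminus\arun)$, which is impossible since the entire subruns $\arun_\p'$ and $\arun_\q'$ project to $\emptyword$ on \p and \q respectively. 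Hence $t_\p$ and $t_\q$ commute, and by iterating this argument (using \cref{def:ac-wb} to commute transitions and preserve their assertions along the resulting diamonds) we obtain that the final $\gint$-labelled transition of $\arun_\p$ can be commuted leftward across $\arun_\q$ (and vice versa), ultimately producing a single transition $q \arro[\apred]\gint q'$ in $\chora$.

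The main obstacle is verifying that the diamond-completion properties in \cref{def:ac-sw,def:ac-wb} not only permute labels but also preserve the assertion $\apred$ at the position of the commuted transition; this is built into the asserted formulations of these definitions (where the same assertion labels paired edges of the diamond), so the argument goes through once the liftings of the intermediate runs are chosen canonically.
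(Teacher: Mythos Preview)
Your proposal is correct and follows essentially the same approach as the paper's proof, which explicitly states that the argument for \cref{lem:intfromruns} can be repeated verbatim since assertions are immaterial for concurrent transitions; your observation that the asserted formulations of well-sequencedness and well-branchedness preserve the assertion $\apred$ across the commuting diamonds is exactly this point.
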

  \begin{proof}
	 The proof of \cref{lem:intfromruns} can be repeated by observing
	 that assertions do not play any role for concurrent transitions.
	 Details follow.

	 Let the two runs for \p and \q be the projections of runs $\arun_\p$
	 and $\arun_\q$ in $\chora$.
	 We have two cases, depending on whether $\arun_\p = \arun_\q$.

	 If $\arun_\p = \arun_\q$ then the last transition of $\arun_\p$
	 is concurrent to all the previous ones.
	 Indeed, the previous transitions are projected to $\emptyword$ both on
	 $\p$ and on $\q$, hence neither $\p$ nor $\q$ can occur in the
	 label.
	 Thus, by well-sequencedness, the state $q$ has a transition with
	 label $(\gint,\apred)$.

	 Assume $\arun_\p \neq \arun_\q$.
	 By construction,
	 \[
		\arun_\p = \arun \ t_\p \ \arun'_\p \ q_\p \arro[\apred]\gint q'_\p
		\qqand
		\arun_\q = \arun \ t_\q \ \arun'_\q \ q_\q \arro[\apred]\gint q'_\q
	 \]
	 for a $q$-run $\arun$, two transitions $t_\p$ and $t_\q$, two runs
	 $\arun'_\p$ and $\arun'_\q$; observe that the ending state of
	 $\arun$, say $\hat q$, is also the source state of $t_\p$ and $t_\q$
	 while $\arun'_\p$ (resp. $\arun'_\q$) ends in $q_\p$ (resp. $q_\q$).

	 Note that $\p \not\in \ptpof[\arun \ t_\p \ \arun'_\p]$ and
	 $\q \not\in \ptpof[\arun \ t_\q \ \arun'_\q]$.
	 If $\q \not\in \ptpof[\arun \ t_\p \ \arun'_\p]$ or
	 $\p \not\in \ptpof[\arun \ t_\q \ \arun'_\q]$ then the thesis
	 immediately follows by well-sequencedness as before.
	 Therefore, without loss of generality, assume
	 $\p \in \ptpof[\arun \ t_\q \ \arun'_\q]$ (the case where \q occurs
	 in $\arun \ t_\p \ \arun'_\p$ is similar and hence omitted).

	 Consider now the two $\hat q$-runs $\arun_1 = t_\p \ \arun'_\p$ and
	 $\arun_2 = t_\q \ \arun'_\q$.
	 By well-branchedness, there is a partition of
	 $\transset\chora{\hat q}$ satisfying the conditions of
	 \cref{def:ac-wb} (using $\sim$ for label equality).
	 Then $t_\p$ and $t_\q$ cannot belong to the same equivalence class
	 of such partition since neither \p nor \q are fully aware of
	 $(\arun_1,\arun_2)$ and the first interaction of \p on $\arun_1$
	 is with \q.
	 Hence, $t_\p = \hat q \arro{\aint_1} q_1$ and
	 $t_\q = \hat q \arro{\aint_2} q_2$ necessarily belong to different
	 equivalence classes.
	 Therefore, again by well-branchedness, there is a state $\hat q'$
	 such that $q_1 \arro{\aint_2} \hat q'$ and
	 $q_2 \arro{\aint_1} \hat q'$.
	 Hence, the transitions of $\arun'_\p$ and those of $\arun'_\q$ form
	 commuting diamonds and therefore there is a $q$-run in $\chora$
	 where \q does not occur and all the transitions involving \p in
	 $\arun'_\q$ follow a transition with label $\gint$ (easily by
	 induction on the length of $\arun'_\p$ and $\arun'_\q$).
	 The thesis then follows since, as before, the transition labelled by
	 $\gint$ commutes with any preceding transition by
	 well-sequencedness.
  \end{proof}

  \begin{lemma}\label{lemma:ac-diffclass}

	 Let $\chora$ be a well-formed ac-automaton and $(\arun_1,\arun_2)$ a
	 $q$-span in $\chora$ with first transitions $t_1$ and $t_2$,
	 respectively.
	 Let $t_1 = q \arro{\aint_1} q_1'$.  If $t_1$ and $t_2$ are
	 concurrent then either $\aint_1$ occurs on $\traceof[\arun_2]$ or
	 for each run $\arun_2 \ \arun$ in $\chora$, $\aint_1$ occurs in
	 $\traceof[\arun]$.
  \end{lemma}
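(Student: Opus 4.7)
The plan is to mirror the proof of \cref{lemma:diffclass}, observing that the well-formedness notions for ac-automata (\cref{def:ac-sw,def:ac-wb}) impose the same structural requirements as in the non-asserted case, since they treat labels up to assertions. Coinduction on the length of $\arun_2$ is the natural device: we propagate the occurrence of $\aint_1$ one transition at a time along $\arun_2$ by repeatedly producing commuting diamonds.

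First I would use well-branchedness to analyse $t_1$ and $t_2$: since both belong to $\transset{\chora}{q}$ and are concurrent, the hypothesis of \cref{def:ac-wb} forces them to lie in different classes $T_i,T_j$ of the partition of $\transset{\chora}{q}$. Thus there exists a state $q'$ with
\[
  q \arro{\aint_1} q_1' \arro{\aint_2} q'
  \qquad\text{and}\qquad
  q \arro{\aint_2} q_2' \arro{\aint_1} q',
\]
where $t_2 = q \arro{\aint_2} q_2'$ is the first transition of $\arun_2$.

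Next I would perform the case split on whether $q'$ occurs in $\arun_2$. If it does, then the transition $q_2' \arro{\aint_1} q'$ lies on $\arun_2$ (using determinism of the underlying c-automaton, \cref{def:aca}\eqref{eq:detaca}), so $\aint_1 \in \traceof[\arun_2]$ and we are done. If $q'$ does not occur on $\arun_2$, let $t_3 = q_2' \arro{\aint_3} q_2''$ be the transition of $\arun_2$ following $t_2$. By well-sequencedness (\cref{def:ac-sw}\eqref{seq:ac-indep}) applied to the consecutive pair $q \arro{\aint_2} q_2' \arro{\aint_3} q_2''$, together with the fact that $q_2' \arro{\aint_1} q'$ is an outgoing transition of $q_2'$, we obtain $\aint_3 \parallel \aint_1$, and, by reading the commuting diamond built by $t_2$ and $t_3$, the pair $(q_2' \arro{\aint_1} q' \ ,\ q_2'' \ \text{tail}(\arun_2))$ remains a $q_2'$-span of the same concurrent shape, with the same $\aint_1$ to be tracked. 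The coinductive hypothesis yields that $\aint_1$ occurs on this shorter tail or on every continuation thereof, and prepending $t_2$ gives the statement for $\arun_2$.

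The main obstacle is verifying that the combinatorial step goes through in the asserted setting. Since \cref{def:ac-sw,def:ac-wb} impose conditions purely on the interaction part of labels (assertions are passive decorations), commuting diamonds and the partition-based analysis work exactly as in \cref{lemma:diffclass}. The only subtlety is that the transitions produced by well-branchedness carry assertions that we never need to inspect: the statement of the lemma refers to labels through $\traceof[\cdot]$, which also strips assertions, so the induction does not need to track any logical information. Hence the proof is essentially a verbatim reformulation of that of \cref{lemma:diffclass}.
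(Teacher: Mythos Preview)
Your proposal is correct and takes essentially the same approach as the paper: the paper's proof simply states that one can reshape the proof of \cref{lemma:diffclass} observing that assertions are immaterial to the reasoning, which is precisely what you do (with more detail than the paper itself provides).
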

  \begin{proof}
	 We can reshape the proof of \cref{lemma:diffclass} observing that
	 assertions are immaterial to the reasoning.
  \end{proof}

  \begin{lemma}\label{lemma:ac-cui}
	 Let $\chora$ be a well-formed ac-automaton,
	 $\p,\q \in \ptpset_{\chora}$, and $\aint \in \lint$.
	 If $\aword_{\p} \aint,\ \aword_{\q} \aint,\ \aword \in \lang[\chora]$
	 are three words such that $\proj{\aword_{\p}}{p}=\proj{\aword}{p}$
	 and $\proj{\aword_{\q}}{q}=\proj{\aword}{q}$ then
	 $\aword \aint \in \lang[\chora]$.
  \end{lemma}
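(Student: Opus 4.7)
The plan is to mirror the proof of \cref{lemma:cui} step by step, exploiting the fact that well-sequencedness, well-branchedness, and full-awareness for ac-automata (\cref{def:ac-sw,def:ac-fullaware,def:ac-wb}) are defined by disregarding assertions, so the same structural arguments transfer. Likewise, since $\lang[\chora]$ only records labels in $\widehat\lint$, assertions play no role in determining membership in the language. In particular, \cref{lemma:ac-diffclass} is available as the analogue of \cref{lemma:diffclass}.

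First I would assume, towards a contradiction, that $\aword\aint \notin \lang[\chora]$ with $\aint = \gint$, and fix runs $\arun,\arun_\p,\arun_\q$ of $\chora$ with traces $\aword,\aword_\p\aint,\aword_\q\aint$ respectively. Then I would split into cases according to which of the three words is empty. The fully empty case is trivial. If $\aword=\emptyword$, I would observe that $\p$ does not occur on $\arun_\p$ and $\q$ does not occur on $\arun_\q$; by well-sequencedness the label $\gint$ commutes to the front, unless both $\p$ and $\q$ occur in the \quo{other} run, in which case neither can be fully aware of the relevant $q_0$-span and well-branchedness is violated.

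The case where exactly one of $\aword_\p$, $\aword_\q$ is empty (say $\aword_\p$) is the most delicate. Here $q_0$ has an outgoing $\gint$-transition; by \cref{lemma:ac-diffclass}, if this transition were concurrent with the initial transition of $\arun$, then $\gint$ would appear on every continuation of $\arun$, contradicting $\aword\aint\notin\lang[\chora]$. So the two transitions must lie in the same class of the partition at $q_0$ given by well-branchedness. Since $\proj\aword p = \emptyword$ but $\proj\aword q = \proj{\aword_\q}q$, one then shows that $\q$ must be fully aware, forcing $\arun$ and $\arun_\q$ to share a non-empty common prefix; a further case split on whether the two runs' first differing transitions are concurrent or not yields a contradiction in each sub-case (either using \cref{lemma:ac-diffclass} again, or using the fact that full awareness of $\q$ would force $\p$ to occur on $\arun_\p$).

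For the all non-empty case, I would locate the first state $\hat q$ on $\arun$ after which $\arun$ and $\arun_\p$ diverge, through transitions $t$ and $t_\p$. If $t,t_\p$ lay in different partition classes, \cref{lemma:ac-diffclass} would let the final $\gint$-transition of $\arun_\p$ commute all the way along $\arun$, producing $\aword\aint$ in the language, a contradiction. Hence they lie in the same class, but then $\p$ ought to be fully aware of $(\arun,\arun_\p)$, which is ruled out because $\proj\aword p = \proj{\aword_\p} p$ forces both $t=t_\p$ (ruling out clause \eqref{it:ac-diff} of \cref{def:ac-fullaware}) and the absence of a differentiating continuation on the $\p$-projections (ruling out clause \eqref{it:ac-fa}). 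The same argument applied symmetrically to $(\aword,\aword_\q)$ finishes the case. The main obstacle I anticipate is the bookkeeping for the sub-case analysis in the \quo{one empty word} case, where multiple applications of well-branchedness and full-awareness must be composed without confusing which participant is performing the distinguishing action.
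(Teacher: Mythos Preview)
Your proposal is correct and follows exactly the paper's approach: the paper's proof simply states that one can reshape the proof of \cref{lemma:cui} by observing that assertions are immaterial to the reasoning, and you have spelled out precisely this mirroring in detail. If anything, your write-up is more explicit than the paper's one-line reference back to the c-automaton case.
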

  \begin{proof}
	 We can reshape the proof of \cref{lemma:cui} observing that
	 assertions are immaterial to the reasoning.
  \end{proof}
\fi

We show that projections of consistent ac-automata yield deadlock-free
asserted communicating systems.
The next result corresponds to \cref{prop:bisim} for ac-automata.
The main differences are (i) that consistency of ac-automata is
required (as opposed to well-formedness for c-automata) and (ii) that
an ac-automaton is weakly bisimilar to the corresponding projected
system due to the fact that iterative transitions of the ac-automaton
are projected on $\emptyword$-transitions.
%

\begin{proposition}\label{prop:ac-bisim}
  Any consistent ac-automaton $\chora$ is weakly
  bisimilar to $\ssem{\proj{\chora}{}}$.
\end{proposition}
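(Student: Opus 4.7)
The plan is to adapt the proof of \cref{prop:bisim} to the asserted setting, accommodating both the presence of assertions and the fact that iteration transitions in $\chora$ are projected to $\emptyword$-transitions, which forces us to settle for a weak bisimulation rather than a strong one. First I would define the candidate relation
\[
  \rcal = \set{(q,\confel) \sst q \in \confel(\p) \text{ for each } \p \in \ptpset_{\chora}}
\]
exactly as in the proof of \cref{prop:bisim}, exploiting the fact that determinisation and minimisation make configurations of the projected system into sets of states of $\chora$. The initial state $q_0$ and initial configuration $\confel_0$ are related, so the base case is immediate.

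Next I would prove the simulation conditions, treating interaction transitions and iteration transitions separately. For an interaction challenge $\confel \arro[\apred]{\gint} \confel'$ from $\ssem{\proj{\chora}{}}$, I would unfold \cref{def:acfsm}: there exist $\tilde q_\p \in \confel(\p)$ and $\tilde q_\q \in \confel(\q)$ with $\emptyword$-paths in the intermediate a-CFSMs $A_\p$ and $A_\q$ leading to states that can fire $\aout$ and $\ain$ with matching assertions. I would then lift these to runs in $\chora$ and apply (the asserted analogues of) \cref{lem:intfromruns,lemma:cui}, which are restated for ac-automata in the extended version, to conclude that $\chora$ has a corresponding transition $q \arro[\apred]\gint q'$ from some $q \in \confel(\p) \cap \confel(\q)$. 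For an iteration transition in $\chora$, which is labelled by some $\reccall r \iota$, the projection turns it into an $\emptyword$-transition on every participant, so it is absorbed into the $\emptyword$-closure used to build configurations; the matching move in $\ssem{\proj\chora{}}$ is therefore the empty one, as permitted by weak bisimulation. The symmetric direction (challenge from $\chora$) follows the same pattern: an interaction transition $q \arro[\apred]\gint q'$ in $\chora$ projects to a $\aout$ on \p, a $\ain$ on \q, and $\emptyword$-transitions on every other participant, which combine into a matching interaction move in the synchronous semantics, while iteration transitions are matched by zero steps.

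The genuinely new ingredient with respect to \cref{prop:bisim} is checking that, whenever $\chora$ can fire an asserted transition from a reachable state $q$, the \emph{assertion side-conditions} demanded by \cref{def:acfsm} in $\ssem{\proj{\chora}{}}$ are also satisfied, and vice versa. Here temporal satisfiability guarantees that the enabling conditions at $q$ are satisfiable, history-sensitivity guarantees that the participants involved in a transition actually know the variables mentioned in its assertion (so the existential quantification introduced by projection on non-participants is harmless), and the definition of $\preof$ together with \cref{def:proj-aca} ensures that the preconditions accumulated along paths in $\chora$ and in the projected a-CFSMs coincide up to $\pcirc$-composition with assertions on $\emptyword$-transitions. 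Checking that the entailment $(\apred' \ENTAILS \apred) \land (\apred[B]' \ENTAILS \apred[B]) \land (\apred' \pcirc \apred[B]' \pcirc \apred) \ENTAILS \EXISTS{\fvar[\amsg]}{\apred[B]}$ in \cref{def:acfsm} holds exactly when the corresponding transition is enabled in $\chora$ is the main technical obstacle, and I would establish it by induction on the length of the path leading to $q$, using consistency at each step to propagate satisfiability and knowledge of variables. Finally, well-formedness (inherited from \cref{def:wa}) supplies the combinatorial machinery already used in \cref{prop:bisim} to argue that states of the projection faithfully track sets of ac-automaton states reachable by $\emptyword$-moves, closing the weak bisimulation.
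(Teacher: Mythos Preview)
Your proposal is correct and follows essentially the same approach as the paper: the same candidate relation $\rcal$, the same case split between interaction and iteration transitions (the latter matched weakly by $\emptyword$-steps), and the same reliance on the asserted analogues of \cref{lem:intfromruns} and \cref{lemma:cui} for the challenge coming from the projected system. The one place where your plan diverges slightly is the verification of the assertion side-conditions: you sketch an induction on path length and invoke history-sensitivity explicitly, whereas the paper argues more directly by observing that, since both sender and receiver inherit the \emph{same} assertion $\apred$ from the common transition in $\chora$, the entailment in \cref{def:acfsm} collapses to checking that some precondition in $\preof$ entails $\apred$, which temporal satisfiability supplies; history-sensitivity is not invoked in the proof proper.
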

\iftr
\begin{proof}
  Let $\chora=\conf{Q, q_0,\lint,\tset}$ and let $\sset$ be the set of
  configurations of $\ssem{\proj \chora {}}$.
  Recall that, due to determinisation and minimisation, for each
  $\confel \in \sset$ and each participant $\p$, $\confel(\p)$ is a
  subset of $Q \cup (Q \times \assertionset)$.
  Also, $q \in \confel(\p)$ holds if $q$ belongs to $\confel(\p)$ or
  if there is an assertion $\apred$ such that
  $(q,\apred) \in \confel(\p)$.
  We show by coinduction that  the relation
  \begin{align*}
	 \rcal = \set{(q,\confel) \in Q \times \sset \sst q \in
	 \confel(\p) \text{ for each } \p \in \ptpset_{\chora}}
  \end{align*}
  is a weak bisimulation, where in $\chora$ iterative transitions are
  treated as $\emptyword$-transitions.
  Since weak bisimulation implies trace equivalence,
  we also have that corresponding elements are reachable via the same
  trace.

  Let $(q,\confel) \in \rcal$; fixed
  $\apred[B] \in \preof[q][\chora]$, by definition of temporal
  satisfiability we have that
  \[
	 \apred[B] \ENTAILS \bigvee_{q \arro[\apred]\lambda q' \in
		\chora}\EXISTS {\fvar[\lambda]} \apred \qquad \text{is
		satisfiable}
  \]
  and we take an enabled transition $t$ from $q$ in $\chora$ (i.e.,
  the assertion of $t$ is entailed by $\apred[B]$).
  We have two cases depending on whether the challenge $t$ is an
  interaction or an iterative transition.
  \begin{itemize}
  \item If $t = q \arro[\apred] \gint q'$ then, by definition of
	 projection (cf. \cref{def:proj-aca}) and $\varepsilon$-closure,
	 $\confel(\p) \arro[\apred] \aout \confel'(\p)$ is in
	 $\proj \chora p$ and $\confel(\q) \arro[\apred] \ain \confel'(\q)$
	 is in $\proj \chora q$ (since $q \in \confel(\p) \cap
	 \confel(\q)$) while $\confel(\ptp[x]) = \confel'(\ptp[x])$ for
	 each $\ptp[x] \notin \set{\p,\q}$.
	 Since $\apred[B] \in \preof[q][\chora]$, by definition, there is a
	 run $\arun$ to $q$ in $\chora$ such that $\assertionof[\arun][]$
	 is satisfiable; hence $\proj \arun p$ is a run to $q$ in (the
	 intermediate of) $\proj \chora p$ with
	 $\assertionof[{\proj \arun p}][]$ satisfiable and entailing
	 $\apred$, and likewise for $\q$.
	 The thesis then follows from the definition of semantics of a-CS
	 since $\confel \arro[\apred] \gint \confel'$ because $\apred$
	 entails $\EXISTS{\fvar[\amsg]} \apred$.
  \item If $t = q \arro[\apred]{\reccall r \iota} q'$, then, by
	 definition of projection (cf. \cref{def:proj-aca}), for all
	 $\p \in \ptpset_{\chora}$ the projected a-CFSM $\proj \chora p$
	 contains the transition
	 $q \arro[{\EXISTS {\varid[X]} \apred(\nabla(t))}]{\emptyword} q'$
	 with
	 $\varid[X] = \set{\varid \in \fvar[\apred] \sst t \text{ fixes }
		\varid \text{ in } \chora}$.
	 Let $s$ be the configuration of $\ssem{\proj \chora {}}$
	 such that $q \in \confel(\p)$ for all $\p \in \ptpset_{\chora}$.
	 By \cref{def:acfsm}, $\ssem{\proj \chora {}}$ has a run
	 \begin{align*}
		\arun = s \arro[{\EXISTS {\varid[X]} \apred(\nabla(t))}]{\emptyword} \cdots  \arro[{\EXISTS {\varid[X]} \apred(\nabla(t))}]{\emptyword}s'
	 \end{align*}
	 such that $q' \in s'(\p)$ for all $\p \in \ptpset_{\chora}$ (run
	 $\arun$ is obtained by firing transition $t$ in a-CFSM
	 $\proj \chora p$).
	 Hence, $(q', s') \in \rcal$ as required.
  \end{itemize}

  Now, let $(q,\confel) \in \rcal$ and consider a challenge from
  $\ssem{\proj{\chora}{}}$, namely
  $\confel \arro[\apred] \gint \confel'$ (note that $\proj{\chora}{}$
  does not have $\emptyword$-transitions since the a-CFSMs projected
  from $\chora$ are determinised).
  By \cref{def:acfsm}, $\confel(\ptp[x]) = \confel'(\ptp[x])$ for each
  $\ptp[x] \notin \set{\p,\q}$,
  $\confel(\p) \arro[\apred]{\aout} \confel'(\p)$, and
  $\confel(\q) \arro[{\apred}]{\ain} \confel'(\q)$ are respectively in
  $\aCM_{\p} = \proj \chora p$ and $\aCM_{\q} = \proj \chora q$.
  Hence, by definition of determinisation, there are runs such that
  \[
	 \tilde q_\p \arro[\apred_1, \ldots, \apred_n] \emptyword q_\p \arro[\apred]{\aout}{q'_{\p}}
	 \text{ in } A_{\p} \qqand
	 \tilde q_\q \arro[\apred_1', \ldots, \apred_n'] \emptyword q_\q \arro[\apred]\ain q'_{\q}
	 \text{ in } A_\q
  \]
  where $A_\p$ and $A_\q$ are the intermediate automata of $\aCM_\p$
  and $\aCM_\q$ respectively.

  We have two cases:
  \begin{itemize}
  \item If $q=\tilde q_\p=\tilde q_\q $ then $q \arro[\apred]\gint q'$
	 by \cref{lem:ac-intfromruns}.
  \item Otherwise, since $q$ and $\confel$ are in the bisimulation,
	 they are also reached by the same trace.
	 Now consider a word $\aword$ that reaches $q$ and matching words
	 $\aword_\p$ and $\aword_\q$ obtained by lifting to $\chora$ runs
	 reaching $\tilde q_\p$ and $\tilde q_\q$ in the respective
	 auxiliary automata.
	 By construction, they are in the hypothesis of
	 \cref{lemma:ac-cui}, hence $q\arro[\apred]\gint q'$.
  \end{itemize}
  Therefore, in both cases we have $t = q \arro[\apred]\gint q'$ is in
  $\chora$.

  We have then to show that $q' \in \confel'(\ptp[x])$ for each
  $\ptp[x]$.
  By construction (\cref{def:proj-aca}), for each $\ptp[x]$, the
  intermediate a-CFSM $A_{\ptp[X]}$ contains the transition
  $t_{\ptp[x]} = q_{\ptp[x]} \arro[\apred_{\ptp[x]}]{\proj{(\gint)} x}
  q'_{\ptp[x]}$ for some $q_{\ptp[x]} \in \confel'(\ptp[x])$ and some
  $q'_{\ptp[x]}$ where
  \begin{itemize}
  \item if $\ptp[x] \notin \set{\p,\q}$ then
	 $\proj {(\gint)} x = \emptyword$ and $\apred_{\ptp[x]}$ is
	 ${\EXISTS {\varid[X]} \apred}$ with
	 $\varid[X] = \set{\varid \in \fvar[\apred] \sst t_{\ptp[x]} \text{
		  fixes } \varid \text{ in } \chora}$.
	 Hence $q_{\ptp[x]} \in \confel(\ptp[x])$ implies
	 $q'_{\ptp[x]} \in \confel'(\ptp[x])$ as required since
	 $\confel(\ptp[x])$ contains the $\varepsilon$-closure of its
	 elements by construction.
  \item If $\ptp[x] = \p$ then $\proj{(\gint)} x = \aout$ and
	 $\apred_{\ptp[x]} = \apred$.
	 Thus $q_{\ptp[x]} \arro[\apred]{\aout} q'_{\ptp[x]}$ and, as shown
	 above, $\confel(\p) \arro[\apred]{\aout} \confel'(\p)$.
	 Since $q_{\ptp[x]} \in \confel(\p)$ and $\chora$ is deterministic
	 then $q'_{\ptp[x]} \in \confel'(\ptp[x])$ since
	 $\confel'(\ptp[x])$ is the $\varepsilon$-closure of $q'_{\ptp[x]}$
	 by construction.
  \item If $\ptp[x] = \q$ then the reasoning is analogous to the previous case.
  \end{itemize}

  Finally, by \cref{def:acfsm}, there are
  $\apred' \in \preof[\confel(\p)][\aCM_{\p}]$ and
  $\apred'' \in \preof[\confel(\q)][\aCM_{\q}]$ such that both
  $\apred'$ and $\apred''$ entail $\apred$.
  Hence the thesis follows since $\apred$ entails
  $\EXISTS{\fvar[\amsg]}{\apred}$.
\end{proof}
\fi

As for c-automata, \cref{prop:ac-bisim} ensures that the language of a
consistent ac-automaton coincides with the language of its projection.

\begin{corollary}\label{th:acprojectionCorrectness}
  $\lang[\chora] = \lang[\ssem{\proj{\chora}{}}]$  for any consistent ac-automaton $\chora$.
\end{corollary}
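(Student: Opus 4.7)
The proof is an immediate consequence of \cref{prop:ac-bisim}, so the plan is short. First I would invoke \cref{prop:ac-bisim}, which guarantees that any consistent ac-automaton $\chora$ is weakly bisimilar to $\ssem{\proj{\chora}{}}$, where iterative transitions $q \arro[\apred]{\reccall{r}{\iota}} q'$ of $\chora$ are treated as $\emptyword$-transitions (matching how they are handled by the projection, cf.\ \cref{def:proj-aca}). Then I would appeal to the standard fact, analogous to the one used for \cref{th:projectionCorrectness}, that weak bisimilarity implies weak trace equivalence: matching challenges up to $\emptyword$-prefixes and $\emptyword$-suffixes produces, for every trace in one LTS, a trace in the other that agrees on all non-$\emptyword$ labels.

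The only subtlety is that the alphabets of $\chora$ and $\ssem{\proj{\chora}{}}$ differ syntactically: the former is over $\widehat{\lint} \times \assertionset$ (thus containing labels $(\reccall{r}{\iota}, \apred)$ for iterative transitions), while the latter is over $\lint \times \assertionset$ together with $\emptyword$-transitions generated by determinisation and by the existentially-quantified assertions produced in \cref{def:proj-aca} when a participant is not involved in an iteration. Hence I would state the equality $\lang[\chora] = \lang[\ssem{\proj{\chora}{}}]$ under the convention, consistent with the view used throughout \cref{sec:achor}, that words in $\lang[\chora]$ are read after erasing the recursive-call labels, so that both languages live in $(\lint \times \assertionset)^{\star} \cup (\lint \times \assertionset)^{\omega}$.

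Given these conventions, the two inclusions are symmetric and follow directly: any word $w \in \lang[\chora]$ arises from a run of $\chora$, and by the weak bisimulation relation $\rcal$ built in the proof of \cref{prop:ac-bisim}, this run can be mimicked transition-by-transition in $\ssem{\proj{\chora}{}}$ up to $\emptyword$-moves, yielding the same non-silent trace; and conversely, any run of $\ssem{\proj{\chora}{}}$ can be matched in $\chora$ under $\rcal$. The only step requiring care, and the one I expect to be the main obstacle in a fully rigorous write-up, is justifying that the silent steps introduced by determinisation, by the $\varepsilon$-closure, and by the projection of transitions whose subject is absent, produce no observable trace discrepancies once iterative labels in $\chora$ are treated as silent. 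Once that bookkeeping is settled, the corollary follows.
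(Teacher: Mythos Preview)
Your proposal is correct and follows essentially the same approach as the paper: invoke \cref{prop:ac-bisim} and then use that (weak) bisimilarity implies trace equivalence. The paper's own proof is a single line (``From \cref{prop:ac-bisim} given that bisimulation implies trace equivalence''), so your additional discussion of the alphabet mismatch and the convention that iterative labels are treated as silent is more careful than what the paper spells out, but it does not diverge from the intended argument.
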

\iftr
\begin{proof}
  From \cref{prop:ac-bisim} given that bisimulation implies trace equivalence.
\end{proof}
\fi

Final states and deadlock freedom of an ac-automaton are defined as
for c-automata (cf. \cref{def:final} and \cref{def:DF} respectively)
modulo the different labels of transitions.

\begin{theorem}[Projections of consistent ac-automata are deadlock-free]\label{thm:ac-df}
  If $\chora$ is a consistent ac-automaton then $\proj \chora {}$ is
  deadlock-free.
\end{theorem}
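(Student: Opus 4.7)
The plan is to mirror the proof of \cref{thm:df} for c-automata, adapting it to the asserted setting by relying on the weak bisimulation result \cref{prop:ac-bisim} (in place of the strong bisimulation \cref{prop:bisim}) and exploiting consistency to ensure that transitions available in $\chora$ are actually enabled.

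First I would proceed by contradiction, assuming that $\proj \chora {}$ is not deadlock-free. Then there is a reachable configuration $\confel$ in $\ssem{\proj \chora {}}$ with no outgoing transition and a participant $\p$ such that $\confel(\p)$ is not final. By the ac-automaton version of \cref{def:final}, this means that for every $q \in \confel(\p)$ and every candidate $q$-branch $\arun$ in $\chora$, we have $\p \in \ptpof[\arun]$. In particular, some such state $q$ must admit at least one outgoing transition in $\chora$: otherwise the empty branch starting at $q$ would be a candidate $q$-branch that does not involve $\p$, contradicting non-finality.

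Next I would invoke \cref{prop:ac-bisim} to obtain that $\confel$ is weakly bisimilar to $q$. Consistency of $\chora$ provides the key extra ingredient with respect to the unasserted case: by temporal satisfiability, every enabling condition of $q$ is satisfiable, so at least one outgoing transition $t$ of $q$ is actually enabled from $\confel$ (the precondition accumulated along the run leading to $q$, which corresponds to the one reaching $\confel$, entails the guard of $t$ or can be extended to do so via the existential closure used in \cref{def:proj-aca}). Thus $q$ admits a weak-bisimulation challenge: either an interaction transition or an iterative one, the latter projected to $\emptyword$-transitions in $\proj \chora {}$.

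Finally I would argue that $\confel$ must answer this challenge. By the definition of weak bisimulation, $\confel$ has to match $t$ either by a corresponding interaction (preceded and followed by zero or more $\emptyword$-transitions) or, if $t$ is iterative, by a sequence of $\emptyword$-transitions. In either case $\confel$ has at least one outgoing transition in $\ssem{\proj \chora {}}$, contradicting our assumption. The main obstacle is to check carefully that the weak-matching sequence is non-empty: this is what consistency (and in particular temporal satisfiability together with history-sensitivity, which ensures that assertions refer only to variables known to the participants who must guarantee them) buys us, since it forbids the pathological situation where $q$ has outgoing transitions in $\chora$ but all of them are pruned away at the level of projections because of unsatisfiable assertions.
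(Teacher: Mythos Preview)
Your proposal is correct and follows essentially the same approach as the paper: argue by contradiction, use non-finality to find a state $q$ in the bisimulation relation with $\confel$ that has an outgoing candidate branch involving $\p$, and then use \cref{prop:ac-bisim} to force $\confel$ to answer the challenge. You are in fact more careful than the paper in spelling out where temporal satisfiability and history-sensitivity enter (they are used implicitly via \cref{prop:ac-bisim}) and in flagging the weak-bisimulation subtlety. One small sharpening: the reason the matching sequence in $\ssem{\proj{\chora}{}}$ is non-empty is not directly temporal satisfiability, but rather that the challenge can always be taken to be an \emph{interaction} transition (after following finitely many iterative transitions, which is guaranteed since $\chora$ respects the recursion context and has no interaction-free loops); a non-$\emptyword$ challenge forces a non-empty response by the definition of weak bisimulation.
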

\iftr
\begin{proof}
  By contradiction, assume that there is a reachable configuration
  $\confel$ in $\ssem{\proj{\chora}{}}$ with no outgoing transition
  and a participant $\p$ for which $\confel(\p)$ is not final.
  Then, by definition of final state, for each $q \in \confel(\p)$ and
  each candidate $q$-branch, $\p \in \ptpof[\pi]$.
  From the proof of \cref{prop:ac-bisim}, there is a configuration
  $\confel$ bisimilar to $q$; hence, $\confel$ should answer the
  challenge from the first action of $\arun$, hence it has an outgoing
  transition against the hypothesis.
\end{proof}
\fi

Observe that \cref{thm:ac-df} requires ac-automata to be consistent;
in particular, it requires history sensitiveness (cf. \cref{def:hs})
and temporal satisfiability (cf. \cref{def:tsat}).
The two requirements ensure that assertions on the transitions do not
spoil deadlock freedom.


\section{TypeScript Programming via Flexible C-Automata}\label{sec:apply}
We showcase the main theoretical results and constructions in this
paper with a tool, \tool{}, the first implementation of Scribble
\cite{scribble-paper,FeatherweightScribble,nuscr} that relies on
c-automata, for deadlock-free distributed programming.
\tool{} takes the popular \emph{top-down approach} 
  to system development based on choreographic models, 
following the original methodology of Scribble and
multiparty session types \cite{HYC08}.
The top-down approach enables \emph{correctness-by-construction}:
a developer provides a global description for the whole
communication protocol; by projecting the global protocol,
APIs are generated from local CFSMs,
which ensure the safe implementation of each participant.
The core theory
of c-automata from \cref{sec:theory} guarantees deadlock freedom for
the distributed implementation of flexible global protocols.  As a
first application we target web development, supporting in particular
the TypeScript programming language.

In this section we present our development in three steps:
\begin{enumerate}
\item
    \emph{translation of global protocols into choreography automata}:
    for the specification of global protocols, \tool{} relies on
    the Scribble language, and
    global Scribble protocols are formally global multiparty session types
    protocols \cite{FeatherweightScribble}; we define a function that maps these into
    choreography automata, and discuss the relation
    between the two formalisms;
\item \emph{protocol specification and projections}:
from the specification of the global protocol,
\tool{} generates, through its translation into c-automata and
the subsequent projection,
a collection of CFSMs, which are the
abstract representation of the communication behaviour
of each participant (cf. part \textcolor{pine}{\bf (a)},
\cref{fig:tool} on page~\pageref{fig:tool});
\item \emph{API generation for deadlock-free distributed web
	 development}: we discuss our choice of targeting TypeScript and
  web development, and illustrate how \tool{} provides support for
  this (cf. part \textcolor{pine}{\bf (b)}, \cref{fig:tool} on
  page~\pageref{fig:tool}); finally we comment on possible extensions.
\end{enumerate}

\subsection{From Multyparty Session Protocols to C-Automata}\label{sec:global}
C-automata and asserted c-automata can be directly produced by the
system designer and fed to our approach to ensure their correct
behaviour. However, to improve the usability of the approach, our
implementation, detailed in the next section, integrates c-automata
with the Scribble framework. This framework is based on the
theory of global types, hence we study below the relations between
global types and c-automata.
The syntax of global types is given by the following grammar:
\begin{eqnarray*}
  \dgt{G} \quad \bnfas \quad
  \gtend 
  \quad \bnfalt \quad \gtrecur{r}{G} 
  \quad \bnfalt \quad \gtvar r 
  \quad \bnfalt \quad \gtsigma{i \in I} \gint[@][\q_i][\amsg_i];\dgt{G}_i
\end{eqnarray*}
We simply write $\gint[@][\q_i][\amsg_i];\dgt{G}_i$ instead of
$\gtsigma{i \in I} \gint[@][\q_i][\amsg_i];\dgt{G}_i$ when $I = \set{i}$.
In a recursive type $\gtrecur r G$ all occurrences of the recursion
variable $\gtvar r$ in $\dgt G$ are bound (this is the only binder for
global types); we moreover assume that the occurrences of $\gtvar r$ in
$\dgt G$ are guarded.
Hereafter we assume the so-called Barendregt convention, that is
names of bound variables are all distinct and different from names of free variables.

The operational semantics of global types is the LTS induced by the
rules in \cref{fig:globallts} where labels are drawn from the alphabet $\lint$.
\begin{figure}
   \begin{small}
  \[
    \begin{array}{ll}
		\multicolumn 2 c{
      \text{\textsc{[Choice]}}\
      \cfsmtrans{\gtsigma{i \in I}\gint[@][\q_i][\amsg_i];\dgt{G}_i
		  }{
		  \gint[@][\q_j][\amsg_j]}{\dgt G_j}
		\quad (j\in I)
		}
      \\[2mm]
      \text{\textsc{[Rec]}}\
      \vcenter{\infer{\cfsmtrans{\gtrecur{r}{G}}{\aint}{\dgt{G'}}}{
		  \cfsmtrans{\dgt{G}[\gtrecur{r}{G}/\dgt{\mathbf{r}}]}{\aint}{\dgt{G'}}
		  }
		  }
      \qquad\qquad
      \text{\textsc{[Pass]}}\
		\vcenter{
		  \infer{
		  \cfsmtrans{\gtsigma{i \in I}\gint[@][\q_i][m_i];\dgt{G_i}}{\aint}{\gtsigma{i \in I}\gint[@][\q_i][\amsg_i];
		  \dgt{G'_i}}
		  }{
		  \cfsmtrans{\dgt{G_j}}{\aint}{\dgt{G'_j}}
										& \p,\q_j \not\in \ptpof[\aint]
      & \forall j \in I
		  }}
    \end{array}
    \]
     \end{small}\vspace{-2mm}
  \caption{\label{fig:globallts}LTS semantics over global types}\vspace{-2mm}
\end{figure}
Since the semantics of global types is an LTS, it can be
represented as a c-automaton only if it is finite state. Unfortunately,
the interplay between rule \textsc{[Pass]} and recursion allows one to
generate infinite state LTSs, as shown below.

\begin{example}[Infinite-state LTS]\label{ex:infinite}
  Let
  $\dgt G_\text{inf} = \gtrecur r {\aint;(\aint;\gtvar r +
	 \aint[g];\aint[d];\gtvar r+ \aint[b].\gtend)}$ where
  $\aint[d] \parallel \aint$, $\aint[d] \parallel \aint[g]$,
  $\aint \not\parallel \aint[g]$,
  $\aint[b] \not\parallel \aint[d]$, and
  $\aint[b] \not\parallel \aint$.
  Note that the traces $(\aint\ \aint[g])^n$ are included in
  the semantics for all $n > 1$. Executing $(\aint\ \aint[g])^n$ results in the following computation:
  \begin{eqnarray*}
  \gtrecur r {\aint;(\aint;\gtvar r + \aint[g];\aint[d];\gtvar r + \aint[b].\gtend)} 
  \arro{\aint} 
  \aint;\gtrecur r {\aint;(\aint;\gtvar r + \aint[g];\aint[d];\gtvar r)} + \aint[g];\aint[d];\gtrecur r {\aint;(\aint;\gtvar r + \aint[g];\aint[d];\gtvar r) + \aint[b].\gtend}\\
  \arro{\aint[g]}
  \aint[d];\gtrecur r {\aint;(\aint;\gtvar r + \aint[g];\aint[d];\gtvar r + \aint[b].\gtend)}
  \quad\dots\quad
  \arro{\aint[g]}
  \aint[d]^n;\gtrecur r {\aint;(\aint;\gtvar r + \aint[g];\aint[d];\gtvar r+\aint[b].\gtend)}
\end{eqnarray*}
States
$\aint[d]^n;\gtrecur r {\aint;(\aint;\gtvar r +
  \aint[g];\aint[d];\gtvar r+\aint[b].\gtend)}$ and
$\aint[d]^m;\gtrecur r {\aint;(\aint;\gtvar r +
  \aint[g];\aint[d];\gtvar r+\aint[b].\gtend)}$ are bisimilar only if
$n=m$. Indeed, one needs to execute $n$ times $\aint[d]$ (and an
$\aint$) before being able to execute $\aint[b]$.
\finex
\end{example}

It is worth remarking that the semantics in \cref{fig:globallts}
yields finite-state LTSs on global types without consecutive
independent transitions,
a restriction actually considered in many global type formalisms,
since rule \textsc{[Pass]} never applies.
Likewise, the semantics consisting of rules \textsc{[Choice]} and
\textsc{[Rec]} only generates finite-state LTSs.

Function $\cat$ below defines a c-automaton with subterms of $\dgt G$
as states, $\dgt G$ as initial state, labels in $\lint$, and
transitions inductively defined by the function $\catr$ below:
\begin{eqnarray*}
  \catr[\gtend] = \catr[\gtvar r] 
  =
  \emptyset
   \ \qquad\qquad
   \catr[\gtrecur{r}{G}] 
   =
   \catr[\dgt{G}] \cup \set{\trans{\gtvar r}{\epsilon}{\gtrecur{r}{G}}, \trans{\gtrecur{r}{G}}{\epsilon}{\dgt G} }
   \\
   \catr[{\gtsigma{i \in I} \gint[@][\q_i][\amsg_i];\dgt{G}_i}] 
   =
   \bigcup_{j \in I} (\set{\trans{\gtsigma{i \in I} \gint[@][\q_i][\amsg_i];\dgt{G}_i}{\gint[@][\q_j][\amsg_j]}{\dgt{G}_j}} \cup
   \catr[\dgt{G}_j])
\end{eqnarray*}

\begin{proposition}\label{lem:nopass}
  Let $\dgt{G}$ a global type. The language of $\cat$ coincides with
  the language generated by rules \textsc{[Choice]} and \textsc{[Rec]}
  of the semantics of $\dgt{G}$.
\end{proposition}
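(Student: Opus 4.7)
My plan is to prove language equivalence by exhibiting a weak bisimulation (cf.\ \cref{def:bisim}) between $\cat$ and the LTS on global types induced by rules \textsc{[Choice]} and \textsc{[Rec]}, treating $\varepsilon$-labels in $\cat$ as silent moves. Since weak bisimilarity implies trace equivalence, this yields the required language equality.

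The key observation is that the $\varepsilon$-transitions of $\cat$ mirror the unfoldings performed by \textsc{[Rec]} via substitution. To relate states, for each subterm $H$ of $\dgt{G}$ I would let $H^{\star}$ denote the global type obtained by simultaneously replacing in $H$ every free occurrence of a recursion variable $\gtvar{r}$ by its binder $\gtrecur{r}{G_r}$ in $\dgt{G}$ (well-defined by Barendregt's convention, with $\dgt{G}^{\star}=\dgt{G}$ since $\dgt{G}$ is closed). Letting $\equiv$ be the congruence generated by the standard unfolding identity $\gtrecur{r}{G'} \equiv \dgt{G'}[\gtrecur{r}{G'}/\gtvar{r}]$, the candidate bisimulation is $\mathcal{R} = \set{(H, K) \sst H \text{ reachable in } \cat,\ K \equiv H^{\star}}$.

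For the forward direction (challenges from $\cat$), the two kinds of $\varepsilon$-transitions $\gtrecur{r}{G'} \arro{\emptyword} \dgt{G'}$ and $\gtvar{r} \arro{\emptyword} \gtrecur{r}{G'}$ are matched by empty moves in the LTS, since source and target have $\equiv$-equivalent closures (in the first case because $H^\star = \gtrecur{r}{\dgt{G'}\sigma}$ unfolds in one step to $(H')^\star$; in the second case because the two closures coincide). A labelled transition $H \arro{\gint[@][\q_j][\amsg_j]} \dgt{G_j}$ in $\cat$ arises because $H$ is a choice, so an application of \textsc{[Choice]} at $H^{\star}$ yields the matching transition to $(\dgt{G_j})^{\star}$.

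The converse direction rests on a substitution lemma that I expect to be the main technical obstacle: if $H$ is a subterm of $\dgt{G}$, $K \equiv H^{\star}$, and $K \arro{\aint} K'$ by a \textsc{[Choice]}+\textsc{[Rec]} derivation, then there exists a subterm $H'$ of $\dgt{G}$ with $K' \equiv (H')^{\star}$ and a run $H \arro{\emptyword}^{\star} \arro{\aint} H'$ in $\cat$. I would prove this by induction on the derivation of $K \arro{\aint} K'$: each application of \textsc{[Rec]} is mirrored by stepping through a $\gtrecur{r}{G'} \arro{\emptyword} \dgt{G'}$ transition in $\cat$ (and, when needed, through a preceding $\gtvar{r} \arro{\emptyword} \gtrecur{r}{G'}$), while the base case \textsc{[Choice]} provides the labelled step. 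The hard part is tracking the discrepancy between the finite set of subterms inhabiting $\cat$ and the syntactically larger terms produced by repeated substitution under \textsc{[Rec]}; the closure operation $(\cdot)^{\star}$ together with $\equiv$ is precisely what closes this gap, absorbing the unfoldings that would otherwise obscure the correspondence.
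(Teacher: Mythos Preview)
Your bisimulation approach is correct and yields the result, but it takes a genuinely different route from the paper. The paper argues the two language inclusions separately, hinging on a single key observation: that $\lang[\cat[\gtrecur{r}{G'}]] = \lang[\cat[{\dgt{G'}[\gtrecur{r}{G'}/\gtvar{r}]}]]$, i.e., the $\cat[]$ construction is invariant under one-step unfolding (argued by inspecting what happens when the state $\gtvar{r}$ is reached). With this in hand, the inclusion of the LTS language into $\lang[\cat]$ is by rule induction on derivations (the \textsc{[Rec]} case reduces to the unfolding lemma), and the converse is by structural induction on $\dgt{G}$ (again, only the recursion case is non-trivial and is discharged by the same lemma). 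Your approach instead packages both directions into one weak-bisimulation proof, at the price of introducing the closure $(\cdot)^{\star}$, the unfolding congruence $\equiv$, and the substitution lemma you correctly flag as the crux. What the paper's decomposition buys is that each direction becomes a short induction once the unfolding lemma is isolated; what your approach buys is uniformity and an explicit state-level correspondence (which in fact yields bisimilarity, a stronger conclusion than the stated language equivalence). Both arguments rest on the same underlying insight---unfolding commutes with the construction---but factor it differently.
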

\iftr
\begin{proof}
  First note that the languages of $\cat[\gtrecur{r}{G}]$ and of
  $\cat[{\dgt{G}[\gtrecur{r}{G}/\dgt{\mathbf{r}}]}]$ do coincide. The
  only non trivial point is when the recursion variable is reached.
  %
  State
  $\cat[\gtrecur{r}{G}]$ has a transition returning to  state $\dgt G$ whilse
  state
  $\cat[{\dgt{G}[\gtrecur{r}{G}/\dgt{\mathbf{r}}]}]$ has a transition
  to the first state of an unfolding of $\dgt G$, hence the languages do coincide.
  We have to prove two inclusions.
  For the inclusion of the language of the semantics in the language of $\cat[\dgt{G}]$ the proof is by rule induction. The case for \textsc{[Choice]} is by construction. For rule \textsc{[Rec]} the thesis follows from the observation above.

  For the other inclusion the proof is by structural induction on
  $\dgt{G}$. The only difficult case is the one of recursion, which follows from the observation above.
\end{proof}
\fi

Function $\catpass$ below extends $\cat$ to deal with the semantics of global types with rule \textsc{[Pass]}. However, the computed LTS may be infinite state, hence not a c-automaton, and in this case the function cannot be used in practice. This is, e.g., the case with the global type in \cref{ex:infinite}. The LTS has $\dgt G$ as initial state, labels in $\lint$, transitions
inductively defined by the function $\catrpass$ below, and as states the ones occurring in the transitions:
\begin{eqnarray*}
  \catrpass[\gtend] = \catrpass[\gtvar r] \hspace{-2mm}& = \hspace{-2mm}&
  \emptyset
    \\
    \catrpass[\gtrecur{r}{G}] \hspace{-2mm}& = \hspace{-2mm}&
    \catrpass[\dgt{G}] \cup \set{\trans{\gtvar r}{\epsilon}{\gtrecur{r}{G}}, \trans{\gtrecur{r}{G}}{\epsilon}{\dgt G} }
       \\
       \catrpass[{\gtsigma{i \in I} \aint_i;\dgt{G}_i}] \hspace{-2mm}& = \hspace{-2mm}&
    \bigcup_{j \in I} (\set{\trans{\gtsigma{i \in I} \aint_i;\dgt{G}_i}{\aint_j}{\dgt{G}_j}} \cup
    \catrpass[\dgt{G}_j])
    \cup 
    \\[-3mm]
    &&
    \hspace{-8mm}
    \bigcup_{\aint \textrm{ s.t.\ }  \cfsmtrans{\dgt{G_i}}{\aint}{\dgt{G'_i}} \land \aint_i \parallel \aint \forall i \in I }
    \hspace{-14mm}
    \set{
      \trans{\gtsigma{i \in I} \aint_i;\dgt{G}_i}{\aint}{\gtsigma{i \in I} \aint_i;\dgt{G}'_i }}  \cup \catrpass[\gtsigma{i \in I} {\aint_i} ;\dgt{G}'_i]%
\end{eqnarray*}%
\vspace{-5mm}\begin{proposition}\label{lem:pass}
  Let $\dgt{G}$ a global type. The language of $\catpass$ coincides
  with the language generated by the semantics of $\dgt{G}$.
\end{proposition}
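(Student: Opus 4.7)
\medskip

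\noindent The plan is to extend the proof of \cref{lem:nopass} by accounting for the transitions introduced by the \textsc{[Pass]} rule. As before, I would establish the two inclusions separately.

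\medskip

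\noindent\textbf{Inclusion $\lang[\text{semantics}] \subseteq \lang[\catpass[\dgt G]]$.} I would proceed by induction on the length of a derivation in the semantics, using rule induction on the single-step transition at each stage. The cases \textsc{[Choice]} and \textsc{[Rec]} are handled exactly as in \cref{lem:nopass}: the former by direct inspection of the first union clause in the definition of $\catrpass[{\gtsigma{i \in I} \aint_i;\dgt{G}_i}]$, and the latter by appealing to the fact that the languages of $\catpass[\gtrecur{r}{G}]$ and $\catpass[{\dgt{G}[\gtrecur{r}{G}/\dgt{\mathbf{r}}]}]$ coincide (via the $\emptyword$-transitions representing unfolding). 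The new case is \textsc{[Pass]}: if $\gtsigma{i \in I} \aint_i;\dgt{G}_i \arro{\aint} \gtsigma{i \in I} \aint_i;\dgt{G}'_i$ via $\dgt{G}_i \arro{\aint} \dgt{G}'_i$ for every $i \in I$ with $\aint \parallel \aint_i$, then precisely this transition is added to $\catrpass$ by the second union clause, and by the inductive hypothesis the residual trace from $\gtsigma{i \in I} \aint_i;\dgt{G}'_i$ lies in $\lang[\catpass[\gtsigma{i \in I} \aint_i;\dgt{G}'_i]]$, which is included (by construction) into the language of $\catpass[\dgt G]$ starting from the relevant state.

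\medskip

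\noindent\textbf{Inclusion $\lang[\catpass[\dgt G]] \subseteq \lang[\text{semantics}]$.} I would proceed by induction on the length of a word accepted by $\catpass[\dgt G]$, with a case split on the form of the first transition taken from the current state. Transitions of the first kind (arising from the first union clause) correspond directly to applications of \textsc{[Choice]} (possibly nested under \textsc{[Rec]}). Transitions of the second kind, $\trans{\gtsigma{i \in I} \aint_i;\dgt{G}_i}{\aint}{\gtsigma{i \in I} \aint_i;\dgt{G}'_i}$, are justified by \textsc{[Pass]}: each $\dgt{G}_i \arro{\aint} \dgt{G}'_i$ in the semantics, and the side-condition $\aint_i \parallel \aint$ for every $i \in I$ is exactly what the construction requires. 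The $\emptyword$-transitions in the recursive case are matched (silently) by the unfolding induced by \textsc{[Rec]}.

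\medskip

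\noindent\textbf{Main obstacle.} The delicate point is the interplay between \textsc{[Pass]} and recursion, already visible in \cref{ex:infinite}: $\catpass[\dgt G]$ may be infinite-state, so the induction must be on the length of the trace and not on the structure of the states of $\catpass[\dgt G]$. A further subtlety is ensuring that the second union clause in $\catrpass[{\gtsigma{i \in I} \aint_i;\dgt{G}_i}]$, which calls $\catrpass$ on the possibly unbounded residual $\gtsigma{i \in I} \aint_i;\dgt{G}'_i$, correctly captures \emph{all} derivatives reachable by iterated \textsc{[Pass]} applications; I would verify this by showing that if $\dgt{G}_i \arro{\aint_1}\cdots\arro{\aint_k} \dgt{G}''_i$ for all $i$ via \textsc{[Pass]}-compatible actions, then the iterated construction in $\catrpass$ reproduces each intermediate state along with its outgoing transitions.
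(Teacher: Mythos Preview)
Your decomposition matches the paper's: both rely on the observation that $\catpass[\gtrecur{r}{G}]$ and $\catpass[{\dgt{G}[\gtrecur{r}{G}/\dgt{\mathbf{r}}]}]$ have the same language, and both prove the two inclusions by case analysis on the transition rules, with \textsc{[Choice]} and \textsc{[Pass]} falling out directly from the two union clauses in the definition of $\catrpass$. Your treatment of the iterated \textsc{[Pass]} residuals is also on target.

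The one point where you diverge from the paper is the proof principle: the paper argues by \emph{coinduction} (rule coinduction for one inclusion, structural coinduction for the other), whereas you argue by induction on the length of the derivation or of the word. Recall that the language of an LTS, per \cref{def:traces}, contains traces of \emph{infinite} runs as well as finite ones; your length induction only establishes the two inclusions for finite traces. To complete your argument you would need an additional step for infinite words---for instance, observing that both transition systems are finitely branching and that their languages are prefix-closed, so by K\"onig's lemma an infinite word is in the language iff all its finite prefixes are. The paper's coinductive formulation handles the infinite case uniformly and avoids this extra appeal. Either route is fine once the gap is closed, but as written your proposal is incomplete on this point.
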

\iftr
\begin{proof}
  As in \cref{lem:nopass} we can notice that the languages of
  $\cat[\gtrecur{r}{G}]$ and of
  $\cat[{\dgt{G}[\gtrecur{r}{G}/\dgt{\mathbf{r}}]}]$ do coincide.

  We have to prove two inclusions.
  For the inclusion of the language of the semantics in the language of $\cat[\dgt{G}]$ the proof is by rule coinduction. The case for \textsc{[Choice]} and for \textsc{[Pass]} is by construction. For rule \textsc{[Rec]} the thesis follows from the observation above.

  For the other inclusion the proof is by structural coinduction on
  $\dgt{G}$. The only difficult case is the one of recursion, which follows from the observation above.
\end{proof}
\fi

We remark that global types with infinite semantics cannot be
implemented faithfully using communicating systems with the semantics
in \cref{def:syncSem}. Indeed, a communicating system has a finite
number of configurations, which is $O(S^n)$ where $S$ is the size of
the largest CFSM and $n$ the number of participants.



\label{subsec:tool}

\subsection{Validating Global Protocols with Choreography Automata}
\label{subsec:design}

The first component of our toolchain is part \textcolor{pine}{\bf (I)}
in \cref{fig:tool}; it allows the user to perform protocol
specification, well-formedness checks, and the generation of CFSMs for
each participant.

Let us consider the OLW example: the first step for the user is to
specify the global protocol, \texttt{OnlineWallet.scr}
(\cref{fig:ow-scribble}), in the \emph{Scribble protocol description
  language}, often referred to as \quo{the practical incarnation of
  multiparty session
  types}~\cite{scribble-paper,FeatherweightScribble}.
The syntax of Scribble (\url{http://www.scribble.org},
\url{https://nuscr.dev/}) has a straightforward correspondance to the
syntax of global types, so Scribble implementations of communicating
processes will be supported by multiparty session type theory, and
inherit its semantic guarantees.  Our development for part
\textcolor{pine}{\bf (I)} of the toolchain is based on the
$\nu$\textsf{Scr} implementation \cite{nuscr}, but fundamentally
differs from this (and other Scribble versions) in two aspects:
\begin{itemize}
\item the underlying choreographic objects---normating the
  communication among multiple participants---are not global types,
  but c-automata, and
\item we allow for participants to join the communication at later stage,
only in branches where they are needed (selective participation).
\end{itemize}

\cref{fig:ow-scribble} shows the protocol \texttt{OnlineWallet.scr}
for the OLW. Noteworthy, unlike  $\nu$\textsf{Scr}, we can specify
the selective participation of the vendor.
In particular, the \ptp[vendor]{} participant is involved only in the
first branch of the choice (lines 7-12), namely on successful login.

\begin{figure}
  \begin{subfigure}[b]{.4\linewidth}
       \begin{center}
       \begin{tikzpicture}[font=\scriptsize,scale=.7,transform shape]

\node[draw,
  fill=purple!30,
  minimum width=6cm,
  minimum height=1.5cm,
  rounded corners] (block0) at (2,0.25)
     {$\begin{array}{c}\text{User input}\\ \text{}\\ \text{}\\ \text{}\\ \end{array}$};

\node[draw,
  fill=violet!30,
  minimum width=1.5cm,
  minimum height=0.75cm,
  rounded corners] (block1) at (0,0)
     {$\begin{array}{c}\text{Scribble}\\ \text{protocol}\end{array}$};

\node[draw,
  fill=violet!30,
  minimum width=1.5cm,
  minimum height=0.75cm,
  rounded corners] (block1p) at (3,0)
     {$\begin{array}{c}\text{Participant declaration:}\\ \text{server and others}\end{array}$};

\node[draw,
  fill=gray!30,
  minimum width=1.5cm,
  minimum height=0.75cm,
  rounded corners] (block2a) at (0,-2) {CA};

\node[
  minimum width=1.5cm,
  minimum height=0.75cm,
  rounded corners] (block2b) at (0,-2.5) {$\begin{array}{c}\ \\ \text{\textcolor{red}{WF checks}}\end{array}$};

\node[draw,
  fill=blue!30,
  minimum width=1.5cm,
  minimum height=0.75cm,
  rounded corners] (block3) at (0,-4.5) {CFSMs};

\node[draw,
   fill=LimeGreen!10,
   minimum width=1.5cm,
   minimum height=0.75cm,
   rounded corners] (block4) at (3.7,-2){\textsf{STScript}};

\node[draw,
  fill=orange!20,
  minimum width=4cm,
  minimum height=4cm,
  rounded corners] (block5) at (3.7,-5){};

\node[
  minimum width=2cm,
  minimum height=0.75cm,
  rounded corners] (block5a) at (3.7,-3.5){$\begin{array}{c}\ \text{ Generated APIs for }\ \\ \ \text{ TypeScript web development }\ \end{array}$};

\node[draw,
  fill=red!20,
  minimum width=2cm,
  minimum height=0.75cm,
  rounded corners] (block5b) at (3.7,-4.75)
     {$\begin{array}{c}\text{Node.js}\\ \text{(server)}\end{array}$};

\node[draw,
  fill=red!20,
  minimum width=2cm,
  minimum height=0.75cm,
  rounded corners] (block5c) at (3.7,-6)
     {$\begin{array}{c}\text{React}\\ \text{(non-server)}\end{array}$};

     \draw [-latex](block1) edge  node[right]{$\cat[]$} (block2a);
     \draw [-latex](2.6,-0.4) |- (2.95,-1.9);
\draw [-,rounded corners] (-0.75,-2) --++ (0,-0.9) -| (0.75,-2);
\draw [-latex](block2b) edge node[right]{projection} (block3);
\draw [-](block3) edge (1.5,-4.5);
\draw [-latex](1.5,-4.5) |- 
 (2.95,-2.1);
\draw [-latex](block4) edge  
(block5);
\draw [decorate, decoration = {brace,mirror}, thick, color=pine] (-0.75,-7.5) -- node[below]{\textbf{(I)}} (0.75,-7.5);
\draw [decorate, decoration = {brace,mirror}, thick, color=pine] (1.25,-7.5) -- node[below]{\textbf{(II)}} (5.75,-7.5);

\end{tikzpicture}

     \end{center}
   \caption{Toolchain of \tool{}}
       \label{fig:tool}
	  \end{subfigure}
	  \qquad
\begin{subfigure}[b]{.6\linewidth}
\begin{lstlisting}[language=Scribble]
global protocol OnlineWallet
(role Wallet, role Customer, role Vendor) {
  rec AuthLoop {
    login(account: int) from Customer to Wallet;
    pin(pin: int) from Customer to Wallet;
    choice at Wallet {
      login_ok() from Wallet to Customer;
      login_ok() from Wallet to Vendor;
      request(bill: int) from Vendor to Customer;
      choice at Customer {
        authorise() from Customer to Wallet;
        pay(payment: int) from Customer to Vendor;
      } or {
        reject() from Customer to Wallet;
        reject() from Customer to Vendor;
      }
    } or {
      login_retry(msg: string) from Wallet to Customer;
      continue AuthLoop;
    } or {
      login_denied(msg: string) from Wallet to Customer;
    }}}\end{lstlisting}\vspace*{-10mm}
   \caption{Scribble Protocol for the OLW}
     \label{fig:ow-scribble}
     \end{subfigure}
     \caption{\tool{}: Toolchain and OLW Protocol}
\end{figure}

After its specification,
the Scribble protocol is translated into a c-automaton,
with the implementation of the function $\cat[]$ from \cref{sec:global}
 (this is exactly the c-automaton from \cref{ex:OLWca}, \cref{sec:back}).
 On this automaton, well-sequencedness and well-branchedness checks
 are performed.
 If the c-automaton passes the above well-formedness checks, it is
 then projected onto each participant (\cref{def:projection}), thus
 obtaining a collection of CFSMs, whose semantics is equivalent to the
 one of the original c-automaton. Both global c-automata and local
 CFSMs are represented using the DOT graph description language.
 \cref{ex:OLWproj} from \cref{sec:back} shows the CFSM obtained by
 projection on \ptp[vendor]{} of the c-automaton for the OLW; for the
 other participants analogous CFSMs are obtained.  The local CFSM
 representations provide the communication behaviour of each
 participant and, as such, they retain all the information for
 obtaining deadlock-free endpoint implementations.
Each CFSM is the projection of the global c-automaton onto one
of the communicating participants; from this local automaton,
the API for the implementation of the participant is generated.




\subsection{API Generation for Distributed Web Development}
\label{subsec:api-gen}

Our chosen domain of application is \emph{distributed web development}.
By nature, web services are distributedly developed and feature communication
among multiple participants. In services where some courses of actions are optional,
it is likely that the participation of some role is also optional
(selective participation).
Our OLW example is a minimal, yet representative example that
selective participation is commonplace in transactions, auctions, or
contracts.  For instance, Kickstarter~\cite{kickstarter} is a
worldwide popular crowdfunding platform where the money of
\emph{supporters} is given to a project \emph{initiator} only if the
initially set goal is met; otherwise the money is returned to
supporters.
In other words, when the deadline is passed, if the goal is met, only
the initiator is involved in the communication, if not, only the
supporters are.

More technically, our development builds on
and extends \textsf{STScript} \cite{stscript}.
We target server-centric protocols
(based on the WebSocket standard \cite{websocket}),
where one role is chosen as privileged,
the \emph{server}. The generated APIs are compatible
with the Node.js runtime for server-side endpoints
and the React.js framework for browser-side endpoints.
The  \textsf{STScript} toolchain in \cite{stscript} is based on
the multiparty session type theory,
where there is no privileged role; hence which role is the server
has to be declared by the user. The same holds for our
development based on c-automata.
We have discussed in the previous section
how the Scribble protocol in input is translated into
a c-automaton and, once well-formedness checks are performed, projected
onto a CFSM for each participant. This CFSM is passed to the
code-generation component of our
toolchain (part \textcolor{pine}{\bf (II)} of \cref{fig:tool}),
together with the role in input and the information about whether it is
the server role or not.

\begin{figure}\centering
\includegraphics[width=.8\textwidth]{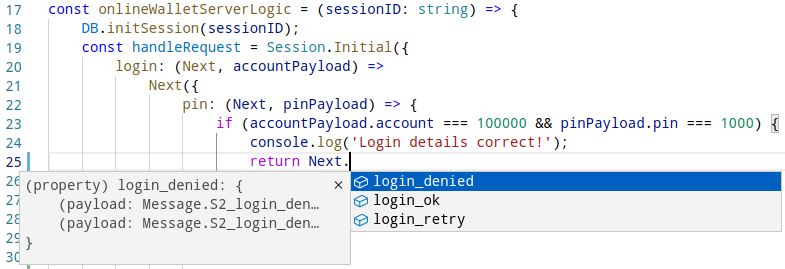}
\caption{Implementation with the API for \ptp[wallet]{} in Visual Studio Code}
\label{fig:autocomplete}
\end{figure}

\cref{fig:autocomplete} shows an example of the usage of the
generated API, when implementing the participant \ptp[wallet]{}
in Visual Studio Code (\url{https://code.visualstudio.com/}).
The autocomplete function of the editor
offers the developer appropriate options, so that
the implementation of the login choice abides
by the global discipline of the OnlineWallet protocol.

From an engineering point of view, for developing the part
\textcolor{pine}{\bf (I)} of the toolchain (\cref{fig:tool}) we have
adapted to our theory of c-automata, the codebase of
$\nu$\textsf{Scr}: a recent implementation of Scribble that offers a
toolchain
for \quo{language-independent code generation}~\cite{nuscr}.
However, $\nu$\textsf{Scr} itself does not provide direct support for TypeScript.
Hence, the development of part \textcolor{pine}{\bf (II)} in
\cref{fig:tool} integrates the $\nu$\textsf{Scr} codebase with
\textsf{STScript}.
This is is a Scribble extension---also based on multiparty session
types, but relying on the \textsf{ScribbleJava} implementation
\url{http://www.scribble.org}.  Building the API-generation of \tool{}
on top of the one of \textsf{STScript} has been a conventient choice:
\textsf{STScript} targets distributed web development directly and
offers a full implementation for generating TypeScript APIs from
$\nu$\textsf{Scr}-projected CFSMs.

The result of our development is \tool{}, of which
we list the distinctive features.
\begin{itemize}
\item \emph{Scope.} \tool{} specifically targets TypeScript
and enables safe distributed web development.
\item\emph{Input.} The user specifies the global protocol in the Scribble
language and picks one of the communicating participants as the server.
\item \emph{Correctness.} \tool{} relies on the flexible theory of
c-automata: the protocol in input is translated into a
c-automaton, which, if well-formed,
is then projected onto CFSMs.
\item \emph{APIs Generation.} From each CFSM, \tool{} generates the
TypeScript API for the respective role.
\item \emph{Safe Endpoint Implementation.} The distributed implementation
of the participants, using the generated APIs, is guaranteed to
be deadlock and lock free by the underlying theory.
\end{itemize}
In our first implementation of \tool{} (\url{https://github.com/Tooni/CAScript-Artifact}), we provide
three simple examples: an “adder” (the client sends to the server,
in a loop, two numbers to be added), a simple contract protocol, and
the OLW, which we have used as a running example, since
it carries and shows all the core features of our novel theory,
and, in particular, selective participation (see also the discussion
at the beginning of this section).
Furthermore, we have provided
a small tutorial in the README file of \tool{},
to guide the user through the implementation of their own protocols.








It is worth mentioning that a first extension of \tool{}
is under development (see
also \iftr\cref{subsec:toolext}\else\cite{glsty22TR}\fi):
current implementation,
based on previous work
\cite{ZFHNY2020},
allows
the generation of APIs for Scribble protocols with assertions. 
However, the necessary
extension of the function $\cat[]$ in \cref{sec:global}
to assertions, as well as subsequent
consistency checks, 
have not been implemented
yet. While conceptually straightforward, in 
practice 
one needs to integrate the \tool{} toolchain with tools manipulating
logical formulae such as SAT solvers in order to implement
the check for the consistency property (cf. \cref{def:wa}).

To conclude, we have developed the first version of Scribble
based on \emph{choreography automata}. It improves on
the flexibility of traditional implementations of
multiparty session types, by
accomodating for \emph{selective participation}, and it
integrates previous developments with our new theory:
the $\nu$\textsf{Scr} toolchain with
the TypeScript support provided by \textsf{STScript}.
On the one hand, our toolchain enables verified communication for
web development with selective participation, on the other hand it
paves the road to interesting extensions,
e.g., fully capturing the asynchronous semantics of websockets
(see \cref{sec:conc}), or
supporting assertions and the design-by-contract approach,
as discussed above.







\section{Related Work}\label{sec:related}
%
Conditions similar to our well-branchedness and well-sequencedness arise naturally in investigations on choreographies and their
realisability.
Uniqueness of choice selector is commonly imposed syntactically (as in
\cref{sec:global}) in several multiparty session types (MPSTs) formalisms
(e.g., \cite{HYC08,bhty10,cdyp16,SD19,ZFHNY2020}) and
also adopted in global graphs \cite{dy12,gt18}, and in
choreography languages in general (cf. the notion of \emph{dominant
  role} in \cite{QiuZCY07}).
Also, notions close to well-sequencedness occur quite
naturally in \quo{well-behaved} choreographies (e.g., the
notion of {\em well-informedness} of \cite{BultanF08} in collaboration diagrams).
A distinguishing element of our notion of well-branchedness is that we admit protocols where
disjoint groups of participants may concurrently engage in a choice.
This generalises (and corrects) the notion of well-branchedness
in~\cite{BarbaneraLT20} and, to the best of our knowledge, is not supported
in any other choreographic framework.

Global graphs~\cite{dy12,gt19,gt18,lty15} are another model of global
specifications.
We refer the reader to~\cite{BarbaneraLT20} for a comparison between
c-automata and global graphs.

The first work advocating a design-by-contract
framework for MPSTs is~\cite{bhty10}.
Asserted c-automata have been strongly inspired by it.
In particular, our notion of consistency (cf. \cref{def:wa}) can be
seen as a generalisation of \emph{well-assertedness} in~\cite{bhty10}.
More recently, ideas similar to the one in~\cite{bhty10} have been
developed in~\cite{ZFHNY2020}, where refined MPSTs
have been proposed.
The results of these papers are in the vein of guaranteeing properties
of programs by a behavioural type system ensuring
communication soundness in presence of data dependencies.

Besides the added flexibility of c-automata with respect to structured
formalisms discussed in the Introduction, ac-automata do not suffer from the
constraints imposed on global types in~\cite{bhty10,ZFHNY2020}.
More precisely, interactions guarding choices in~\cite{bhty10,ZFHNY2020}
syntactically restrict to a unique partner of
the \emph{selector} (i.e., the participant choosing the branch to follow).
On the contrary, (asserted) c-automata do not have such restriction.
For instance,
\[\begin{tikzpicture}[node distance=.3cm and 4cm]
	 \tikzstyle{every state}=[cnode]
	 \tikzstyle{every edge}=[carrow]
	 \node[state, initial, initial text = {}] (q0) {$q_0$};
	 \node[state, above right = of q0,yshift=-.25cm] (q1) {$q_1$};
	 \node[state, below right = of q0,yshift=.25cm] (q2) {$q_2$};
	 \node[state, right = 8cm of q0] (q3) {$q_3$};
	 \path (q0) edge node[above] {$\gint$} (q1);
	 \path (q1) edge node[above] {$\gint[q][p][n]$} (q3);
	 \path (q0) edge node[below] {$\gint[q][p][n]$} (q2);
	 \path (q2) edge node[below] {$\gint$} (q3);
  \end{tikzpicture}
\]
is a well-branched c-automaton which would be ruled out by all the
choreography models based on global types we are aware of.
%
%
%
Both~\cite{bhty10,ZFHNY2020} rely on a merge operator to guarantee
well-formedness (and projectability) of global types.
This is an obstacle for selective participation which our notion of
well-branchedness (cf. \cref{def:wb}) overcomes.
We also note that our notion of knowledge is more general than
the one in~\cite{bhty10}.
In fact, as observed in~\cite{ZFHNY2020}, the notion of history
sensitivity in~\cite{bhty10} does not allow a participant to
know variables fixed in interactions it is not involved in.
Like for  refined MPSTs, asserted c-automata
do not have this limitation and can in fact deal with protocols
like the one in Example 4.1 in~\cite{ZFHNY2020}.

Our theoretical work sees its first application
in the development of \tool{}, a toolchain
for communication-safe web development.
\tool{} takes the popular \emph{top-down approach},
following the original methodology of
MPSTs \cite{HYC08}.
The top-down approach enables \emph{correctness-by-construction}:
a developer provides a global description for the whole
protocol; by projecting the global protocol,
APIs are generated from local CFSMs,
which ensure the safe implementation of each participant.
MPSTs toolchains that take the top-down approach
have seen multiple implementations
and targeted a variety of
mainstream programming languages, such as (in no particular order)
Java \cite{HuYoshida2016,HuYoshida2017,mungo},
OCaml \cite{ImaietAl2020}, Go \cite{CastroetAl2019},
Scala \cite{ECOOP17,VieringetAl2021},
F\# \cite{NeykovaetAl2018},
F$^\star$ \cite{ZFHNY2020} and
Rust \cite{CYV2022,LNY2022}.
Like \tool{}, most of the above implementations
rely on the Scribble protocol description language
\cite{scribble-paper,FeatherweightScribble,nuscr}
(\url{http://www.scribble.org}, \url{https://nuscr.dev/}).
More relevant to this work is \cite{stscript},
in which the authors develop \textsf{STScript},
a full toolchain that applies such top-down
methodology and targets TypeScript for web development.

All the implementations above
are based on MPSTs;
they exploit the equivalence
between local types and CFSMs \cite{dy12,dy13}
to generate APIs for all the participants.
In \cite{HuYoshida2017}, \emph{explicit connections}, similar to our
selective participation, have been introduced
in Scribble, and more recently \cite{HFDG-ECOOP2021} uses
an analogous approach to implement adaptations
for an actor domain-specific language.
Both \cite{HuYoshida2017} and \cite{HFDG-ECOOP2021} need to add
explicit disconnections and connections to the
syntax of Scribble.
In \tool{} (\cref{sec:apply}), we have integrated the
theory of c-automata into the $\nu$\textsf{Scr} toolchain \cite{nuscr},
to allow for more flexible protocols, where participants
may appear only in selected branches after a choice, with no need to
change the Scribble syntax.

\section{Conclusion and Future Work}\label{sec:conc}
We have presented a flexible framework to describe protocols in a
setting of c-automata combining selective participation to branches of
choices and assertions supporting design-by-contract.
This allows us to model non trivial examples such as the OnLineWallet,
and ensures faithful realisability.
In fact, we exploited the flexibility of c-automata to generalise
well-branchedness (so to account for selective participation) and to
transfer the DbC approach~\cite{bhty10} (so to account for
\emph{data-aware} protocols).
Remarkably, the fact that c-automata are finite-state models does not
allow us to fully capture Scribble.
Nonetheless, a semi-decidable approach has been considered
(cf. \cref{sec:global}) which becomes effective when restricting to
protocols without interplay between consecutive independent
interactions and recursion.
    More precisely, it should not be possible to split a recursive protocol into groups of interactions with disjoint participants.
	 This restriction mildly affects applicability: indeed, to
	 faithfully implement such specifications one would need
	 infinite-state systems of CFSMs, while ours are finite-state.
%
Also, a clear advantage of our approach is that we can verify more
general conditions for Scribble specifications that can be faithfully
mapped on c-automata.

We implemented our theory by allowing Scribble protocols to be
translated into c-automata, checked for well-formedness, and finally
used to derive APIs for TypeScript programming.
The flexibility of c-automata has been instrumental to capture
Scribble~\cite{scribble-paper,FeatherweightScribble,nuscr}
specifications.
Scribble notation (and semantics) may be not easy to grasp for
practitioners as it involves a non-trivial amount of technicalities.
%
Hence, defining and understanding well-formedness conditions on
Scribble could not be straightforward.

Our framework can be immediately used in practice in interesting
examples:
the design of a variety of existing web services (e.g., for
authentication or transactions) include selective participation; with
the OLW implementation, we witness how protocols carrying this feature
can be specified in \tool{} (which from these generates APIs for
implementations).
Nonetheless, we envisage some extensions (see \cref{subsec:api-gen} and
\iftr \cref{subsec:toolext} \else \cite{glsty22TR} \fi for details).
%

Our focus is on selective participation and design-by-contract.
Hence, for simplicity, we consider synchronous semantics.
\tool{} builds instead on an asynchronous implementation of Scribble
\cite{stscript}, which
makes our results applicable only to protocols in which
asynchronous executions do not break the causal relations imposed by
the synchronous semantics so that choices are affected.  This is the
case for the case studies in the artifact, including our running
example OLW. The discrepancy disappears if a synchronous transport
layer (e.g., http) replaces web
sockets. 
To increase the applicability of \tool{}---and also because of its
theoretical interest, we plan
to extend the
results to cover an asynchronous communication model based on
queues.
While the general structure of the theory remains the same,
well-branchedness needs to be updated since send and receive actions
would not be symmetric anymore.
E.g., a participant that only occurs in one branch of a choice, thanks
to selective participation, needs to interact with a fully-aware
participant by performing a receive, while right now it can also
interact through a send action.
%
  We conjecture that the extension to asynchronous semantics does not
  affect the treatment of DbC in ac-automata.
  In fact, assertions are guaranteed by the sender and relied upon by
  the receiver (hence, the nature of communication is orthogonal to
  the flow of data).

  Our methodology follows the top-down
  software development approach of choreographies
  (cf. \cref{sec:intro} and \cref{sec:related}).  An interesting
  direction for future work is to develop an analysis of
  existing APIs; for instance, by extracting an abstract
  representation of the API, its conformance could be checked against
  a projection of the global specification.  Such design would improve
  on the applicability of our theory, for analysing and reusing
  existing developments.




\bibliography{biblio}

\iftr
\appendix
\section{Extending the Toolchain to Design-by-Contract}\label{subsec:toolext}


In \cref{sec:achor} we have shown how our c-automata theory
can be endowed with assertions, thus supporting design-by-contract.

We have started engineering an extension of
\tool{}, which combines selective participations and design-by-contract.
In \cite{ZFHNY2020}, the authors extend
the Scribble language with assertions (\emph{refinements});
we integrate a similar approach in our specification language.
As an example, we show here a description of
the OLW protocol with annotated assertions (\cref{fig:ow-assertions}).
Assertions are used, for example, (line 5) to enforce that the integer
$\amsg[account]$ is a six-digit number,
or to allow for a finite number of login attempts: the integer
$\amsg[try]$ is initiated (line 4) as $0$; then
incremented at each following attempt (lines 19 and 20); finally,
when $\amsg[try]=3$ (line 22), the login is
denied. Such prototype extension has allowed us to
combine design-by-contract and selective participation
in Scribble protocols for selected examples, and to generate TypeScript APIs
for multiple participants, with assertions to guide
the developer's implementation process. 
A future extension of the function $\cat[]$ in \cref{sec:global},
to assertions, will allow us to have a more comprehensive version of \tool{}.

\begin{figure}
\begin{lstlisting}[language=Scribble]
(*# CheckDirectedChoiceDisabled, RefinementTypes, ValidateRefinementProgress, ValidateRefinementSatisfiability #*)

global protocol OnlineWallet(role Wallet, role Customer, role Vendor) {
  rec AuthLoop [try<Customer, Wallet>: int = 0] {
    login(account: int{account >= 100000 && account < 1000000}) from Customer to Wallet;
    pin(pin: int{pin >= 1000 && pin < 10000}) from Customer to Wallet;
    choice at Wallet {
      login_ok() from Wallet to Customer;
      login_ok() from Wallet to Vendor;
      request(bill: int{bill > 0}) from Vendor to Customer;
      choice at Customer {
        authorise() from Customer to Wallet;
        pay(payment: int{payment = bill}) from Customer to Vendor;
      } or {
        reject() from Customer to Wallet;
        reject() from Customer to Vendor;
      }
    } or {
      login_retry(msg: string{try < 3}) from Wallet to Customer;
      continue AuthLoop [try + 1];
    } or {
      login_denied(msg: string{try = 3}) from Wallet to Customer;
    }
  }
}\end{lstlisting}\vspace*{-10mm}
\caption{Scribble Protocol for the OLW with Assertions}
\label{fig:ow-assertions}
\end{figure}

\fi

\end{document}
